\newcommand{\e}{\text{e}}
\newcommand{\tr}{\text{tr}}
\newcommand{\Var}{\text{Var}}
\newcommand{\ignore}[1]{}
\DeclareMathOperator*{\argmin}{arg\,min}
\begin{document}

\title{Stochastic diagonal estimation: probabilistic bounds and an improved algorithm%\thanks{Grants or other notes
%about the article that should go on the front page should be
%placed here. General acknowledgments should be placed at the end of the article.}
}
%\subtitle{Do you have a subtitle?\\ If so, write it here}

\titlerunning{Stochastic diagonal estimation}        % if too long for running head

\author{
Robert A. Baston 
 \and
Yuji Nakatsukasa%etc.
}

%\authorrunning{Short form of author list} % if too long for running head

\institute{
R. A. Baston and Y. Nakatsukasa
 \at
	Mathematical Institute\\
	University of Oxford\\
              Tel.: +44-1865-615319\\
%              \email{fauthor@example.com}           %  \\
	\email{robert.baston@maths.ox.ac.uk}, 
	\email{nakatsukasa@maths.ox.ac.uk} 
%             \emph{Present address:} of F. Author  %  if needed
%           \and           S. Author \at              second address
}

\date{Received: date / Accepted: date}
% The correct dates will be entered by the editor

\maketitle

\begin{abstract}
We study the problem of estimating the diagonal of an implicitly given matrix $A$. For such a matrix we have access to an oracle that allows us to evaluate the matrix vector product $A\bm{v}$. For random variable $\bm{v}$ drawn from an appropriate distribution, this may be used to return an estimate of the diagonal of the matrix $A$. 
Whilst results exist for probabilistic guarantees relating to the error of estimates of the trace of $A$, no such results have yet been derived for the diagonal. We make two contributions in this regard. We analyse the number of queries $s$ required to guarantee that with probability at least $1-\delta$ the estimates of the relative error of the diagonal entries is at most $\varepsilon$. We extend this analysis to the 2-norm of the difference between the estimate and the diagonal of $A$. We prove, discuss and experiment with bounds on the number of queries $s$ required to guarantee a probabilistic bound on the estimates of the diagonal by employing Rademacher and Gaussian random variables. Two sufficient upper bounds on the minimum number of query vectors are proved, extending the work of Avron and Toledo 
[JACM 58(2)8, 2011], and later work of Roosta-Khorasani and Ascher [FoCM 15, 1187-1212, 2015]. We first prove sufficient bounds for diagonal estimation with Gaussian vectors, and then similarly with Rademacher vectors. We find that, generally, there is little difference between the two, with convergence going as $O(\log(1/\delta)/\varepsilon^2)$ for individual diagonal elements. However for small $s$, we find that the Rademacher estimator is superior. These results allow us to then extend the ideas of Meyer, Musco, Musco and Woodruff [SOSA, 142-155, 2021], suggesting algorithm Diag++, to speed up the convergence of diagonal estimation from $O(1/\varepsilon^2)$ to $O(1/\varepsilon)$ and make it robust to the spectrum of any positive semi-definite matrix $A$. We analyse our algorithm to demonstrate state-of-the-art convergence.
%Insert your abstract here. Include keywords, PACS and mathematical subject classification numbers as needed.
\keywords{Monte Carlo \and Stochastic estimation \and Diagonal \and Matrix \and  Trace estimation}
%\keywords{Trace estimation \and Second keyword \and More}
% \PACS{PACS code1 \and PACS code2 \and more}
\subclass{65C05 \and 68W20 \and  60E15}
\end{abstract}

\section{Introduction} \label{sec:Introduction}
An ever-present problem in numerical linear algebra relates to estimating the trace of a matrix $A\in\mathbb{R}^{n\times n}$ that may only be accessed through its operation on a given vector. Such a calculation is required in a wide variety of applications, including: matrix norm estimation \cite{han2017approximating,musco2017spectrum}, image restoration \cite{golub1979generalized,golub1997generalized}, density functional theory \cite{bekas2007estimator} and log-determinant approximation \cite{boutsidis2017randomized,han2015large}. The concepts surrounding diagonal estimation are younger than those of trace estimation, but have recently come of age in multiple applications. The theory however, is lagging. Originally used for electronic structure calculations  \cite{bekas2007estimator,goedecker1999linear,goedecker1995tight}, the ideas developed for diagonal estimation have recently been used by Yao et.\ al 
\cite{yao2020adahessian} in a novel, state-of-the-art machine learning algorithm. They estimate the diagonal of a Hessian matrix to ``precondition" descent vectors, for Newton-style descent, in the training of convolution neural networks. Their method produces exceptional accuracy across a wide array of problems. By understanding the properties of diagonal estimation more fully we may uncover new algorithms or methods to improve even these results.

 Such estimations involve a matrix $A$ which is a ``black box". We are given an oracle that may evaluate $A\bm{v}$ for any $\bm{v} \in \mathbb{R}^n$, and we seek to return the chosen property of $A$ using as few query vectors $\bm{v}$ as possible: with $s$ queries, we desire $s \ll n$. The first exploration into this area was by Hutchinson \cite{hutchinson1989stochastic}, which resulted in the standard approach to estimate the trace of a matrix:
\begin{equation}
    \tr(A) \approx \tr_R^s(A):= \frac{1}{s}\sum_{k=1}^s \bm{v}_k^T A\bm{v}_k.
\end{equation}
The $\bm{v}_k \in \mathbb{R}^n$ are vectors with entries given as independent, identically distributed Rademacher variables: the $i^{th}$ entry of $\bm{v}_k$ takes the value $-1$ or $1$ with equal probability for all $i$, and independence across both $i$ and $k$. The subscript $R$ signifies the distribution of the entries of $\bm{v}_k$ to be Rademacher.

 Later work expanded on this idea, using Gaussian vectors distributed as $\bm{v}_k \sim N(0, I_n)$, but until the paper of Avron and Toledo \cite{avron2011randomized}, analysis and comparison of these estimators was limited to explorations of their variance. It is well known \cite{hutchinson1989stochastic} that the variance of the Hutchinson method is the lowest when compared to other standard methods, thus is used most extensively in applications. However, whilst intuition suggests this is a good choice, it is not completely rigorous. The techniques of \cite{avron2011randomized} address the error bounds in probability directly.
\subsection*{The $(\varepsilon, \delta)$ approach}
\noindent In \cite{avron2011randomized} the authors propose so-called $(\varepsilon,\delta)$ bounds. A lower bound on the number of queries, $s$, sufficient to achieve a relative error of the estimated trace, is obtained. This is a probabilistic guarantee. Formally, for a pair of $(\varepsilon,\delta)$ values, a sufficient number of queries, $s$, is found, such that, for positive semi-definite matrix $A$
\begin{equation} \label{Trace Estimator Intro}
   \Pr\Big(|\tr_{\mathbb{P}}^s(A) - \tr(A)|\leq\varepsilon\tr(A)\Big)\geq 1-\delta.
\end{equation}
The term $\tr_{\mathbb{P}}^s(A)$ is the $s$-query estimator of the trace, using query vectors drawn from the distribution $\mathbb{P}$. Supposing $s$ random vectors have been used to query the matrix, we have, with high probability, a relative error that is at most $\varepsilon$, where $\varepsilon$ is dependent on $s$. The authors conclude that analysis of the variance of the estimators is not the most informative approach for selecting which estimator is likely to return the best results for the fewest queries. This is a key insight.\\

\subsection*{The diagonal estimator.}

\noindent Bekas et.\ al \cite{bekas2007estimator} extended Hutchinson's original method to diagonal estimation by considering the Hadamard product of $\bm{v}_k$ and $A\bm{v}_k$. The method is different from trace estimation in two regards. Firstly, in trace estimation, we sum the element-wise multiplication terms to get the inner product of $\bm{v}_k$ and $A\bm{v}_k$. This step is removed for diagonal estimation. Secondly, the estimation is not necessarily scaled by the number of queries, but by the element-size of the query vectors. This gives
\begin{equation} \label{Diagonal Estimator Intro}
    \bm{D}^s = \Big[ \sum_{k=1}^s \bm{v}_k \odot A \bm{v}_k \Big] \oslash \Big[\sum_{k=1}^s \bm{v}_k \odot \bm{v}_k \Big].
\end{equation}
The method returns $\bm{D}^s$, an estimation of diag($A$), the diagonal of $A$ reshaped as an $n$-dimensional vector. The symbol $\odot$ represents component-wise multiplication of vectors and $\oslash$ represents component-wise division. In \cite{bekas2007estimator}, the authors focus on employing deterministic vectors for $\bm{v}_k$, no further treatment of the stochastic estimators is given, except in numerical experiments.

\subsection*{Contributions}
\noindent In this paper, we proceed to consider the same aims as those of Avron and Toledo \cite{avron2011randomized} and those of Bekas et.\ al \cite{bekas2007estimator}: applying the ideas of \cite{avron2011randomized} to analyse the stochastic diagonal estimator suggested in \cite{bekas2007estimator} and shown in equation \eqref{Diagonal Estimator Intro}. No paper to date has presented a theoretical bound on the number of queries $s$ required to achieve an $\varepsilon$ approximation of the diagonal of a matrix. Specifically, we prove upper bounds on the minimum number of query vectors required for the relative (and ``row-dependent") error of the entries of the diagonal estimate to be at most $\varepsilon$, with probability $1-\delta$. We extend this result to consider the error across all diagonal entries, and the associated number of queries required for a similar error bound. We consider estimators whose query vectors $\bm{v}_k$ obey different distributions. This analysis allows us to then suggest and analyse a novel algorithm for speed-up of diagonal estimation. Extending the work in \cite{meyer2021hutch++} for trace estimation, we propose an algorithm for diagonal estimation that shifts from the classic Monte-Carlo regime of $O(1/\varepsilon^2)$ to $O(1/\varepsilon)$, a fundamental, state-of-the-art improvement for diagonal estimation.
Much of the paper is based on the first author's MSc dissertation~\cite{baston2021thesis}. 

\subsection*{Outline of this work}
This paper is outlined as follows. In Section \ref{sec:Preliminaries} we outline our use of notation, provide a brief overview of previous work, and outline the relation of the Rademacher diagonal estimator to Hutchinson's trace estimator. In Section \ref{sec: Definitions and Estimators} we define what is meant by an $(\varepsilon, \delta)$-approximator in the context of diagonal estimation, and outline two types of stochastic diagonal estimators, Gaussian and Rademacher. In Section \ref{Comparing the Quality of estimators.} we state our results, proved in the following sections. We prove an upper bound on the minimum number of sufficient queries for an $(\varepsilon,\delta)$-approximator using the Gaussian diagonal estimator in Section \ref{Gaussian section}, and then for the Rademacher diagonal estimator in Section \ref{Rademacher section}. Section \ref{Positive-semi-definite diagonal estimation} manipulates and explores these results in an attempt to make them more useful to the practitioner. We perform numerical experiments illustrating the results in Section \ref{Numerical Experiments on the diagonal}. Where appropriate a touch of theory is also covered in this section. In Section \ref{Improved diagonal estimation} we use these previous results to adapt the Hutch++ algorithm from \cite{meyer2021hutch++} to suggest and analyse a similar algorithm, Diag++, that enables $O(1/\varepsilon^2)$ to $O(1/\varepsilon)$ speed-up of stochastic diagonal estimation. Section \ref{Conclusions} provides concluding remarks.

\section{Preliminaries} \label{sec:Preliminaries}
In this section we outline our use of notation, and provide a brief overview of previous work.

\subsection{Notation} \label{Notation}
For all $\bm{u}\in\mathbb{R}^n$,
$\|\bm{u}\|_2 = (\sum_{i=1}^n u_i^2)^{1/2}$
denotes the $\ell_2$ norm and likewise
$\|\bm{u}\|_1 = \sum_{i=1}^n |u_i|$
denotes the $\ell_1$ norm. For a general matrix $A\in\mathbb{R}^{m\times n}$,
$\|A\|_F = (\sum_{i=1}^m\sum_{j=1}^n A_{ij}^2)^{1/2}$
denotes the Frobenius norm of a matrix, where, of course, $A_{ij}$ is the element of $A$ from the $i^{th}$ row and the $j^{th}$ column. In the case of $A\in\mathbb{R}^{n\times n}$, 
$\tr(A) = \sum_{i=1}^n A_{ii} = \sum_{i=1}^n \lambda_{i}$
is trace of the matrix $A$, with $\lambda_i$ its eigenvalues. Sometimes in this paper we consider the matrix $A$ to be symmetric: $A=A^T$, or symmetric positive semi-definite (PSD), indicated by $A\succeq 0$. Such $A$ has an eigendecomposition of the form $A = V\Lambda V^T$
where $V\in \mathbb{R}^{n\times n}$ is the orthogonal matrix of eigenvectors of $A$ and $\Lambda$ is a real-valued diagonal matrix. We write 
$\bm{\lambda} = \text{diag}(\Lambda)$
to be a vector of length $n$ containing the eigenvalues of $A$ in descending order
$\lambda_1 \geq \lambda_2 \geq \hdots \geq \lambda_n.$
Where the differences between the eigenvalues are relatively small, we say a matrix has a ``flatter" spectrum. Where the differences between the eigenvalues are larger, we describe the spectrum as ``steep"\footnote{Whilst there is much more nuance to be found in the spectra of matrices, this description serves our purposes sufficiently, without over-complicating the details.}.
For $A\succeq 0$ we have that $\lambda_i \geq 0$ for all $i$. Where $A$ is diagonalisable we use the identity 
$\|\bm{\lambda}\|_2 \equiv \|A\|_F$
and where $A \succeq 0$ we write
$\|\bm{\lambda}\|_1 \equiv \tr(A)$
We denote
$A_r = \argmin_{B, \text{ rank}(B)=r}\|A-B\|_F$
as the optimal $r$-rank approximation to $A$. When the matrix $A$ is PSD, the eigenvalues of the matrix coincide with its singular values. Thus, by the Eckart-Young-Mirsky theorem \cite{eckart1936approximation}, for such $A$,
$A_r = V_r \Lambda_r V_r^T$
with $V_r \in \mathbb{R}^{n\times r}$ containing the first $r$ columns of $V$ and $\Lambda_r \in \mathbb{R}^{r\times r}$ being the top left sub-matrix of $\Lambda$. Finally we denote
$\bm{A}_d = \text{diag}(A)$
as reshaping of the diagonal elements of the matrix $A$ to a vector of length $n$. For this vector, we clearly have 
$\|\bm{A}_d\|_2 = (\sum_{i=1}^n A_{ii}^2)^{1/2}$
and in the case of $A\succeq 0$,
$\|\bm{A}_d\|_1 \equiv \tr(A)$, since $A_{ii}$ are necessarily non-negative for such $A$.

\subsection{Hutchinson's Method and Previous Work} \label{Hutchinson's Method and Related Work}
The classic Monte Carlo method for approximating the trace of an implicit matrix is due to Hutchinson \cite{hutchinson1989stochastic}, who states and proves the following lemma.
\begin{lemma} \label{Hutchinson's Original Lemma}
    Let $A$ be an $n\times n$ symmetric matrix. Let $\bm{v}$ be a random vector whose queries are i.i.d Rademacher random variables $(\Pr(v^i = \pm 1) = 1/2)$, then $\bm{v}^T A\bm{v}$ is an unbiased estimator of $\textnormal{tr}(A)$, that is
    \begin{equation} \label{Hutchinson's original expectation}
        \mathbb{E}\Big[\bm{v}^T A\bm{v}\Big] = \textnormal{tr}(A)
    \end{equation}
    and
    \begin{equation} \label{Hutchinson's original variance}
        \textnormal{Var}\Big(\bm{v}^T A\bm{v}\Big) = 2\Big(\|A\|_F^2 - \sum_{i=1}^n A_{ii}^2\Big) = 2\Big(\|A\|_F^2 - \|\bm{A}_d\|_2^2\Big). 
    \end{equation}
\end{lemma}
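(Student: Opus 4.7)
The plan is to expand the quadratic form $\bm{v}^T A \bm{v} = \sum_{i,j} v_i v_j A_{ij}$ and compute the first two moments directly by exploiting the fact that a Rademacher variable satisfies $v_i^2 = 1$ deterministically, while distinct entries are independent with mean zero.

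For the expectation, I would invoke linearity to obtain $\mathbb{E}[\bm{v}^T A \bm{v}] = \sum_{i,j} A_{ij}\,\mathbb{E}[v_i v_j]$, then note that $\mathbb{E}[v_i v_j] = \delta_{ij}$: when $i=j$ the square of a Rademacher is $1$, and when $i\ne j$ independence gives $\mathbb{E}[v_i]\mathbb{E}[v_j]=0$. Only the diagonal contributions survive, producing $\sum_i A_{ii} = \operatorname{tr}(A)$.

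For the variance I would use $\operatorname{Var}(\bm{v}^T A \bm{v}) = \mathbb{E}[(\bm{v}^T A \bm{v})^2] - \operatorname{tr}(A)^2$ and expand the square as a quadruple sum $\sum_{i,j,k,l} A_{ij}A_{kl}\,\mathbb{E}[v_i v_j v_k v_l]$. The key combinatorial step is to classify which index patterns give $\mathbb{E}[v_i v_j v_k v_l]\ne 0$: since each $v_i$ is $\pm 1$, the expectation is $1$ exactly when the multiset $\{i,j,k,l\}$ pairs up perfectly, i.e.\ either all four indices agree, or they split into two distinct matched pairs under one of the three pairings $(i{=}j,k{=}l)$, $(i{=}k,j{=}l)$, $(i{=}l,j{=}k)$; otherwise at least one index appears an odd number of times and independence forces a factor of $\mathbb{E}[v_i]=0$. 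Summing the four resulting contributions gives $\sum_i A_{ii}^2 + \sum_{i\ne k}A_{ii}A_{kk} + \sum_{i\ne j}A_{ij}^2 + \sum_{i\ne j}A_{ij}A_{ji}$, and here I would use the symmetry hypothesis $A_{ji}=A_{ij}$ to combine the last two sums into $2\sum_{i\ne j}A_{ij}^2$.

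Collecting terms, the first two sums telescope into $(\sum_i A_{ii})^2 = \operatorname{tr}(A)^2$, which cancels the $(\mathbb{E}[\bm{v}^T A \bm{v}])^2$ term. The remaining $2\sum_{i\ne j} A_{ij}^2 = 2(\|A\|_F^2 - \sum_i A_{ii}^2) = 2(\|A\|_F^2 - \|\bm{A}_d\|_2^2)$ delivers the claimed variance. The only mildly delicate step is the combinatorial bookkeeping of the four-index expectations, and in particular tracking the three distinct pairings correctly so as not to double-count the all-equal case; symmetry of $A$ is needed exactly at the point where the two ``cross'' pairings are merged.
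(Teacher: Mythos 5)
Your proof is correct: the expectation computation, the classification of the nonvanishing fourth moments $\mathbb{E}[v_iv_jv_kv_l]$ (all four indices equal, or two distinct matched pairs under one of the three pairings), and the point at which symmetry of $A$ is invoked are all handled properly, including the care needed not to double-count the all-equal case. The paper formally defers the proof of this lemma to Hutchinson's original article, but it effectively re-derives the result in its Section 2.4 by a different bookkeeping: it writes $\bm{v}^TA\bm{v}=\sum_{i=1}^n D_i^1$ with $D_i^1=v^i(A\bm{v})^i$, computes $\Var(D_i^1)=\sum_{j\ne i}A_{ij}^2$ and $\text{Cov}(D_i^1,D_j^1)=A_{ij}A_{ji}$ separately, and then sums variances and covariances. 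The two routes evaluate exactly the same fourth moments, just grouped differently: the paper's component-wise decomposition has the advantage of simultaneously producing the per-element variance $\|\bm{A}_i\|_2^2-A_{ii}^2$ that it needs later for the diagonal estimators, whereas your direct quadruple-sum expansion is the more self-contained classical argument and makes explicit both the combinatorics of the three pairings and the precise step where the hypothesis $A=A^T$ is used (merging the two cross pairings into $2\sum_{i\ne j}A_{ij}^2$).
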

\noindent Considering the variance term, we can easily see that it captures how much of the matrix's ``energy" (i.e. the Frobenius norm) is located in the off-diagonal elements. 
Unfortunately, in the general case, this variance may be arbitrarily large. More vexatious, is the fact that Lemma \ref{Hutchinson's Original Lemma} does not provide the user with a rigorous bound on the number of matrix-vector queries for a given accuracy. To address this issue, Avron and Toledo \cite{avron2011randomized} introduced the following definition for $A \succeq 0$ (Defintion 4.1 of \cite{avron2011randomized}):
\begin{definition}
    Let $A$ be a symmetric positive semi-definite matrix. A randomised trace estimator: $\textnormal{tr}_{\mathbb{P}}^s(A)$, is an \textnormal{$(\varepsilon, \delta)$-approximator} of $\textnormal{tr}(A)$ if
    \begin{equation}
       \Pr\Big(|\textnormal{tr}_{\mathbb{P}}^s(A) - \textnormal{tr}(A)| \leq \varepsilon\textnormal{tr}(A)\Big) \geq 1 - \delta.
    \end{equation}
\end{definition}

\noindent This definition provides a better way to be able to bound the requisite number of queries. It guarantees that the probability of the relative error exceeding $\varepsilon$ is at most $\delta$.
For example, when estimating the trace according to Hutchinson's method, it may be shown \cite{roosta2015improved} that it is sufficient to take the number of queries, $s$ as 
\begin{equation} \label{Roosta Hutchinson Bound}
    s > \frac{6\ln(2/\delta)}{\varepsilon^2}
\end{equation}
for any PSD matrix $A$ (see Theorem 1, Section 2, \cite{roosta2015improved}), whereas if Gaussian vectors are used, the bound on $s$ becomes
\begin{equation} \label{Roosta Gaussian Bound}
    s > \frac{8\ln(2/\delta)}{\varepsilon^2}
\end{equation}
\noindent (Theorem 3, Section 3, \cite{roosta2015improved}). 
We note that the recent paper~\cite{cortinovis2021randomized} gives query bounds for indefinite matrices.

Both equations \eqref{Roosta Hutchinson Bound} and \eqref{Roosta Gaussian Bound} immediately put us in the Monte Carlo regime, with error converging as $O\big(1/\sqrt{s}\big)$ for $s$ queries. Let us step back and remark on the fact that the trace of a matrix is invariant under orthogonal similarity transforms of the matrix. As such, we can intuit that the loose, matrix-independent bounds in equations \eqref{Roosta Hutchinson Bound} and \eqref{Roosta Gaussian Bound} are naturally to be expected. Can we say the same of estimating the diagonal? We cannot be sure that the previous intuition applies, as the diagonal of a matrix is not (generally) invariant under orthogonal similarity transformations.

\subsection{Approximating the Diagonal of a Matrix} \label{Approximating the diagonal of a matrix}

\noindent The ideas of Hutchinson are readily extended to approximate the full diagonal, as was first introduced by Bekas et.\ al \cite{bekas2007estimator}, and which we outline here. Take a sequence of $s$ vectors $\bm{v}_1, ..., \bm{v}_s$, and suppose that their entries follow some suitable distribution, being i.i.d with expectation\footnote{Because we scale the estimation as in equation \eqref{First diagonal estimator} it may be shown that the distribution may have arbitrary variance, though $\mathbb{E}[X^2] = 1$ is used for simplicity throughout.} zero. Then the diagonal approximation of matrix $A$, expressed as an $n$-dimensional vector $\bm{D}^s$, may be taken as
\begin{equation}\label{First diagonal estimator}
    \bm{D}^s = \Bigg[ \sum_{k=1}^s \bm{v}_k \odot A \bm{v}_k \Bigg] \oslash \Bigg[\sum_{k=1}^s \bm{v}_k \odot \bm{v}_k \Bigg]
\end{equation}
where $\odot$ is the component-wise multiplication of the vector (a Hadamard product), and similarly, $\oslash$ the component-wise division of the vectors. Let us examine a single component of $\bm{D}^s$ to gain more insight. Let $v_k^i$ denote the $i^{th}$ element of the $k^{th}$ vector, we have
\begin{equation*} 
    \begin{split}
        D_i^s &= \Bigg[\sum_{k=1}^s v_k^i \sum_{j=1}^n A_{i j} v_k^j\Bigg]\Bigg/\Bigg[\sum_{k=1}^s{(v_k^i)^2}\Bigg]\\
        &= A_{ii} +\Bigg[\sum_{k=1}^s v_k^i \sum_{j \neq i}^n A_{i j} v_k^j\Bigg]\Bigg/\Bigg[\sum_{k=1}^s{(v_k^i)^2}\Bigg]\\
        &= A_{ii} + \mathlarger{\mathlarger{\sum_{j \neq i}^n}} A_{i j} \Bigg(\frac{\sum_{k=1}^s v_k^i v_k^j}{\sum_{k=1}^s{(v_k^i)^2}}\Bigg).
    \end{split}
\end{equation*}
In expectation, the coefficients of the $A_{ij}$ entries shall be zero. This is true provided that the components of the $\bm{v}_k$ vectors, $v_k^i$,  themselves have an expectation of zero, regardless of the form of the denominator\footnote{Easily shown by the chain rule of expectation.}.

\subsection{Relation to Hutchinson's method}
If the vectors $\bm{v}_k$ have i.i.d Rademacher entries, then it is clear to see that the sum of the entries of the vector $\bm{D}^s$ is nothing but the Hutchinson trace estimate. %\bb{YN: this seems rather obvious to me. I'm ok with keepint it here but a referee might ask us to remove or move to appendix.}
To show this rigorously we examine the expectation and variance of a single component $D_i^s$ of $\bm{D}^s$, when $v_k^i$ are independent Rademacher variables. We note that this is straightforward to find, and so the results are included only for completeness.
\begin{equation*}
\centering
    \begin{split}
        \mathbb{E}[D_i^s] &= \mathbb{E}\Bigg[A_{ii} + \mathlarger{\sum_{j \neq i}^n} A_{ij} \Bigg(\frac{\sum_{k=1}^s v_k^i v_k^j}{\sum_{k=1}^s{(v_k^i)^2}}\Bigg)\Bigg] \\
        &= A_{ii} + \mathlarger{\sum_{j \neq i}^n} A_{ij} \mathbb{E}\Bigg[\frac{\sum_{k=1}^s v_k^i v_k^j}{\sum_{k=1}^s{(v_k^i)^2}}\Bigg] \hspace{1.25cm} \text{(by linearity)} \\
        &= A_{ii} + \mathlarger{\sum_{j \neq i}^n} A_{ij} \mathbb{E}\Bigg[\frac{\sum_{k=1}^s v_k^i v_k^j}{s}\Bigg]\hspace{1.32cm}\text{(since $(v_k^i)^2 = 1\; \forall \;i,k$)}\\
        &= A_{ii} + \mathlarger{\sum_{j \neq i}^n} A_{ij} \frac{\sum_{k=1}^s \mathbb{E}[v_k^i]\mathbb{E}[v_k^j]}{s} \hspace{1.15cm} \text{(by independence, $j\neq i$)} \\
        &= A_{ii}
    \end{split}
\end{equation*}
and by the linearity of expectation it is easy to see that
\begin{equation}
    \mathbb{E}\Big[\sum_{i=1}^n D^s_i\Big] = \sum_{i=1}^n \mathbb{E}\Big[D^s_i\Big]= \tr(A).
\end{equation}
Specifically for a single query $s=1$ we are returned equation \eqref{Hutchinson's original expectation}, the first result of Lemma \ref{Hutchinson's Original Lemma}. We now find the variance; most easily done by considering a single estimate, $s=1$. Thus
\begin{equation} \label{Single element variance Rademacher}
    \begin{split}
        \Var(D_i^{1}) &= \Var(D_i^{1} - A_{ii})\\
        &= \mathbb{E}\Bigg[\Big(\sum_{j\neq i}^n A_{ij}v^i v^j\Big)^2\Bigg] - \mathbb{E}\Bigg[\sum_{j\neq i}^n A_{ij}v^i v^j\Bigg]^2\\
        & = \sum_{j\neq i}^n\sum_{l\neq i}^n A_{ij}A_{il}\mathbb{E}\Big[(v^i)^2 v^j v^l\Big]\\
        &= \sum_{j\neq i}^n A_{ij}^2\\
        &= \Big(\|\bm{A}_i\|_2^2 - A_{ii}^2\Big)
    \end{split}
\end{equation}
where $\bm{A}_i$ is the $i^{th}$ row\footnote{$\bm{A}_i$ possessing a 2-norm is intuitive, but formally we are really considering $\bm{A}_i^T$} of $A$. When considering the variance of the sum of the entries of $\bm{D}^s$, we must also find the covariance of the terms between two elements. For $D_i^1$ and $D_j^1$, where $i \neq j$, and with $s = 1$, we have
\begin{equation*}
    \begin{split}
        \text{Cov}(D_i^1,D_j^1) &= \mathbb{E}\Big[(D_i^1 - A_{ii})(D_j^1 - A_{jj})\Big]\\
        &= \mathbb{E}\Big[\Big(\sum_{k\neq i}^n A_{ik}v^i v^k\Big)\Big(\sum_{l\neq j}^n A_{jl}v^j v^l\Big)\Big]\\
        &= \sum_{k \neq i}^n\sum_{l \neq j}^n A_{ik}A_{jl}\mathbb{E}\Big[v^i v^j v^k v^l\Big]
    \end{split}
\end{equation*}
and since $k \neq i$ and $l \neq j$, along with the fact we are examining $i \neq j$, only one term is returned: when $k = j$ and $l = i$. All other terms vanish as a result of independence and zero expectation. This leaves
\begin{equation*}
\centering
    \begin{split}
        \text{Cov}(D_i^1,D_j^1) &= A_{ij}A_{ji}\mathbb{E}\Big[(v^i)^2(v^j)^2\Big]\\
        &= A_{ij}A_{ji}\mathbb{E}\Big[(v^i)^2\Big]\mathbb{E}\Big[(v^j)^2\Big]\\
        &= A_{ij}A_{ji}.
    \end{split}
\end{equation*}
Of course, if $A$ is symmetric, this is $A_{ij}^2$. Thus, for such symmetric $A$, we have
\begin{equation}
    \begin{split}
        \Var\Big(\sum_{i = 1}^n D_i\Big) &= \sum_{i=1}^n\Var(D_i) + \sum_i^n\sum_{j\neq i}^n\text{Cov}(D_i, D_j)\\
        &= \sum_{i=1}^n(\|\bm{A}_i\|_2^2 - A_{ii}) + \sum_i^n\sum_{j \neq i}^n A_{ij}^2\\
        &= 2\Big(\|A\|_F^2 - \sum_{i = 1}^n A_{ii}^2\Big).
    \end{split}
\end{equation}
So we are returned equation \eqref{Hutchinson's original variance}, the second result of Lemma \ref{Hutchinson's Original Lemma} and the relationship of trace and diagonal estimation is clear.

 The dependence of the elements $D_i^s$ upon one another threatens to confound any further analysis. Thus, to circumvent this, we first examine individual elements of the diagonal estimator, $\bm{D}^s$. When treating all elements simultaneously, we employ general bounds and inequalities that, irrespective of dependence, may be applied. In some instances, this can result in looser bounds than might be necessary in practice. Therefore our results ought to be treated as a guide, rather than exact truths. We are left with two main questions to answer: 
\begin{enumerate}
        \item How many query vectors are needed to bound the error of the stochastic diagonal estimator in \eqref{First diagonal estimator}, and how does this change with distribution of vector entries?
    \item How does matrix structure affect estimation, and how might we improve worst case scenarios?
\end{enumerate}

\section{Definitions and Estimators} \label{sec: Definitions and Estimators}
\subsection{Definitions} \label{Definitions Section}
We state precisely what is meant by an $(\varepsilon,\delta)$-approximator in the context of diagonal estimation: for reasons which will become apparent, we introduce two such definitions. We also define what is meant by a diagonal estimator when using vectors of a given distribution.

\subsubsection{$(\varepsilon, \delta)$-approximators}

\begin{definition} \label{loose-eps-delta}
    The $i^{th}$ component, $D_i^s$, of a diagonal estimator is said to be a \textbf{row dependent $(\varepsilon, \delta)$-approximator}, if 
    \begin{equation*}
       \Pr\Bigg(|D_i^s - A_{ii}|^2 \leq \varepsilon^2\Big(\|\bm{A}_{i}\|_2^2 - A_{ii}^2\Big)\Bigg) \geq 1 - \delta
    \end{equation*}
    where $\bm{A}_i$ is the $i^{th}$ row of the matrix $A$, reshaped as a vector.
\end{definition}
\noindent The error bound in this definition may be tightened to be dependent only upon $|A_{ii}|$, giving Definition \ref{tight-eps-delta} for examination of the relative error:
\begin{definition} \label{tight-eps-delta}
    The $i^{th}$ component, $D_i^s$, of a diagonal estimator is said to be a \textbf{relative $(\varepsilon, \delta)$-approximator}, if 
    \begin{equation*}
       \Pr\Big(|D_i^s - A_{ii}| \leq \varepsilon |A_{ii}|\Big) \geq 1 - \delta
    \end{equation*}
\end{definition}
\noindent Definition \ref{tight-eps-delta} gives us a way to analyse the estimators by bounding the number of queries required to guarantee that the probability the relative error exceeds $\varepsilon$ is at most $\delta$. Definition \ref{loose-eps-delta} loosens this condition slightly and is useful for reasons which become shortly apparent.

\subsubsection{Two estimators} \label{Two estimators}
 Recall that all estimators follow the same simple pattern, as in equation \eqref{First diagonal estimator}: some random vector $\bm{v}_k$ is drawn from a fixed distribution, and a normalised Hadamard product is used to estimate the diagonals. The process is repeated $s$ times, using i.i.d queries.
 
 The first estimator uses vectors whose entries are standard Gaussian (normal) variables.

\begin{definition} \label{Gaussian definition}
    A \textbf{Gaussian diagonal estimator} for any square matrix $A \in \mathbb{R}^{n\times n}$ is given by
    \begin{equation} \label{Gaussian definition equation}
        \bm{G}^s = \Big[ \sum_{k=1}^s \bm{v}_k \odot A \bm{v}_k \Big] \oslash \Big[\sum_{k=1}^s \bm{v}_k \odot \bm{v}_k \Big]
    \end{equation}
    where the $\bm{v}_k$ vectors are $s$ independent random vectors whose entries are i.i.d standard normal variables, $v_k^i \sim N(0,1) \iff \bm{v}_k \sim N(0, I_n)$.
\end{definition}
\noindent Note that the Gaussian estimator does not constrain the 2-norm of the $\bm{v}_k$ vectors; so any single estimate may be arbitrarily large or small.

 In contrast, the Rademacher diagonal estimator constrains the $\bm{v}_k$ to have a fixed 2-norm, and results in a fixed denominator in the estimator equation.

\begin{definition} \label{Rademacher definition}
    A \textbf{Rademacher diagonal estimator} for any square matrix $A \in \mathbb{R}^{n\times n}$ is given by
    \begin{equation} \label{Rademacher definition equation}
        \bm{R}^s = \Big[ \sum_{k=1}^s \bm{v}_k \odot A \bm{v}_k \Big] \oslash \Big[\sum_{k=1}^s \bm{v}_k \odot \bm{v}_k \Big] = \frac{1}{s}\sum_{k=1}^s \bm{v}_k \odot A \bm{v}_k
    \end{equation}
    where the $\bm{v}_k$ vectors are s independent random vectors whose entries are i.i.d uniformly distributed Rademacher variables: $v_k^i \sim \{-1,1\}$ each with probability $1/2$.
\end{definition}

It is worth noting that, unlike the trace estimators, the diagonal estimators above return different values when applied to $A^T$ instead of $A$. It is possible that some (but never all, as our results indicate) diagonal elements are estimated more accurately by working with $A^T$.

\section{Comparing the Quality of Estimators} \label{Comparing the Quality of estimators.}

Bekas' et.\ al \cite{bekas2007estimator} sought to replace stochastic estimation by taking the vectors $\bm{v}_k$ to be deterministic, as empirically they found that stochastic estimation could be slow. Sections \ref{Gaussian section} and \ref{Rademacher section} show why. Table \ref{tab:initial summary table} summarises our results.
\begin{table}[h]
    \caption{\textit{Summary of results for the convergence of estimators according to the distribution of the query vectors. Note that these results do not make any assumptions on the matrix $A$.}}
    \vspace{0.2cm}
    \begin{tabular}{|p{40mm}|c|c|}
        \hline
        \textbf{Query vector distribution} & \textbf{Rademacher} &  \textbf{Gaussian}\\\hline
        \textbf{Sufficient query bound \newline for a row-dependent $(\varepsilon,\delta)$-approximator} &$2\ln(2/\delta)/\varepsilon^2$& $4\log_2(\sqrt{2}/\delta)/\varepsilon^2$\\\hline
%        \textbf{Row-dependent $(\varepsilon,\delta)$-approximator} &$2\ln(2/\delta)/\varepsilon^2$& $4\log_2(\sqrt{2}/\delta)/\varepsilon^2$\\\hline
        \textbf{Sufficient query bound for a relative $(\varepsilon,\delta)$-approximator} 
%        \textbf{Relative $(\varepsilon,\delta)$-approximator}
&${\color{white}\Bigg|}2\mathlarger{\Big(\frac{\|\bm{A}_i\|_2^2 - A_{ii}^2}{A_{ii}^2}\Big)}\ln(2/\delta)/\varepsilon^2$&
 ${\color{white}\Bigg|}4\mathlarger{\Big(\frac{\|\bm{A}_i\|_2^2 - A_{ii}^2}{A_{ii}^2}\Big)}\log_2(\sqrt{2}/\delta)/\varepsilon^2$\\\hline
        \textbf{Permitted $\varepsilon$ values} & arbitrary & $(0, 1]$\\\hline
    \end{tabular}
    \label{tab:initial summary table}
\end{table}

\section{Gaussian diagonal estimator} \label{Gaussian section}
We now prove an $(\varepsilon, \delta)$ bound for the Gaussian diagonal estimator. We find a sufficient minimum number of queries $s$, for a row-dependent $(\varepsilon,\delta)$ bound to hold.
\begin{theorem} \label{Gaussian Theorem}

    Let $G_i^s$ be the Gaussian diagonal estimator \eqref{Gaussian definition equation} of $A$, then it is sufficient to take $s$ as
    \begin{equation}\label{gaussnum_each}
        s > 4\log_2(\sqrt{2}/\delta)/\varepsilon^2
    \end{equation}
    with $\varepsilon\in (0,1]$, such that the estimator satisfies
    \begin{equation*}
        \Pr\Bigg(|G_i^s - A_{ii}|^2 \leq \varepsilon^2\Big(\|\bm{A}_i\|_2^2 - A_{ii}^2\Big)\Bigg) \geq 1 - \delta,
    \end{equation*}
    for $i = 1,\hdots, n$, where $\bm{A}_i$ is the $i^{th}$ row of the matrix, reshaped as a vector.
\end{theorem}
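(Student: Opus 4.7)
The plan is to first reduce $|G_i^s - A_{ii}|^2$ to a simple ratio of two independent chi-squared variables through conditioning, and then apply a Chernoff bound to this ratio. Writing
\[
G_i^s - A_{ii} = \frac{\sum_{k=1}^s v_k^i N_k}{\sum_{k=1}^s (v_k^i)^2}, \qquad N_k := \sum_{j \neq i} A_{ij}\, v_k^j,
\]
I first note that the $N_k$ are i.i.d.\ $N(0,\sigma^2)$ with $\sigma^2 = \|\bm{A}_i\|_2^2 - A_{ii}^2$, and are independent of the collection $\bm{v}^i := (v_1^i,\dots,v_s^i)$ because they involve disjoint Gaussian components. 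Setting $S := \sum_k (v_k^i)^2 \sim \chi^2_s$, conditioning on $\bm{v}^i$ makes the numerator Gaussian with variance $\sigma^2 S$, so that $W := \sqrt{S}\,(G_i^s - A_{ii})/\sigma \mid \bm{v}^i \sim N(0,1)$. Since this conditional law is standard normal regardless of $\bm{v}^i$, the variable $W$ is unconditionally standard normal and \emph{independent} of $S$. The bound to be proved therefore becomes
\[
\Pr\!\left(W^2 > \varepsilon^2 S\right) \leq \delta, \qquad W\sim N(0,1),\ S\sim \chi^2_s,\ W\perp S.
\]

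Next I would apply a Chernoff bound to the scalar difference $W^2 - \varepsilon^2 S$: for any $\lambda \in (0,1/2)$, independence and the standard chi-squared MGFs $\mathbb{E}[e^{\lambda W^2}] = (1-2\lambda)^{-1/2}$ and $\mathbb{E}[e^{-\mu S}] = (1+2\mu)^{-s/2}$ give
\[
\Pr\!\left(W^2 > \varepsilon^2 S\right) \leq \mathbb{E}[e^{\lambda W^2}]\,\mathbb{E}[e^{-\lambda \varepsilon^2 S}] = (1-2\lambda)^{-1/2}(1 + 2\lambda\varepsilon^2)^{-s/2}.
\]
The critical (and somewhat ad hoc) choice is $\lambda = 1/4$, which collapses the prefactor to the clean constant $\sqrt{2}$ and leaves $(1+\varepsilon^2/2)^{-s/2}$. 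To convert this exponential-in-$\varepsilon^2$ decay into a base-2 statement I would invoke the elementary inequality $\ln(1+x) \geq x\ln 2$ on $x \in [0,1]$, which follows from the concavity of $\ln(1+x)$ together with the endpoint values at $x=0$ and $x=1$. Applied with $x = \varepsilon^2/2 \in (0,1/2]$ (this is precisely where the hypothesis $\varepsilon \in (0,1]$ is used), this yields $(1+\varepsilon^2/2)^{-s/2} \leq 2^{-s\varepsilon^2/4}$ and hence
\[
\Pr\!\left(W^2 > \varepsilon^2 S\right) \leq \sqrt{2}\cdot 2^{-s\varepsilon^2/4}.
\]
Requiring the right-hand side to be at most $\delta$ and taking $\log_2$ rearranges to exactly $s \geq 4\log_2(\sqrt{2}/\delta)/\varepsilon^2$.

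The main obstacle is less any single calculation and more the two design choices that make the constants come out cleanly: the conditioning step that exposes the decoupled $W^2/S$ structure, and the choice $\lambda = 1/4$ in the Chernoff bound. The latter is not the Chernoff optimum (which depends on both $s$ and $\varepsilon$), but it is uniquely tuned to produce the $\sqrt{2}$ prefactor and a base-2 logarithmic dependence on $\delta$. I would expect the companion result in Section~\ref{Rademacher section} to require a genuinely different route, since there the denominator $\sum_k (v_k^i)^2 = s$ is deterministic and the numerator is no longer Gaussian even after conditioning, so the Gaussian decoupling exploited here is unavailable.
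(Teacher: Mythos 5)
Your proof is correct and follows essentially the same route as the paper's: both reduce the error to a ratio of independent $\chi_1^2$ and $\chi_s^2$ variables (you via conditioning on $\bm{v}^i$, the paper via explicit rotational invariance of the Gaussian vectors), both arrive at the identical intermediate bound $(1-2\lambda)^{-1/2}(1+2\lambda\varepsilon^2)^{-s/2}$, and both close with the same choice $\lambda = 1/4$ and the same inequality $\ln(1+x)\geq x\ln 2$. The only difference is presentational: you multiply the two MGFs directly using independence, where the paper conditions on the denominator and integrates against the $\chi_s^2$ density.
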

\noindent The proof of this theorem is best read as three sequential steps: rotational invariance, conditioned integration and function bounding. We caution the reader that the vector $\bm{u}$ is introduced that has components denoted as $v$. This notation is an unfortunate, yet unavoidable, consequence of the proof.

\begin{proof}

\subsubsection*{ \\Step 1: Rotational invariance}
    The first trick is to rewrite the coefficient of each $A_{ij},\;j \neq i$, in
    \begin{equation} \label{DinGaussianProof}
        G_i^s = A_{ii} + \sum_{j\neq i} A_{ij} \frac{\sum_{k=1}^s v_k^i v_k^j}{\sum_{k=1}^s (v_k^i)^2}.
    \end{equation}
    Let us denote\footnote{We use the notation $\bm{u}_i\in\mathbb{R}^s$ so as not to confuse ourselves with the query vectors $\bm{v}_k\in\mathbb{R}^n$. Definitively, the vector $\bm{u}_i$ is composed of the $i^{th}$ components of each query vector.} $\bm{u}_i = [v_1^i, v_2^i, ..., v_s^i]^T \in \mathbb{R}^s$ such that $\bm{u}_i \sim N(0,I_s)$. Thus
    \begin{equation*} \label{DinGaussianProof2}
        G_i^s = A_{ii} + \sum_{j\neq i} A_{ij} \frac{\bm{u}_i^T \bm{u}_j}{\|\bm{u}_i\|_2^2}.
    \end{equation*}
    Namely the coefficient of each $A_{ij},\;j\neq i$, becomes 
    \begin{equation} \label{GaussRotationInvar1}
        \frac{\bm{u}_i^T \bm{u}_j}{\|\bm{u}_i\|_2^2} =     \frac{\bm{u}_i^T Q^TQ\bm{u}_j}{\|Q\bm{u}_i\|_2^2}
    \end{equation}
    for any orthogonal rotation $Q$, $Q^TQ = I_s $. The key insight is to exploit the rotational invariance of $\bm{u}_j\sim N(0, I_s)$. Hence choose $Q$ such that 
    \begin{equation*}
        \frac{Q\bm{u}_i}{\|Q\bm{u}_i\|_2} =[1, 0, \hdots, 0]^T = \bm{e}_1 \in \mathbb{R}^{s}.
    \end{equation*}
    So the coefficients \eqref{GaussRotationInvar1} become, for all $j \neq i$,
    \begin{equation*}
        \frac{1}{\|Q\bm{u}_i\|_2}\bm{e}_1^TQ\bm{u}_j = \frac{1}{\|\bm{u}_i\|_2}\bm{e}_1^TQ\bm{u}_j = \frac{g_j}{\|\bm{u}_i\|_2}
    \end{equation*}
    since $Q$ is orthogonal, so $Q\bm{u}_j \sim N(0,I_s)\implies \bm{e}_1^TQ\bm{u}_j = g_j \sim N(0, 1)$. Note that the unit direction of a  standard normal multivariate vector and its 2-norm are independent, implying that $Q$ and thus $Q\bm{u}_j$ are independent of $\|\bm{u}_i\|_2$. Thus we may write \eqref{DinGaussianProof} as
    \begin{equation*}
        G_i^s = A_{ii} + \sum_{j \neq i}A_{ij} \frac{g_j}{\|\bm{u}_i\|_2}.
    \end{equation*}
    \noindent Let us denote $\bm{\widetilde{A}}_i \in \mathbb{R}^{n-1}$ as a vector containing the elements of the $i^{th}$ row of $A$, excluding $A_{ii}$. Also let $\bm{\widetilde{g}} \in \mathbb{R}^{n-1}$ be a vector containing the independent Gaussian $g_j$ entries. That is $\bm{\widetilde{g}} \sim N(0,I_{n-1})$. So we have
    \begin{equation*}
        G_i^s - A_{ii} = \frac{\bm{\widetilde{A}}_i^T\bm{\widetilde{g}}}{\|\bm{u}_i\|_2}.
    \end{equation*}
    The second trick is to consider
    \begin{equation}
        \frac{G_i^s - A_{ii}}{\|\bm{\widetilde{A}}_i\|_2} = \frac{1}{\|\bm{u}_i\|_2} \cdot \frac{\bm{\widetilde{A}}_i^T\bm{\widetilde{g}}}{\|\bm{\widetilde{A}}_i\|_2}.
    \end{equation}
    Once again, we use the rotational invariance of the distribution of $\bm{\widetilde{g}}$. The distribution of $\bm{\widetilde{A}}_i^T\bm{\widetilde{g}} / \|\bm{\widetilde{A}}_i\|_2$ is equal to $\bm{e}_1^T \widetilde{Q} \bm{\widetilde{g}}$, where now, $\bm{e}_1 = [1,0,...,0]^T \in \mathbb{R}^{n-1}$ and $\widetilde{Q}$ rotates $ \widetilde{\bm{A}}_i/\|\widetilde{\bm{A}}_i\|_2$ to $\bm{e}_1$. Hence we have that the distribution of $\bm{\widetilde{A}}_i^T\bm{\widetilde{g}} / \|\bm{\widetilde{A}}_i\|_2$ is equal to some $x \sim N(0,1)$, giving 
    \begin{equation} \label{ChiRatio}
        \frac{G_i^s - A_{ii}}{\|\bm{\widetilde{A}}_i\|_2} = \frac{x}{\|\bm{u}_i\|_2} \implies \frac{(G_i^s - A_{ii})^2}{\|\bm{\widetilde{A}}_i\|_2^2} = \frac{x^2}{\|\bm{u}_i\|_2^2}
    \end{equation}
    where clearly $x$ and $\|\bm{u}_i\|_2$ are independent\footnote{This a scaled F-distribution.}.
    \subsubsection*{Step 2: Conditioned integration}
    Recall that $\bm{u}_i \sim N(0,I_s) \in \mathbb{R}^s$ and so $\|\bm{u}_i\|_2^2 \sim \chi^2_s$, is a chi-squared distribution of degree $s$, and, of course, $x^2 \sim \chi^2_1$, is a chi-squared distribution of degree 1. Thus the proof reduces to bounding $\Pr(\big|x/\|\bm{u}_i\|_2 |\leq \varepsilon) =\Pr(x^2/\|\bm{u}_i\|_2^2 \leq \varepsilon^2)$ on the right hand side of equation \eqref{ChiRatio}. This is most easily examined by investigating the tail bound,
    \begin{equation} \label{F equation}
        F:=\Pr\Bigg(\frac{x^2}{\|\bm{u}_i\|_2^2} \geq \varepsilon^2\Bigg) =\Pr\Bigg(\frac{x^2}{y} \geq \varepsilon^2\Bigg)
    \end{equation}
    where, for ease of notation, we have let $y = \|\bm{u}_i\|_2^2$. The next key insight is to approach this by conditioning the probability. Thus we have,
    \begin{equation}
        \begin{split}
            F=&\Pr(x^2\geq \varepsilon^2 y)\\
            =& \int_0^\infty\Pr(x^2 \geq \varepsilon^2 y \big| y)\cdot f(y)dy
        \end{split}
    \end{equation}
    where $f(y)$ is the probability density function of the $y \sim \chi_s^2$ variable. So we have
    \begin{equation*}
        \begin{split}
            F =& \int_0^\infty\Pr(x^2 \geq \varepsilon^2 y \big| y)\cdot f(y) dy \\
            =& \int_0^\infty\Pr(\e^{\lambda x^2} \geq \e^{\lambda\varepsilon^2 y} \big| y)\cdot f(y) dy \hspace{5.2cm}\text{(for $\lambda >0$)} \\
%\hspace{7.2cm}\text{(for $\lambda >0$)}& \\
            \leq & \int_0^\infty \frac{\mathbb{E}[\e^{\lambda x^2}]}{\e^{\lambda\varepsilon^2 y}}\cdot f(y) dy \hspace{5cm}\text{(by Markov's inequality)} \\
%\hspace{7cm}\text{(by Markov's inequality)}& \\
            =& \int_0^\infty \frac{1}{\sqrt{1-2\lambda}\:\e^{\lambda\varepsilon^2 y}}\cdot f(y) dy %\hspace{3.7cm} (\lambda\in(0,1/2),\text{ MGF of a normal variable})& 
\hspace{1.7cm} (\lambda\in(0,1/2),\text{ MGF of a normal variable}) 
        \end{split}
    \end{equation*}    
    which, introducing the pdf, gives    
    \begin{equation*}
        \begin{split}
            =& \int_0^\infty \frac{1}{\sqrt{1-2\lambda}\:\e^{\lambda\varepsilon^2 y}}\cdot \frac{1}{2^{s/2}\Gamma(s/2)}y^{s/2 - 1}\e^{-\frac{1}{2}y}dy\hspace{2.2cm}\text{(pdf of $\chi_s^2$ variable)}\\
            =& \int_0^\infty \frac{1}{\sqrt{1-2\lambda}}\cdot \frac{1}{2^{s/2}\Gamma(s/2)}y^{s/2 - 1}\e^{-(\frac{1}{2}+\lambda\varepsilon^2)y}dy\hspace{3.4cm}\text{(rearrange).}
        \end{split}
    \end{equation*}
    
 Let $z = (1 + 2\lambda\varepsilon^2)y$, such that $dz = (1 + 2\lambda\varepsilon^2)dy$. Thus substitution yields
    
    \begin{equation*}
        \begin{split}
            F \leq& \int_0^\infty \frac{1}{\sqrt{1-2\lambda}}\cdot\frac{1}{(1+2\lambda\varepsilon^2)^{s/2}}\cdot \frac{1}{2^{s/2}\Gamma(s/2)}z^{s/2 - 1}\e^{-\frac{1}{2}z}dz\\
            =& \frac{1}{\sqrt{1-2\lambda}}\cdot\frac{1}{(1+2\lambda\varepsilon^2)^{s/2}}\hspace{4cm}\text{(by the integral of the pdf)}\\
            =& \frac{1}{\sqrt{1-2\lambda}}\cdot\frac{1}{\e^{\ln(1+2\lambda\varepsilon^2)s/2}}.
        \end{split}
    \end{equation*}
    \subsubsection*{Step 3: Function bounding}
    Note that $2\lambda\varepsilon^2 \in (0,1)$ since $\lambda \in (0,1/2)$ and $\varepsilon \in (0,1]$, and consider that
    \begin{equation*} \label{ln bound}
        \ln(1+x) \geq \ln(2)x
    \end{equation*}
    for $x \in (0,1)$. Therefore we can bound $F$ by
    \begin{equation} \label{F bound}
        \begin{split}
            F \leq \frac{1}{\sqrt{1-2\lambda}}\cdot\frac{1}{\e^{\ln(2)\lambda\varepsilon^2s}} = \frac{1}{\sqrt{1-2\lambda}}\cdot\frac{1}{2^{\lambda\varepsilon^2s}}
        \end{split}
    \end{equation}
    Naively (we do not deal with optimal $\lambda$ here), setting $\lambda = 1/4$, and bounding the above by $\delta$ gives
    \begin{equation} \label{Gaussian Proof End}
        \begin{split}
            F\leq \frac{\sqrt{2}}{2^{s\varepsilon^2/4}} < \delta
        \end{split}
    \end{equation}
    thus we have 
    \begin{equation}
        s > 4\log_2\Big(\frac{\sqrt{2}}{\delta}\Big) \Big/ \varepsilon^2
    \end{equation}
    as a sufficient bound for $s$, such that, for all such $s$
    \begin{equation}
        \Pr(|G_i^s - A_{ii}| \geq \varepsilon \|\bm{\widetilde{A}}_i\|_2) = F < \delta
    \end{equation}
    which completes the proof.
\hfill$\square$
\end{proof}

%\begin{center}
%\rule{10cm}{0.4pt}
%\end{center}

\subsubsection*{Implications}
\noindent Theorem \ref{Gaussian Theorem} deals only with a single entry of the diagonal. Concretely, we have found that 
\begin{equation*} \label{Concrete post Gaussian}
    \begin{split}
        \Pr\Big(|G_i^s - A_{ii}|^2 \leq \varepsilon^2 \|\bm{\widetilde{A}}_i\|_2^2\Big) \geq 1 -\delta
    \end{split}
\end{equation*}
for $s > 4\log_2(\sqrt{2}/\delta)/\varepsilon^2$. This means $G_i^s$ is a row-dependent $(\varepsilon,\delta)$-approximator for all such $s$. For a relative $(\varepsilon,\delta)$-approximator, we must substitute
\begin{equation*}
    \varepsilon^2 = \bar{\varepsilon}^2\frac{A_{ii}^2}{\|\bm{\widetilde{A}}_i\|_2^2}
\end{equation*}
to get
\begin{equation*}
    \Pr\Big(|G_i^s - A_{ii}| \leq \bar{\varepsilon}|A_{ii}|\Big) \geq 1 - \delta
\end{equation*}
provided
\begin{equation} \label{Gauss single s}
    s > 4\Bigg(\frac{\|\bm{\widetilde{A}}_i\|_2^2}{A_{ii}^2}\Bigg)\frac{\log_2(\sqrt{2}/\delta)}{\bar{\varepsilon}^2}.
\end{equation}

\noindent For the entire diagonal, if we demand $|G_i^s - A_{ii}|^2 \leq \varepsilon^2\Big(\|\bm{A}_i\|_2^2 - A_{ii}^2\Big)\hspace{0.2cm}\forall\; i$ we may apply the ideas from the proof of Theorem \ref{Gaussian Theorem}, to obtain the number of vectors required for this condition to hold.
\begin{corollary} \label{Gaussian Union Theorem}
    For
    \begin{equation*}
        \Pr\Bigg(\sum_{i=1}^n|G_i^s - A_{ii}|^2 \leq \varepsilon^2\sum_{i=1}^n\Big(\|\bm{A}_i\|_2^2 - A_{ii}^2\Big)\Bigg) \geq 1 - \delta
    \end{equation*}
    with $\varepsilon\in (0,1]$ it is sufficient to take
    \begin{equation*}\label{cor:gauvec}
%\rrr{        s = O(\log(n/\delta)/\varepsilon^2)}{
s> 4\log_2\Big(\frac{n\sqrt{2}}{\delta}\Big) \Big/ \varepsilon^2
    \end{equation*}
    query vectors if vector entries are i.i.d Gaussian, where $\bm{A}_i$ is the $i^{th}$ row of the matrix A, and $G_i^s$ is the $i^{th}$ component of the Gaussian diagonal estimator.
\end{corollary}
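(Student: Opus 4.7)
The plan is to reduce the corollary to Theorem~\ref{Gaussian Theorem} by a straightforward union bound over the $n$ row indices. The theorem tells us that for each individual index $i$, the event
$\mathcal{E}_i := \{|G_i^s - A_{ii}|^2 > \varepsilon^2(\|\bm{A}_i\|_2^2 - A_{ii}^2)\}$
has probability at most $\delta'$ whenever $s > 4\log_2(\sqrt{2}/\delta')/\varepsilon^2$. So my first step is to apply the theorem with the replacement $\delta \leftarrow \delta/n$.

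Second, I would invoke the union bound: $\Pr(\bigcup_{i=1}^n \mathcal{E}_i) \leq \sum_{i=1}^n \Pr(\mathcal{E}_i) \leq n \cdot (\delta/n) = \delta$. Consequently, with probability at least $1-\delta$, the bound $|G_i^s - A_{ii}|^2 \leq \varepsilon^2(\|\bm{A}_i\|_2^2 - A_{ii}^2)$ holds simultaneously for every $i = 1, \ldots, n$. Summing these $n$ inequalities (on an event of probability at least $1-\delta$) yields the desired conclusion
\begin{equation*}
\sum_{i=1}^n |G_i^s - A_{ii}|^2 \;\leq\; \varepsilon^2 \sum_{i=1}^n \bigl(\|\bm{A}_i\|_2^2 - A_{ii}^2\bigr).
\end{equation*}

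Finally, I translate the query requirement: substituting $\delta' = \delta/n$ into $s > 4\log_2(\sqrt{2}/\delta')/\varepsilon^2$ gives precisely $s > 4\log_2(n\sqrt{2}/\delta)/\varepsilon^2$, matching the claim. There is no genuine obstacle here, since the per-row bound from Theorem~\ref{Gaussian Theorem} does not require any independence between the events $\mathcal{E}_i$ (union bounds are insensitive to dependence); the only mild care needed is to notice that demanding the sum-of-squares inequality is \emph{weaker} than demanding the inequality row-by-row, so the probability of the joint event controls the probability of the summed event. Apart from that, the result follows immediately from the theorem with the $\delta/n$ rescaling, which is exactly where the extra factor of $n$ inside the logarithm originates.
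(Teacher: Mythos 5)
Your proposal is correct and follows essentially the same route as the paper: a union bound over the $n$ row-wise tail bounds from Theorem~\ref{Gaussian Theorem}, with the $\delta/n$ rescaling producing the extra factor of $n$ inside the logarithm. Your explicit remark that the summed inequality is implied by the simultaneous row-by-row inequalities is a detail the paper leaves implicit, but the argument is identical in substance.
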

\begin{proof}
By equation \eqref{Gaussian Proof End} of the proof of Theorem \ref{Gaussian Theorem}, we seek
\begin{equation*}
   \Pr\Bigg(\frac{(G_i^s - A_{ii})^2}{\|\bm{\widetilde{A}}_i\|_2^2} \geq \varepsilon^2\Bigg) = \Pr\Bigg(\frac{\chi^2_1}{\chi^2_s} \geq \varepsilon^2\Bigg)\leq \frac{\sqrt{2}}{2^{s\varepsilon^2/4}}\;\forall\; i.
\end{equation*}
Applying a union bound, and bounding this by $\delta$ we readily find
\begin{equation} \label{s-whole diagonal gauss}
    n\cdot\frac{\sqrt{2}}{2^{s\varepsilon^2/4}} < \delta \implies s > 4\log_2\Big(\frac{n\sqrt{2}}{\delta}\Big) \Big/ \varepsilon^2
\end{equation}
is sufficient, which completes the proof.
\hfill$\square$
\end{proof}
\noindent Taking the summation over $i$, Corollary \ref{Gaussian Union Theorem} means we have, with probability at least $1 - \delta$
\begin{equation}
    \|\bm{G}^s - \bm{A}_d\|_2^2 \leq \varepsilon^2\big(\|A\|_F^2 - \|\bm{A}_d\|_2^2\big)
\end{equation}
if we have $s$ as in equation \eqref{s-whole diagonal gauss}.
To make this bound a relative estimate, in comparison to $\bm{A}_d$, we write 
\begin{equation*}
    \varepsilon^2 = \Bigg(\frac{\|A\|_F^2 - \|\bm{A}_d\|_2^2}{\|\bm{A}_d\|^2}\Bigg)\bar{\varepsilon}^2
\end{equation*}
to get 
\begin{equation}
    \Pr\Big(\|\bm{G}^s -\bm{A}_d\|_2 \leq \bar{\varepsilon}\|\bm{A}_d\|_2\Big) \geq 1 - \delta
\end{equation}
provided that the number of query vectors $s$ is
\begin{equation} \label{s-queries full diagonal}
    s > 4\Bigg(\frac{\|A\|_F^2 - \|\bm{A}_d\|_2^2}{\|\bm{A}_d\|^2}\Bigg)\frac{\log_2\Big(n\sqrt{2}/\delta\Big)}{\bar{\varepsilon}^2}.
\end{equation}
In summary, we have found that to get a relative 2-norm estimate of the entire diagonal with error at most $\bar{\varepsilon}$, and probability at least $1- \delta$, it is sufficient to use the $s$ queries as in equation \eqref{s-queries full diagonal}, if Gaussian query vectors are used.

\section{Rademacher diagonal estimator} \label{Rademacher section}
This section deals with the proof of an $(\varepsilon, \delta)$ bound for the Rademacher diagonal estimator. We choose to break this down into two main steps, the first of which is given in Lemma \ref{Chernoff Lemma}. The lemma concerns the expectation of an exponential, 
which will subsequently be used in a Chernoff-style argument. One may regard it simply as a slight tweak on Khintchine's lemma, but introducing an additional summation term and so we outline it fully for completeness.
Before continuing, we note that examining the Rademacher estimator element-wise we have
\begin{equation} \label{For chapter 5}
    \begin{split}
    R_i^s = A_{ii} + \sum_{j \neq i}^n A_{ij} \frac{\sum_{k=1}^s v_k^iv_k^j}{s}
    = A_{ii} + \frac{1}{s}\sum_{j\neq i}^n A_{ij} \sum_{k=1}^s z_k^{ij}
    \end{split}
\end{equation}
where we have rewritten $v_k^i v_k^j$ as $z_k^{ij}$ and $z_k^{ij}$ is also a Rademacher variable\footnote{It is easy to verify that a Rademacher variable times an independent ($j \neq i$) Rademacher variable is also Rademacher variable.}. We now introduce the following lemma.

\begin{lemma} \label{Chernoff Lemma}
    For 
    \begin{equation*}
        X^s_i := R_i^s - A_{ii}
    \end{equation*}
    we have that 
    \begin{equation*}
        \mathbb{E}[\exp(t X^s_i)] \leq \exp\Bigg(\frac{t^2 \|\bm{\widetilde{A}}_i\|_2^2}{2s}\Bigg)
    \end{equation*}
    where $\bm{\widetilde{A}}_i \in \mathbb{R}^{n-1}$ is a vector containing the elements of the $i^{th}$ row of A, excluding the $i^{th}$ element.
\end{lemma}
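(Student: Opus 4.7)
The plan is to reduce the moment generating function of $X_i^s$ to a product of one-dimensional Hoeffding-style MGFs by exploiting two layers of independence: first, independence across the $s$ queries, and, second, conditional independence across coordinates within a single query once $v_k^i$ is fixed.

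Starting from equation~\eqref{For chapter 5}, I would rewrite $tX_i^s$ as a double sum $\sum_{k=1}^s\sum_{j\neq i}(tA_{ij}/s)\,v_k^i v_k^j$. Because the query vectors $\bm{v}_1,\dots,\bm{v}_s$ are independent, the expectation of $\exp(tX_i^s)$ factorises into a product over $k=1,\dots,s$ of per-query MGFs, each of the form $\mathbb{E}\bigl[\exp\bigl(\sum_{j\neq i}(tA_{ij}/s)\,v_k^i v_k^j\bigr)\bigr]$. Reducing the problem to a single query this way is routine; the content is in bounding one such factor.

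The main obstacle is that within a single query the products $v_k^i v_k^j$ for different $j$ share the common factor $v_k^i$, and so are not mutually independent. My remedy is to condition on $v_k^i$. Once $v_k^i$ is fixed, the remaining entries $v_k^j$ ($j\neq i$) are still i.i.d.\ Rademacher, so the conditional expectation factorises across $j$ into a product of the standard Rademacher MGFs $\cosh(tA_{ij}v_k^i/s)$. Two cheap observations then kill the conditioning entirely: $\cosh$ is even, and $(v_k^i)^2=1$, so each factor equals $\cosh(tA_{ij}/s)$, which is deterministic. Hence no outer expectation over $v_k^i$ is needed.

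Finally, I would apply the classical bound $\cosh(x)\leq \exp(x^2/2)$ in each factor, converting the product of cosh's into $\exp\bigl(t^2\|\bm{\widetilde{A}}_i\|_2^2/(2s^2)\bigr)$ for a single $k$. Multiplying these $s$ per-query bounds together gives the claimed $\exp\bigl(t^2\|\bm{\widetilde{A}}_i\|_2^2/(2s)\bigr)$; the extra $1/s$ factor in the exponent (relative to a single-query Khintchine estimate) is precisely what will produce the $O(1/\sqrt{s})$ concentration rate when this lemma is fed into a Markov/Chernoff tail argument in the subsequent theorem.
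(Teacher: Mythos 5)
Your proposal is correct and follows essentially the same route as the paper: both reduce $\mathbb{E}[\exp(tX_i^s)]$ to a product of factors $\cosh(tA_{ij}/s)$ over $j\neq i$ and $k=1,\dots,s$, and then apply the Khintchine-type bound $\cosh(x)\leq \exp(x^2/2)$ termwise. The only difference is in justifying the factorisation over $j$ within one query: the paper does this by observing that the products $z_k^{ij}=v_k^iv_k^j$ are themselves (mutually independent) Rademacher variables, whereas you condition on $v_k^i$ and use evenness of $\cosh$ together with $(v_k^i)^2=1$ — a slightly more careful treatment of the shared factor $v_k^i$, but the same argument in substance.
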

\begin{proof}
    Note that $X^s_i$ is simply
    \begin{equation*}
        X_i^s = \frac{1}{s}\sum_{j\neq i}^n A_{ij} \sum_{k=1}^s z_k^{ij}
    \end{equation*}
    Consider now
    \begin{equation*}
        \begin{split}
            \mathbb{E}[\exp(t X_i^s)] =& \mathbb{E}\Big[\exp \Big(\frac{t}{s}\sum_{j\neq i}^n A_{ij} \sum_{k=1}^s z_k^{ij}\Big) \Big]\\
            =& \prod_{j \neq i}^n \prod_{k=1}^s \mathbb{E}\Big[\exp \Big(\frac{t}{s}A_{ij}z_k^{ij}\Big)\Big] \\
            =& \prod_{j \neq i}^n \prod_{k=1}^s \frac{1}{2}\Big(\exp \Big(\frac{t}{s}A_{ij}\Big) +\Big(-\frac{t}{s}A_{ij}\Big)\Big)\\
            =& \prod_{j \neq i}^n \prod_{k=1}^s \Big(1 + \frac{(t A_{ij}/s)^2}{2!} + \frac{(t A_{ij}/s)^4}{4!} + ... \Big)\\
            \leq & \prod_{j \neq i}^n \prod_{k=1}^s \Big(1 + \frac{(t A_{ij}/s)^2}{1!\cdot 2} + \frac{(t A_{ij}/s)^4}{2!\cdot 2^2} + ... \Big)\\
            =& \prod_{j \neq i}^n \prod_{k=1}^s \exp \Big( \frac{t^2 A_{ij}^2}{2s^2} \Big)\\
            =& \prod_{j \neq i}^n \exp \Big(\frac{t^2 A_{ij}^2}{2 s}\Big)\\
            =& \exp \Big(\frac{t^2\|\bm{\widetilde{A}}_i\|_2^2}{2s}\Big)
        \end{split}
    \end{equation*}
    which completes the proof.
\hfill$\square$
\end{proof}

\noindent With this lemma in place, we can now prove an $(\varepsilon, \delta)$ bound for the Rademacher diagonal estimator. Specifically, we find the minimum sufficient number of queries $s$, for a row-dependent $(\varepsilon,\delta)$ bound to hold.

\begin{theorem} \label{Rademacher Theorem}
    Let $R_i^s$ be the Rademacher diagonal estimator \eqref{Rademacher definition equation} of $A$, then it is sufficient to take $s$ as
    \begin{equation}\label{rad_numeach}
        s > 2\ln(2/\delta)/\varepsilon^2
    \end{equation}
    with arbitrary $\varepsilon$, such that the estimator satisfies
    \begin{equation*}
        \Pr\Bigg(|R_i^s - A_{ii}|^2 \leq \varepsilon^2\Big(\|\bm{A}_i\|_2^2 - A_{ii}^2\Big)\Bigg) \geq 1 - \delta,
    \end{equation*}
    where $\bm{A}_i$ is the $i^{th}$ row of the matrix reshaped as a vector.
\end{theorem}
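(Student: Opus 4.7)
The plan is to use Lemma \ref{Chernoff Lemma} together with a standard Chernoff-style argument. First I would rewrite the target probability in terms of $X_i^s = R_i^s - A_{ii}$, noting that $\|\bm{A}_i\|_2^2 - A_{ii}^2 = \|\bm{\widetilde{A}}_i\|_2^2$, so the claim is equivalent to
\begin{equation*}
\Pr\bigl(|X_i^s| \geq \varepsilon\|\bm{\widetilde{A}}_i\|_2\bigr) \leq \delta.
\end{equation*}

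Next I would apply Markov's inequality to $\exp(tX_i^s)$ for arbitrary $t>0$ and invoke Lemma \ref{Chernoff Lemma} to obtain
\begin{equation*}
\Pr\bigl(X_i^s \geq \varepsilon\|\bm{\widetilde{A}}_i\|_2\bigr) \leq \exp\!\Bigl(\tfrac{t^2\|\bm{\widetilde{A}}_i\|_2^2}{2s} - t\varepsilon\|\bm{\widetilde{A}}_i\|_2\Bigr).
\end{equation*}
Then I would optimise over $t$; differentiating in $t$ gives the minimiser $t^* = s\varepsilon/\|\bm{\widetilde{A}}_i\|_2$, which collapses the bound cleanly to $\exp(-s\varepsilon^2/2)$. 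This removes the dependence on $\|\bm{\widetilde{A}}_i\|_2$ entirely, which is the crucial feature that makes the Rademacher bound matrix-independent.

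For the lower tail $\Pr(X_i^s \leq -\varepsilon\|\bm{\widetilde{A}}_i\|_2)$, I would observe that $X_i^s$ is symmetrically distributed: since each $z_k^{ij}$ is Rademacher, so is $-z_k^{ij}$, hence $-X_i^s$ has the same distribution as $X_i^s$ and the same Chernoff bound applies. A union bound over the two tails yields
\begin{equation*}
\Pr\bigl(|X_i^s| \geq \varepsilon\|\bm{\widetilde{A}}_i\|_2\bigr) \leq 2\exp(-s\varepsilon^2/2).
\end{equation*}

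Finally I would set the right-hand side less than $\delta$ and solve for $s$, obtaining $s > 2\ln(2/\delta)/\varepsilon^2$ as claimed. I do not anticipate a serious obstacle: the heavy lifting was already done in Lemma \ref{Chernoff Lemma}, where the sub-Gaussian moment generating function bound was established. The main subtle point is simply noting the symmetry argument for the lower tail and performing the Chernoff optimisation carefully; unlike the Gaussian case there is no constraint $\varepsilon \in (0,1]$ needed because the bound $\ln(1+x) \geq \ln(2)x$ never enters the argument — this explains the ``arbitrary $\varepsilon$'' claim in the statement.
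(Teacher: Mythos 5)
Your proposal is correct and follows the paper's proof essentially verbatim: Markov's inequality applied to $\exp(tX_i^s)$, Lemma \ref{Chernoff Lemma}, optimisation at $t = s\varepsilon/\|\bm{\widetilde{A}}_i\|_2$, the symmetry argument for the negative tail, and the union bound giving $2\exp(-s\varepsilon^2/2) < \delta$. Your closing remark about why $\varepsilon$ may be arbitrary here, in contrast to the Gaussian case, is also the right explanation.
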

\begin{proof}
    The problem may be reduced to finding the tail probability, $\Pr(|X_i^s| \geq \varepsilon \|\widetilde{\bm{A}}_i\|_2)$. 
    Investigating the positive tail probability we have that
    \begin{equation} \label{Positive tail argument}
        \begin{split}
           \Pr(X^s_i \geq \varepsilon \|\widetilde{\bm{A}}_i\|_2) =&\Pr(\e^{t X^s_i} \geq \e^{t\varepsilon\|\widetilde{\bm{A}}_i\|_2})\hspace{3.1cm}\text{(all $t >0$)}\\
            \leq & \frac{\mathbb{E}[\e^{tX^s_i}]}{\e^{t\varepsilon\|\widetilde{\bm{A}}_i\|_2}}\hspace{4.9cm}\text{(By Markov's inequality)}\\
            \leq& \exp \Big(-t\varepsilon \|\widetilde{\bm{A}}_i\|_2 + t^2 \|\widetilde{\bm{A}}_i\|_2^2/2s \Big).\hspace{1.2cm}\text{(By Lemma \ref{Chernoff Lemma})}
        \end{split}
    \end{equation}
    This is minimum for $t = s\varepsilon/\|\widetilde{\bm{A}}_i\|_2$, so we obtain
    \footnote{We can also obtain~\eqref{plusXgreater} from a direct application of Hoeffding's bound for the sum of unbounded random variables~\cite[Prop.~2.5]{wainwright2019high}.}
    \begin{equation} \label{plusXgreater}
       \Pr(X^s_i \geq \varepsilon \|\widetilde{\bm{A}}_i\|_2) \leq \exp\Big(-\frac{1}{2} s \varepsilon^2\Big).
    \end{equation}
    Since we are considering $|X^s_i| \geq \varepsilon \|\widetilde{\bm{A}}_i\|_2$, we need also to bound the negative tail
    \begin{equation*}
        \begin{split}
            \Pr(-X^s_i \geq \varepsilon \|\widetilde{\bm{A}}_i\|_2) &=\Pr(\e^{-t X^s_i} \geq \e^{ t \varepsilon \|\widetilde{\bm{A}}_i\|_2})\\
            &\leq \frac{\mathbb{E}[\e^{-t X^s_i}]}{\e^{t\varepsilon\|\widetilde{\bm{A}}_i\|_2}}.
       \end{split}
    \end{equation*}
    It is also easily shown, by Lemma \ref{Chernoff Lemma}, that
    \begin{equation*}
        \mathbb{E}[\e^{-t X^s_i}] \leq \exp((-t)^2\|\widetilde{\bm{A}}_i\|^2_2/2 s)
    \end{equation*}
    which means that, just as for the positive tail
    \begin{equation} \label{minusXgreater}
       \Pr( -X^s_i \geq \varepsilon \|\widetilde{\bm{A}}_2\|_2) \leq \exp\Big(-\frac{1}{2} s \varepsilon^2\Big)
    \end{equation}
    Or indeed, we may easily deduce the above by considering the symmetry of the Rademacher variables $z_k^{ij}$ in Lemma \ref{Chernoff Lemma}.\\
    
    \noindent Given that the events in equations \eqref{plusXgreater} and \eqref{minusXgreater} are disjoint, we have
    \begin{equation} \label{Rademacher Proof Penultimate}
       \Pr(|X^s_i| \geq \varepsilon\|\widetilde{\bm{A}}_i\|_2) \leq 2\exp \Big( - \frac{1}{2} s \varepsilon^2 \Big).
    \end{equation}
    Bounding this probability by $\delta$ yields
    \begin{equation} 
    2\exp \Big( - \frac{1}{2} s \varepsilon^2 \Big) < \delta \implies
        s > 2 \ln\Big(\frac{2}{\delta}\Big)\Big/\varepsilon^2
    \end{equation}
    as the sufficient lower bound for $s$, such that, for all such $s$
  \begin{equation}
      \Pr(|X_i^s|\geq\varepsilon\|\widetilde{\bm{A}}_i\|_2) < \delta
  \end{equation} which completes the proof.
\hfill$\square$
\end{proof}
%\begin{center}
%    \rule{10cm}{0.4pt}
%\end{center}

\subsubsection*{Implications}
\noindent The above result is for a single diagonal entry. We have found that 
\begin{equation*} \label{Concrete Post Rademacher}
    \Pr\Big(|R_i^s - A_{ii}|^2 \geq \varepsilon^2\|\widetilde{\bm{A}}_i\|_2^2\Big) \leq 2\exp(-s\varepsilon^2/2) < \delta
\end{equation*}
so that $R_i^s$ is a row-dependent $(\varepsilon,\delta)$-approximator for $s > 2\ln(2/\delta)/\varepsilon^2$. For a relative $(\varepsilon,\delta)$-approximator we must substitute\footnote{Replacing $\|\bm{\widetilde{A}}_i\|_2$ with $A_{ii}$ in equation \eqref{Positive tail argument}, yields the same result as in equation \eqref{single element rademacher queries}.}
\begin{equation*} 
    \varepsilon^2 = \bar{\varepsilon}^2\frac{A_{ii}^2}{\|\bm{A}_d\|_2^2}
\end{equation*}
to get 
\begin{equation} \label{single element rademacher probability}
    \Pr\Big(|R_i^s - A_{ii}| \geq \bar{\varepsilon} |A_{ii}|\Big) < \delta
\end{equation}
provided
\begin{equation} \label{single element rademacher queries}
    s > 2\Bigg(\frac{\|\widetilde{\bm{A}}_i\|_2^2}{A_{ii}^2}\Bigg)\frac{\ln(2/\delta)}{\bar{\varepsilon}^2}.
\end{equation}
For the whole diagonal, and in much the same way as when considering the Gaussian case, if we demand $|R_i^s - A_{ii}|^2 \leq \varepsilon^2\Big(\|\bm{A}_i\|_2^2 - A_{ii}^2\Big)\hspace{0.2cm}\forall \; i$, we need simply apply the ideas from the proof of Theorem \ref{Rademacher Theorem}, to obtain the number of vectors required for this condition to hold. 

\begin{corollary} \label{Rad Union Theorem}
For
\begin{equation*}
    \Pr\Bigg(\sum_{i=1}^n|R_i^s - A_{ii}|^2 \leq \varepsilon^2\sum_{i=1}^n\Big(\|\bm{A}_i\|_2^2 - A_{ii}^2\Big)\Bigg) \geq 1 - \delta
\end{equation*}
with $\varepsilon$ arbitrary it is sufficient to take
\begin{equation}\label{eq:radnum}
%\rrr{    s = O(\log(n/\delta)/\varepsilon^2)}{
s>2 \ln\Big(\frac{2n}{\delta}\Big)\Big/\varepsilon^2
\end{equation}
query vectors if vector entries are i.i.d Rademacher, where $\bm{A}_i$ is the $i^{th}$ row of the matrix A, and $R_i^s$ is the $i^{th}$ component of the Rademacher diagonal estimator.
\end{corollary}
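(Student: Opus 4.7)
The plan is to mimic the strategy used for the Gaussian union bound in Corollary~\ref{Gaussian Union Theorem}: convert the event about the sum of squared errors into the (stronger) event that every row-wise squared error is individually controlled, then invoke the single-index tail bound from Theorem~\ref{Rademacher Theorem} together with a union bound over $i=1,\dots,n$.

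First I would observe that if, for every $i$, the row-dependent inequality
\[
    |R_i^s - A_{ii}|^2 \leq \varepsilon^2\bigl(\|\bm{A}_i\|_2^2 - A_{ii}^2\bigr)
\]
holds, then summing over $i$ yields the event in the statement. Thus it is enough to bound the probability that at least one of these row-wise inequalities fails. Writing $X_i^s = R_i^s - A_{ii}$ as in the proof of Theorem~\ref{Rademacher Theorem}, the two-sided tail bound derived there (equation~\eqref{Rademacher Proof Penultimate}) gives
\[
    \Pr\bigl(|X_i^s| \geq \varepsilon\,\|\widetilde{\bm{A}}_i\|_2\bigr) \leq 2\exp(-s\varepsilon^2/2)
\]
for each fixed $i$, with no assumption on $\varepsilon$ beyond positivity.

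Next I would apply a union bound over $i=1,\ldots,n$ to the $n$ events $\{|X_i^s| \geq \varepsilon\,\|\widetilde{\bm{A}}_i\|_2\}$, obtaining
\[
    \Pr\Bigl(\exists\, i:\ |R_i^s - A_{ii}|^2 > \varepsilon^2\bigl(\|\bm{A}_i\|_2^2 - A_{ii}^2\bigr)\Bigr) \leq 2n\exp(-s\varepsilon^2/2).
\]
Forcing the right-hand side below $\delta$ gives $s > 2\ln(2n/\delta)/\varepsilon^2$, exactly the bound claimed in~\eqref{eq:radnum}. Taking complements recovers the desired probability of at least $1-\delta$ for the sum-of-squares event.

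There is essentially no serious obstacle, since Theorem~\ref{Rademacher Theorem} already packages the hard Chernoff/Markov/Lemma~\ref{Chernoff Lemma} calculation. The only mild subtlety worth a sentence in the write-up is the passage from the sum event to the family of per-row events: the sum event is implied by (but not equivalent to) the simultaneous validity of all per-row bounds, so the union bound is indeed a sufficient, possibly loose, route. As in the Gaussian case, this looseness is accepted as the price for avoiding any analysis of the joint dependence among the $X_i^s$ across different rows.
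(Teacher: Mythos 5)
Your proposal is correct and follows essentially the same route as the paper: invoke the two-sided tail bound \eqref{Rademacher Proof Penultimate} for each $i$, union bound over the $n$ rows, and solve $2n\exp(-s\varepsilon^2/2)<\delta$ for $s$. Your explicit remark that the sum event is implied by (not equivalent to) the conjunction of the per-row events is a point the paper leaves implicit, but the argument is the same.
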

\begin{proof}
By equation \eqref{Rademacher Proof Penultimate} in the proof of Theorem \ref{Rademacher Theorem}, we seek
\begin{equation*}
   \Pr(|X^s_i| \geq \varepsilon\|\widetilde{\bm{A}}_i\|_2) \leq 2\exp \Big( - \frac{1}{2} s \varepsilon^2 \Big)\; \forall \; i.
\end{equation*}
 Applying a union bound for all entries, and bounding this, we readily find 
\begin{equation} \label{s queries for Rademacher diagonal}
    n\cdot2\exp \Big( - \frac{1}{2} s \varepsilon^2 \Big) < \delta \implies s > 2 \ln\Big(\frac{2n}{\delta}\Big)\Big/\varepsilon^2
\end{equation}
which completes the proof.
\hfill$\square$
\end{proof}

\noindent Thus taking the summation over $i$, we have, with probability $1-\delta$
\begin{equation} \label{s-whole diagonal rad}
    \|\bm{R}^s - \bm{A}_d\|_2^2 \leq \varepsilon^2\big(\|A\|_F^2 -\|\bm{A}_d\|_2^2\big)
\end{equation}
provided we have used $s$-queries as in \eqref{s queries for Rademacher diagonal}. If we seek a relative error, in comparison to $\bm{A}_d$, let 
\begin{equation*}
    \varepsilon^2 = \Bigg(\frac{\|A\|_F^2 - \|\bm{A}_d\|_2^2}{\|\bm{A}_d\|_2}\Bigg)\bar{\varepsilon}^2
\end{equation*}
to get 
\begin{equation}
    \Pr\Big(\|\bm{R}^s - \bm{A}_d\|_2 \leq \bar{\varepsilon}\|\bm{A}_d\|_2\Big) \geq 1 - \delta
\end{equation}
provided
\begin{equation} \label{s-queries full diagonal rad}
     s > 2\Bigg(\frac{\|A\|_F^2 - \|\bm{A}_d\|_2^2}{\|\bm{A}_d\|^2}\Bigg)\frac{\ln\Big(2n/ \delta\Big)}{\bar{\varepsilon}^2}
\end{equation}
Overall, we have found that to get a relative 2-norm estimate of the entire diagonal with error at most $\varepsilon$, and probability at least $1-\delta$, it is sufficient to use the $s$ queries as in equation \eqref{s-queries full diagonal rad}, if we use Rademacher query vectors.

\subsubsection*{A brief comparison}
Comparing~\eqref{gaussnum_each} with~\eqref{rad_numeach} (or~\eqref{cor:gauvec} with~\eqref{eq:radnum}), we see that both the Rademacher estimator and the Gaussian estimator display strikingly similar criteria for the selection of $s$. Whilst the Gaussian estimator appears slightly worse than its Rademacher counterpart, this difference is, in general terms, very small. A more insidious point to note relates to the demands on $\varepsilon$ in the Gaussian case. We have demanded $\varepsilon \in (0,1]$ for equation \eqref{F bound} to hold. Whilst seemingly a minor detail, this becomes very important if we are to use a small number of query vectors $s$, say, on the order of $10$. In Section \ref{Numerical Experiments on the diagonal} we demonstrate the implications of this assumption, and offer intuition as for why, when $s$ is small, the Gaussian diagonal estimator performs worse than the Rademacher diagonal estimator.

\section{Positive-semi-definite diagonal estimation} \label{Positive-semi-definite diagonal estimation}

We now examine the case where $A\succeq 0$, as is relevant in many applications \cite{braverman2020schatten,chen2016accurately,cohen2018approximating,di2016efficient,han2017approximating,li2020well,lin2016approximating,ubaru2017applications}, and analysed extensively in \cite{avron2011randomized,roosta2015improved}. For both estimators we have
\begin{equation} \label{Diagonal column bound equation}
    |D_i^s - A_{ii}|^2 \leq \varepsilon^2\Big(\|\bm{A}_i\|_2^2 - A_{ii}^2\Big)
\end{equation}
with probability, $1-\delta$, provided we have used $s = O(\log(1/\delta)/\varepsilon^2)$ queries and for given $i$. As before, $\bm{A}_i$ is the $i^{th}$ row of the matrix (and therefore the $i^{th}$ column by symmetry). There is no escaping the dependence of the absolute error on the matrix structure, and, as we have also seen, shifting to a relative error framework means that the number of sufficient queries becomes dependent on the matrix. Can we draw any other general conclusions about the dependence of the number of queries on the structure of the matrix? Might they fit with our intuitions, and indicate when stochastic estimation may be suitable? We find, as expected, that both flatter spectra lend themselves to diagonal estimation and that the eigenvector basis of the matrix plays a key role. These results are included for the practitioner, who may have access to some of the quantities explored in the following sections (e.g. $\kappa_2(A)$), or may be able to simulate other quantities in particular scenarios (e.g. $\sigma_{\min}(V\odot V)$). 

\subsection{Spectrum influence - element-wise}
To make a start let us consider the following lemma. Recall that $\lambda_1$ is the largest eigenvalue.
\begin{lemma}\label{Column bound lemma}
    For all $A \succeq 0$ we have that $\|\bm{A}_i\|_2^2 \leq \lambda_1 A_{ii} \hspace{0.2cm} \forall\; i$.
\end{lemma}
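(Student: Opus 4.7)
\bigskip
\noindent\textbf{Proof proposal.} The plan is to reduce the inequality to a statement about quadratic forms on $e_i$ and then invoke a standard Loewner-order comparison between $A^2$ and $\lambda_1 A$. First I would rewrite the left-hand side: since $A$ is symmetric,
\[
\|\bm{A}_i\|_2^2 \;=\; \sum_{j=1}^n A_{ij}^2 \;=\; \sum_{j=1}^n A_{ij}A_{ji} \;=\; (A^2)_{ii} \;=\; e_i^T A^2 e_i,
\]
and similarly $A_{ii} = e_i^T A e_i$. So the claim is equivalent to $e_i^T A^2 e_i \leq \lambda_1\, e_i^T A e_i$ for every standard basis vector $e_i$.

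The next step is to show the stronger matrix inequality $A^2 \preceq \lambda_1 A$, from which the scalar bound follows by sandwiching with $e_i$. Using the eigendecomposition $A = V\Lambda V^T$ with $\Lambda = \mathrm{diag}(\lambda_1,\dots,\lambda_n)$ and $\lambda_i \geq 0$, we have
\[
\lambda_1 A - A^2 \;=\; V\bigl(\lambda_1 \Lambda - \Lambda^2\bigr)V^T \;=\; V\, \mathrm{diag}\bigl(\lambda_i(\lambda_1 - \lambda_i)\bigr)\, V^T.
\]
Each diagonal entry $\lambda_i(\lambda_1-\lambda_i)$ is non-negative because $0 \leq \lambda_i \leq \lambda_1$, so $\lambda_1 A - A^2 \succeq 0$. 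Evaluating $e_i^T(\lambda_1 A - A^2) e_i \geq 0$ then gives exactly $\|\bm{A}_i\|_2^2 \leq \lambda_1 A_{ii}$, as required.

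I do not foresee a serious obstacle here: the only subtle point is noticing the identity $\|\bm{A}_i\|_2^2 = (A^2)_{ii}$, which relies on symmetry of $A$ (guaranteed by $A\succeq 0$) and turns a sum of squared entries into a quadratic form accessible to spectral arguments. Once that rewriting is made, PSD-ness of $A$ and the ordering $\lambda_i \leq \lambda_1$ deliver the bound immediately, with no need for Cauchy--Schwarz or any more delicate tool.
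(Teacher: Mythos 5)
Your proof is correct and is essentially the paper's argument in different clothing: the paper expands the row $\bm{A}_i = \sum_j \lambda_j V_{ij}\bm{V}_j$ and bounds $\sum_j (\lambda_j V_{ij})^2 \leq \lambda_1 \sum_j \lambda_j V_{ij}^2$, which is exactly your inequality $e_i^T A^2 e_i \leq \lambda_1\, e_i^T A e_i$ obtained from $A^2 \preceq \lambda_1 A$. The Loewner-order packaging is a matter of presentation; the underlying use of the eigendecomposition and $\lambda_j^2 \leq \lambda_1\lambda_j$ is identical.
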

\begin{proof} \label{Column bound proof}
   We may express the column $\bm{A}_i$ as
   \begin{equation*}
       \bm{A}_i = \sum_{j=1}^n \lambda_j V_{ij} \bm{V}_j
   \end{equation*}
   where $\bm{V}_j$ is the $j^{th}$ eigenvector of $A$ and so we also have that $ A_{ii} = \sum_{j=1}^n \lambda_j (V_{ij} )^2$. Thus
   \begin{equation*} 
   \|\bm{A}_i\|_2^2 = \Big\|\sum_{j=1}^n\lambda_j V_{ij} \bm{V}_j \Big\|_2^2 = \sum_{j=1}^n (\lambda_jV_{ij} )^2 \leq \lambda_1 \sum_{j=1}^n \lambda_j (V_{ij})^2 = \lambda_1 A_{ii}
   \end{equation*}
   where the second equality holds by orthogonality, and the inequality by positive-semi-definiteness.
\hfill$\square$
\end{proof}

\begin{remark} \label{Zero-Remark}
    For $A \succeq 0$, where $\lambda_n = 0$, then $A_{ii}$ may take the value $0$. When $A_{ii}$ does take this value, by Lemma \ref{Column bound lemma} we have that $\bm{A}_i = \bm{0}$ and thus all diagonal zeroes will be found exactly by either the Rademacher or Gaussian estimator.
\end{remark}
 Now suppose $A\succ 0$, such that $\lambda_n \neq 0$, then by equation \eqref{Diagonal column bound equation} and Lemma \ref{Column bound lemma}, for sufficient $s$ we have
\begin{equation} \label{kappa2 derivation}
    \begin{split}
        |D_i^s - A_{ii}|^2 &\leq \varepsilon^2(\lambda_1 A_{ii} - A_{ii}^2)\\
        & = \varepsilon^2\Big(\frac{\lambda_1}{A_{ii}} - 1\Big)A_{ii}^2\\
        &\leq \varepsilon^2(\kappa_2(A) - 1)A_{ii}^2.
    \end{split}
\end{equation}
Then, setting $\bar{\varepsilon}^2 = \varepsilon^2/(\kappa_2(A) - 1)$, we shall return to a relative $(\varepsilon,\delta)$-approximator provided
\begin{equation} \label{query kappa bound}
    s > O\Bigg(\frac{(\kappa_2(A) - 1)\log(1/\delta)}{\bar{\varepsilon}^2}\Bigg).
\end{equation}
Extending this to all diagonal elements, we have the following lemma.
\begin{lemma} \label{conditioning lemma}
For
\begin{equation*}
    \Pr\Big(\|\bm{D}^s - \bm{A}_d\|_2^2 \leq \bar{\varepsilon}^2 \|\bm{A}_d\|_2^2 \Big) \geq 1-\delta 
\end{equation*}
we have that
\begin{equation} \label{query kappa bound diag}
    s > O\Bigg(\frac{(\kappa_2(A) - 1)\log(n/\delta)}{\bar{\varepsilon}^2}\Bigg)
\end{equation}
is a sufficient number of queries for both the Rademacher or Gaussian estimator.
\end{lemma}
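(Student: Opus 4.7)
The plan is to combine the row-wise union bounds from Corollaries \ref{Gaussian Union Theorem} and \ref{Rad Union Theorem} with the element-wise column-norm estimate already carried out in equation \eqref{kappa2 derivation}. The key observation is that Lemma \ref{Column bound lemma} is entirely deterministic: it converts each off-diagonal row mass $\|\bm{A}_i\|_2^2 - A_{ii}^2$ into a multiple of $A_{ii}^2$, and so can be inserted directly into either of the probabilistic union-bound statements to trade $\|A\|_F^2 - \|\bm{A}_d\|_2^2$ for $(\kappa_2(A) - 1)\|\bm{A}_d\|_2^2$ without disturbing the probability $1-\delta$.

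First I would invoke whichever of Corollary \ref{Gaussian Union Theorem} or Corollary \ref{Rad Union Theorem} corresponds to the estimator in use. Both give, with probability at least $1-\delta$,
\begin{equation*}
\|\bm{D}^s - \bm{A}_d\|_2^2 \;\leq\; \varepsilon^2 \sum_{i=1}^n \bigl(\|\bm{A}_i\|_2^2 - A_{ii}^2\bigr),
\end{equation*}
provided $s > O(\log(n/\delta)/\varepsilon^2)$. Next I would apply Lemma \ref{Column bound lemma} entry-wise: since $\|\bm{A}_i\|_2^2 \leq \lambda_1 A_{ii}$ and $A_{ii} \geq \lambda_n > 0$, the chain from \eqref{kappa2 derivation} gives $\|\bm{A}_i\|_2^2 - A_{ii}^2 \leq (\kappa_2(A)-1)A_{ii}^2$ for each $i$. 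Summing yields the deterministic inequality $\sum_{i=1}^n(\|\bm{A}_i\|_2^2 - A_{ii}^2) \leq (\kappa_2(A)-1)\|\bm{A}_d\|_2^2$, which I would substitute into the previous high-probability bound to obtain $\|\bm{D}^s - \bm{A}_d\|_2^2 \leq \varepsilon^2(\kappa_2(A)-1)\|\bm{A}_d\|_2^2$.

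The final step is a reparametrisation: set $\bar\varepsilon^2 = \varepsilon^2(\kappa_2(A)-1)$, equivalently $\varepsilon^2 = \bar\varepsilon^2/(\kappa_2(A)-1)$, and substitute into $s > O(\log(n/\delta)/\varepsilon^2)$ to read off the claimed sufficient query count $s > O\bigl((\kappa_2(A)-1)\log(n/\delta)/\bar\varepsilon^2\bigr)$. The same argument works verbatim for both estimators since the only probabilistic input differs only by the constant in front of $\log(n/\delta)/\varepsilon^2$.

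I do not expect a serious obstacle here, since all the hard work has been done by the union-bound corollaries and Lemma \ref{Column bound lemma}. The main points that require a little care are bookkeeping rather than technique: strictly speaking the bound is only meaningful when $A \succ 0$ (otherwise $\kappa_2(A) = \infty$), but Remark \ref{Zero-Remark} shows that any index with $A_{ii} = 0$ has $\bm{A}_i = \bm{0}$ and is estimated exactly, so those indices contribute nothing to either side and the argument may be restricted to the principal submatrix on the support of $\bm{A}_d$. Also, the Gaussian corollary requires $\varepsilon \in (0,1]$; this is automatically inherited because the substitution scales $\bar\varepsilon$ down by $1/\sqrt{\kappa_2(A)-1} \leq 1$ (for $\kappa_2(A) \geq 2$), and for $\kappa_2(A) < 2$ the original bound is already sharper than claimed.
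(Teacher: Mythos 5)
Your proposal is correct and follows essentially the same route as the paper: the element-wise bound $\|\bm{A}_i\|_2^2 - A_{ii}^2 \leq (\kappa_2(A)-1)A_{ii}^2$ from Lemma \ref{Column bound lemma} (via \eqref{kappa2 derivation}), combined with the union-bound corollaries and a reparametrisation of $\varepsilon$. Your extra care about the degenerate case $\lambda_n = 0$ and the Gaussian constraint $\varepsilon \in (0,1]$ is bookkeeping the paper itself elides, so there is nothing to fault.
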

\noindent Of course, this analysis produces a weaker sufficiency bound\footnote{Due to the inequalities applied, we expect to arrive at sufficiency before $s$ as in \eqref{query kappa bound diag}.} on $s$ than either of those in Table \ref{tab:initial summary table}, but it is more likely to be practically computable (e.g. via few steps of the Lanczos algorithm) or perhaps even known for implicit $A$.

 Lemma \ref{conditioning lemma} suggests roughly that the conditioning of the matrix indicates its ability to be estimated. For a given $s$, if $\kappa_2(A)$ is larger, error estimates are likely to be worse, but as $\kappa_2(A)$ approaches 1 they become better and better. This is intuitive, since if $\kappa_2(A) = 1$, then the matrix $A$ shall be a scalar multiple of the identity: $A = V(c I)V^T = c I$ for some scalar $c$. Hence the lower bound on $s$ becomes $0$ and exact estimates are found after a single query. More generally, if $\kappa_2(A) \ll n$, then the bounds \eqref{query kappa bound} and \eqref{query kappa bound diag} are good, and stochastic estimation is suitable to estimate the diagonal. Conversely, if $\kappa_2(A) = O(n)$ or greater, then clearly this bound is no longer helpful since we seek $s \ll n$ queries.

 However, if the bound of Lemma \ref{Column bound lemma} is loose, and line three of equation \eqref{kappa2 derivation} is loose, then the bounds \eqref{query kappa bound} and \eqref{query kappa bound diag} are too pessimistic. To partly rectify this, and extend these ideas for general $A\succeq 0$, we stop at the second line of \eqref{kappa2 derivation} to write, with $c_i(A) := \lambda_1/A_{ii}$
\begin{equation*}
    |D_i^s - A_{ii}|^2 \leq \varepsilon^2(c_i(A) - 1)A_{ii}^2
\end{equation*}
where $c_i(A)$ is best considered as the ``conditioning" of a diagonal element. Note again that in this analysis we are considering $A_{ii} \neq 0$, since we already know that if $A_{ii} = 0$ it will be found exactly. Let us define
\begin{equation}  \label{kappa-i intro}
 \kappa_d(A) := \frac{\lambda_1}{\min_{i,A_{ii}\neq 0}A_{ii}}
\end{equation}
as the conditioning of the diagonal. So we have, 
\begin{equation*}
    |D_s^i - A_{ii}|^2 \leq \varepsilon^2(\kappa_d(A) - 1)A_{ii}^2
\end{equation*}
and the same ideas as for $\kappa_2(A)$ may be applied. In particular, note that we shall obtain a relative $(\varepsilon,\delta)$-approximator if, with $\bar{\varepsilon}^2 = \varepsilon^2/(\kappa_d(A) - 1)$
\begin{equation}
    s > O\Bigg(\frac{(\kappa_d(A) - 1)\log(1/\delta)}{\bar{\varepsilon}^2}\Bigg)
\end{equation}
and, extending this to the diagonal elements, we readily obtain the following lemma.
\begin{lemma} \label{diagonal conditioning lemma}
For 
\begin{equation*}
    \Pr\Big(\|\bm{D}^s - \bm{A}_d\|_2^2 \leq \bar{\varepsilon}^2 \|\bm{A}_d\|_2^2\Big) \geq 1 - \delta
\end{equation*}
we have that
\begin{equation}
    s > O\Bigg(\frac{(\kappa_d(A) - 1)\log(n/\delta)}{\bar{\varepsilon}^2}\Bigg)
\end{equation}
is a sufficient number of queries for both Rademacher and Gaussian estimators.
\end{lemma}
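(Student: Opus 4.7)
The plan is to chain together the two building blocks already established in the excerpt: (i) the union-bound corollaries (Corollary \ref{Gaussian Union Theorem} for Gaussian and Corollary \ref{Rad Union Theorem} for Rademacher) which give a simultaneous row-dependent bound across all $i$, and (ii) Lemma \ref{Column bound lemma} together with Remark \ref{Zero-Remark}, which control the row-norm quantity $\|\bm{A}_i\|_2^2 - A_{ii}^2$ in terms of $A_{ii}^2$ and $\kappa_d(A)$. The only algebraic ingredient beyond these is a rescaling of $\varepsilon$, exactly as was done element-wise to obtain Lemma \ref{conditioning lemma} from \eqref{kappa2 derivation}.

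First, I would invoke Corollaries \ref{Gaussian Union Theorem} and \ref{Rad Union Theorem} with parameter $\varepsilon$ (to be chosen) to assert that, for $s > O(\log(n/\delta)/\varepsilon^2)$, with probability at least $1-\delta$ one has
\begin{equation*}
\|\bm{D}^s - \bm{A}_d\|_2^2 = \sum_{i=1}^n (D_i^s - A_{ii})^2 \;\leq\; \varepsilon^2 \sum_{i=1}^n \bigl(\|\bm{A}_i\|_2^2 - A_{ii}^2\bigr).
\end{equation*}
Next, I bound the right-hand sum term-by-term. For indices $i$ with $A_{ii} \neq 0$, Lemma \ref{Column bound lemma} gives $\|\bm{A}_i\|_2^2 \leq \lambda_1 A_{ii}$, whence
\begin{equation*}
\|\bm{A}_i\|_2^2 - A_{ii}^2 \;\leq\; \Bigl(\frac{\lambda_1}{A_{ii}}-1\Bigr)A_{ii}^2 \;\leq\; (\kappa_d(A)-1)\,A_{ii}^2,
\end{equation*}
using the definition \eqref{kappa-i intro}. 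For indices with $A_{ii}=0$, Remark \ref{Zero-Remark} shows $\bm{A}_i = \bm{0}$, so the term contributes zero to both sides. Summing over $i$ yields
\begin{equation*}
\|\bm{D}^s - \bm{A}_d\|_2^2 \;\leq\; \varepsilon^2 (\kappa_d(A)-1) \sum_{i=1}^n A_{ii}^2 \;=\; \varepsilon^2 (\kappa_d(A)-1)\,\|\bm{A}_d\|_2^2.
\end{equation*}

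Finally, I would set $\varepsilon^2 = \bar{\varepsilon}^2/(\kappa_d(A)-1)$ so that the right-hand side equals $\bar{\varepsilon}^2 \|\bm{A}_d\|_2^2$, which gives the stated probabilistic guarantee provided $s > O((\kappa_d(A)-1)\log(n/\delta)/\bar{\varepsilon}^2)$ queries are used. This expression is valid for both estimators because the underlying union-bound corollaries produce the same $O(\cdot)$ form (differing only in leading constants, $2$ for Rademacher and $4\log_2\sqrt{2}$ for Gaussian, together with the parity restriction $\varepsilon \in (0,1]$ for the Gaussian case, which is preserved under the rescaling provided $\bar{\varepsilon}^2 \leq \kappa_d(A)-1$).

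There is no real obstacle here; this is a direct assembly of Corollaries \ref{Gaussian Union Theorem}/\ref{Rad Union Theorem}, Lemma \ref{Column bound lemma}, and the rescaling trick already used to pass from a row-dependent to a relative $(\varepsilon,\delta)$-approximator. The one subtlety worth being careful about is the handling of zero diagonal entries, but Remark \ref{Zero-Remark} dispatches this cleanly by showing those contributions vanish exactly, so they do not inflate the bound nor force a vacuous condition in the definition of $\kappa_d(A)$.
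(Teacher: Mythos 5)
Your proposal is correct and follows essentially the same route as the paper: an element-wise application of Lemma \ref{Column bound lemma} to replace $\|\bm{A}_i\|_2^2 - A_{ii}^2$ by $(\kappa_d(A)-1)A_{ii}^2$, a union bound over all $n$ entries (giving the $\log(n/\delta)$ factor), summation, and the rescaling $\varepsilon^2 = \bar{\varepsilon}^2/(\kappa_d(A)-1)$. Your explicit handling of the $A_{ii}=0$ terms via Remark \ref{Zero-Remark} is a nice touch that the paper only gestures at.
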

\noindent So if $\kappa_d(A) \ll n$ then stochastic estimation is suitable to estimate the diagonal of such a matrix. Since $\kappa_2(A) \geq \kappa_d(A)$, this demand is easier to satisfy than demanding that $\kappa_2(A) \ll n$.

\subsection{Spectrum influence - the entire diagonal}
\noindent So far we have built our understanding by considering single element estimates, and examining their implications for the whole diagonal. We can gain more information by examining the inequality 
\begin{equation}
    \|\bm{D}^s - \bm{A}_d\|_2^2 \leq \varepsilon^2 (\|A\|_F^2 - \|\bm{A}_d\|_2^2)
\end{equation}
since this accounts for the entire matrix structure. This holds, with probability at least $1-\delta$ if $s > O\Big(\log(n/\delta)/\varepsilon^2\Big)$. Recall that, for $A\succeq 0$, $\|A\|_F^2 = \|\bm{\lambda}\|_2^2$, so we have
\begin{equation} \label{Ad and lambda}
    \|\bm{D}^s -\bm{A}_d\|_2^2 \leq \varepsilon^2\Big(\|\bm{\lambda}\|_2^2 - \|\bm{A}_d\|_2^2\Big).
\end{equation}
Now, we also have for $A\succeq 0$, that 
\begin{equation*}
        \|\bm{A}_d\|_2^2 \geq \frac{1}{n}\|\bm{A}_d\|_1^2 = \frac{1}{n}\|\bm{\lambda}\|_1^2
\end{equation*}
which means that 
\begin{equation} \label{lambda2 vs lambda1}
    \|\bm{D}^s - \bm{A}_d\|_2^2 \leq \varepsilon^2\Big(\|\bm{\lambda}\|_2^2 - \frac{1}{n}\|\bm{\lambda}\|_1^1\Big)
\end{equation}
and we can once again see the influence of the eigenvalues on the error of our estimates.

 This equation, however, is dependent on the entire spectrum, in comparison to only $\kappa_2(A)$ or $\kappa_d(A)$ as was previously found. As expected, the conditioning of a matrix does not tell the whole story. If $\bm{\lambda} = [1, 1, \hdots, 1, \epsilon]$ the condition number of the matrix in question may be large, even $O(n)$, but equation \eqref{lambda2 vs lambda1} suggests that the scaling of the absolute error will be acceptable. Clearly, a flat spectrum, such as $\bm{\lambda} = [1, 1, ..., 1]\in\mathbb{R}^n$, corresponding to the identity, yields a right hand side of zero. Meanwhile, the value of $\|\bm{\lambda}\|_2^2 - \|\bm{\lambda}\|_1^2/n$ is greatest when mass is concentrated on a single eigenvalue.

 For completeness, note that the reverse is true of the mass of the diagonal, since 
\begin{equation*}
    \begin{split}
        \|\bm{\lambda}\|_2^2 &\leq \|\bm{\lambda}\|_1^2 = \|\bm{A}_d\|_1^2    
    \end{split}
\end{equation*}
and so 
\begin{equation} \label{hint}
    \|\bm{D}^s - \bm{A}_d\|_2^2 \leq \varepsilon^2\Big(\|\bm{A}_d\|_1^2 - \|\bm{A}_d\|_2^2\Big)
\end{equation}
which is worse as diagonal elements become flatter but better as most diagonal mass is concentrated on a handful of entries.

\subsection{Eigenvector influence}
Equation \eqref{hint} hints at the notion that it is not only the spectrum of $A$ that influences the error of diagonal estimation. Indeed, if we have a very steep spectrum, but the eigenvectors of $A$ are columns of the identity, that is $V = I$, then we shall still have a diagonal matrix and so expect one-shot estimation. Consider the following.
\begin{lemma} \label{V hadamard lemma}
    Let $A\in\mathbb{R}^{n\times n}$ be a diagonalisable matrix, $A = V\Lambda V^T$, with $\bm{\lambda} = \textnormal{diag}(\Lambda)$, and $\bm{A}_d = \textnormal{diag}(A)$, then
    \begin{equation*}
        \bm{A}_d = (V \odot V)\bm{\lambda}
    \end{equation*}
    where $V\odot V$, is the Hadamard (element-wise) product of $V$ with itself.
\end{lemma}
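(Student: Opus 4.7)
The plan is to prove this by a direct component-wise calculation. I would start by writing out the $(i,i)$ entry of $A$ using the eigendecomposition $A = V\Lambda V^T$. Expanding the triple product in index notation gives
\[
A_{ii} = \sum_{j=1}^n \sum_{k=1}^n V_{ij} \Lambda_{jk} V_{ik}.
\]
Since $\Lambda$ is diagonal, only the $k = j$ terms survive, so $\Lambda_{jk} = \lambda_j \delta_{jk}$ collapses the double sum to a single sum.

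Next, I would observe that the surviving expression $\sum_j V_{ij} \lambda_j V_{ij} = \sum_j V_{ij}^2 \lambda_j$ is precisely the $i$-th component of the matrix-vector product of the element-wise square of $V$ with the eigenvalue vector. More precisely, $(V \odot V)_{ij} = V_{ij}^2$ by definition of the Hadamard product, so $\sum_j (V \odot V)_{ij} \lambda_j = \big((V\odot V)\bm{\lambda}\big)_i$. Since this equality holds for each $i = 1, \ldots, n$, stacking the components yields $\bm{A}_d = (V \odot V)\bm{\lambda}$, which is exactly the claim.

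There is essentially no obstacle here; the result is a routine manipulation of indices, and the only thing worth emphasising is the identification of $\sum_j V_{ij}^2 \lambda_j$ as the $i$-th entry of $(V \odot V)\bm{\lambda}$, which justifies the compact Hadamard-product form. No assumption on $A$ beyond diagonalisability (in particular, no sign or symmetry assumption on $\bm{\lambda}$) is used in the argument.
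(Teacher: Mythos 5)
Your proof is correct and follows the same route as the paper, which simply recalls $A_{ii} = \sum_{j=1}^n \lambda_j (V_{ij})^2$ and extends to the matrix equation; you have just made the index collapse and the identification $(V\odot V)_{ij} = V_{ij}^2$ explicit. No issues.
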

\begin{proof}
    The above is easily found by recalling
    \begin{equation*}
        A_{ii} = \sum_{j=1}^n \lambda_j (V_{ij})^2 \quad \forall \; i
    \end{equation*}
    and simply extending to the matrix equation.
\hfill$\square$
\end{proof}
\noindent Hence, Lemma \ref{V hadamard lemma} and equation \eqref{Ad and lambda} imply, for $s > O\Big(\log(n/\delta)/\varepsilon^2\Big)$, \\
\begin{equation*}
    \begin{split}
        \|\bm{D}^s -\bm{A}_d\|_2^2 &\leq \varepsilon^2\Big(\|\bm{\lambda}\|_2^2 - \|(V\odot V)\bm{\lambda}\|_2^2\Big)\\
        &\leq\varepsilon^2\Big(1 - \sigma^2_{\min}(V\odot V)\Big)\|\bm{\lambda}\|_2^2.
    \end{split}
\end{equation*}
If we write
\begin{equation*}
    \varepsilon^2 = \bar{\varepsilon}^2\frac{\|\bm{A}_d\|_2^2}{\Big(1 - \sigma^2_{\min}(V\odot V)\Big)\|\bm{\lambda}\|_2^2}
\end{equation*}
so we may readily obtain the following lemma.
\begin{lemma}
For
\begin{equation*}
    \|\bm{D}^s - \bm{A}_d\|_2^2 \leq \bar{\varepsilon}^2\|\bm{A}_d\|_2^2
\end{equation*}
we have that
\begin{equation} \label{sigma min bound}
    s > O\Bigg(\frac{\Big(1 - \sigma^2_{\min}(V\odot V)\Big)\|\bm{\lambda}\|_2^2}{\|\bm{A}_d\|_2^2}\cdot\frac{\log(n/\delta)}{\bar{\varepsilon}^2}\Bigg)
\end{equation}
is a sufficient number of queries.
\end{lemma}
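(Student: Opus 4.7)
The plan is to chain together three ingredients that have already been established in the excerpt: the full-diagonal row-dependent bound (Corollaries \ref{Gaussian Union Theorem} and \ref{Rad Union Theorem}), the identity $\|A\|_F^2 = \|\bm{\lambda}\|_2^2$ valid for $A \succeq 0$, and Lemma \ref{V hadamard lemma} which represents $\bm{A}_d$ through the Hadamard square of the eigenvector matrix. The calculation to bridge these, in fact, has largely been carried out in the display immediately preceding the lemma; the statement is essentially the consequence of rearranging the tolerance parameter, so the proof should be short.

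First I would invoke the Gaussian (or Rademacher) union corollary to assert that with probability at least $1-\delta$, provided $s > O(\log(n/\delta)/\varepsilon^2)$, one has
\begin{equation*}
\|\bm{D}^s - \bm{A}_d\|_2^2 \;\leq\; \varepsilon^2\bigl(\|A\|_F^2 - \|\bm{A}_d\|_2^2\bigr) \;=\; \varepsilon^2\bigl(\|\bm{\lambda}\|_2^2 - \|\bm{A}_d\|_2^2\bigr),
\end{equation*}
where the equality uses $A \succeq 0$. Next, I would apply Lemma \ref{V hadamard lemma} to write $\bm{A}_d = (V\odot V)\bm{\lambda}$ and then use the standard singular-value lower bound
\begin{equation*}
\|\bm{A}_d\|_2^2 \;=\; \|(V\odot V)\bm{\lambda}\|_2^2 \;\geq\; \sigma_{\min}^2(V\odot V)\,\|\bm{\lambda}\|_2^2,
\end{equation*}
yielding
\begin{equation*}
\|\bm{D}^s - \bm{A}_d\|_2^2 \;\leq\; \varepsilon^2\bigl(1 - \sigma_{\min}^2(V\odot V)\bigr)\|\bm{\lambda}\|_2^2,
\end{equation*}
which is exactly the intermediate inequality displayed in the excerpt just before the lemma.

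Finally, to convert this into a relative bound against $\|\bm{A}_d\|_2$ with tolerance $\bar\varepsilon$, I would substitute
\begin{equation*}
\varepsilon^2 \;=\; \bar\varepsilon^2 \,\frac{\|\bm{A}_d\|_2^2}{\bigl(1-\sigma_{\min}^2(V\odot V)\bigr)\|\bm{\lambda}\|_2^2}
\end{equation*}
into the $O(\log(n/\delta)/\varepsilon^2)$ query count; the claimed bound on $s$ then drops out immediately, and the failure probability is still at most $\delta$ by the union corollary. There is essentially no obstacle here since every step is algebraic rearrangement of results already proved; the only mild point of care is ensuring that the rescaled $\varepsilon$ still satisfies any hypothesis (e.g.\ $\varepsilon\in(0,1]$ in the Gaussian case), which requires $\bar\varepsilon^2\|\bm{A}_d\|_2^2 \leq (1-\sigma_{\min}^2(V\odot V))\|\bm{\lambda}\|_2^2$; this is the only spot where a brief remark or hypothesis is warranted, and it is automatically satisfied in the Rademacher case where $\varepsilon$ is arbitrary.
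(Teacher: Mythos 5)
Your proposal is correct and follows essentially the same route as the paper: the paper also starts from the full-diagonal bound $\|\bm{D}^s-\bm{A}_d\|_2^2 \leq \varepsilon^2(\|\bm{\lambda}\|_2^2 - \|(V\odot V)\bm{\lambda}\|_2^2)$, applies the singular-value lower bound to reach $\varepsilon^2(1-\sigma_{\min}^2(V\odot V))\|\bm{\lambda}\|_2^2$, and makes the identical substitution for $\varepsilon^2$. Your closing remark about checking that the rescaled $\varepsilon$ stays in $(0,1]$ for the Gaussian estimator is a point of care the paper does not make explicit, but it does not alter the argument.
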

\noindent This returns our prediction for one-shot estimation if $V = I$. In this case we find $\sigma_{\min}(V\odot V) = \sigma_{\min}(I) = 1$ and the bound \eqref{sigma min bound} is zero, so $s = 1$ is sufficient to recover the diagonal exactly. Further analysis of the value of $\sigma_{\min}(V\odot V)$ is complex and outside the scope of this paper. We point the interested reader to \cite{ando1987singular} for further investigation.

\subsection{Links with Hutchinson's estimator variance}
Lastly, once again let us turn our attention to the quantity:
\begin{equation} \label{Hutchinson's Variance Quantity}
    \|A\|_F^2 - \|\bm{A}_d\|_2^2
\end{equation}
and recall that, but for a factor of 2, this is in fact the variance of Hutchinson's estimator for the trace. A variety of papers suggest particular variance reduction schemes for Hutchinson's estimator, focused on reducing \eqref{Hutchinson's Variance Quantity}. Certain approaches take advantage of the sparsity of the matrix $A$ \cite{stathopoulos2013hierarchical,tang2011domain} (by a process often called \emph{probing}), whilst others still take a ``decomposition" approach \cite{adams2018estimating}. Perhaps the most promising approach in this case is that taken by Gambhir et.\ al \cite{gambhir2017deflation}, Meyer et.\ al \cite{meyer2021hutch++} and Lin \cite{lin2017randomized}. In these papers the approach is to form a decomposition by approximate projection onto the top eigenspace of $A$, using the matrix $Q\in\mathbb{R}^{n\times r}$ designed to capture most of the column space of $A$. This matrix roughly spans the top eigenspace of $A$. In particular, \cite{gambhir2017deflation} and \cite{lin2017randomized} justify this approach for estimating the trace when $A$ is low rank as $\tr(QQ^TA) = \tr(Q^T A Q)$ will capture most of the trace of $A$. A moment's thought suggests that this method may be adapted to estimate the whole diagonal, since $Q Q^T A$ or indeed $Q Q^T A Q Q^T$ are randomised low-rank approximations to the matrix $A$ \cite{halko2011finding}. We might do well to find the exact diagonals of either $QQ^TA$ or $QQ^T AQQ^T$ and then estimate the diagonal of the remainder of the matrix: $(I - QQ^T)A$ or $(I - QQ^T) A (I - QQ^T)$. We expand further on this idea in Section \ref{Improved diagonal estimation} extending the results and analysis of \cite{meyer2021hutch++}.

\section{Numerical Experiments} \label{Numerical Experiments on the diagonal}
\subsection{Single element estimates}
We first examine the Rademacher diagonal estimator and explore the implications of equations \eqref{single element rademacher probability} and \eqref{single element rademacher queries}, since the sufficient query bound for such an estimator is likely the tightest derived thus far. Moreover, the value of $\varepsilon$ in Theorem \ref{Rademacher Theorem} is arbitrary, so any theoretical bound ought to hold for all values $s$, at all values of $\delta$. For different diagonal elements we expect the same rates of convergence, but different scaling, according to the value of $(\|\bm{A}\|_2^2 - A_{ii}^2)/A_{ii}^2$ for each $i$.

 This is exactly what is found in practice, as shown in Figure \ref{fig:Single Diagonal Estimate Rademacher}. This figure shows the relative convergence of estimates of selected diagonal elements of the real world matrix ``Boeing msc10480" from the SuiteSparse matrix collection\footnote{Formerly the University of Florida matrix collection.} \cite{davis2011university}. These elements were chosen to be displayed as they have large differences in their values of $(\|\bm{A}\|_2^2 - A_{ii}^2)/A_{ii}^2$, and so are most easily resolved in the plots of relative error. As expected, the larger the value of $(\|\bm{A}\|_2^2 - A_{ii}^2)/A_{ii}^2$, the greater the error after a given number of queries.

 For a fixed value of $\delta$ and a given matrix entry, the ratio of the numerical error and theoretical bound is roughly fixed - the convergence of the numerical experiments mirrors that of the theoretical convergence bound. Thus the theoretical bound is tight in $\varepsilon$, as numerical error also scales with $O(1/\varepsilon^2)$. In addition, for decreasing $\delta$, the bound to observation ratio can be seen to decrease, suggesting that the query bound is tight in $\delta$, since we would have otherwise expected it to grow. 

\begin{figure}[!htb]
    \centering\begin{tabular}{cc}
        \hspace{-1.2cm}\includegraphics[width=60mm]{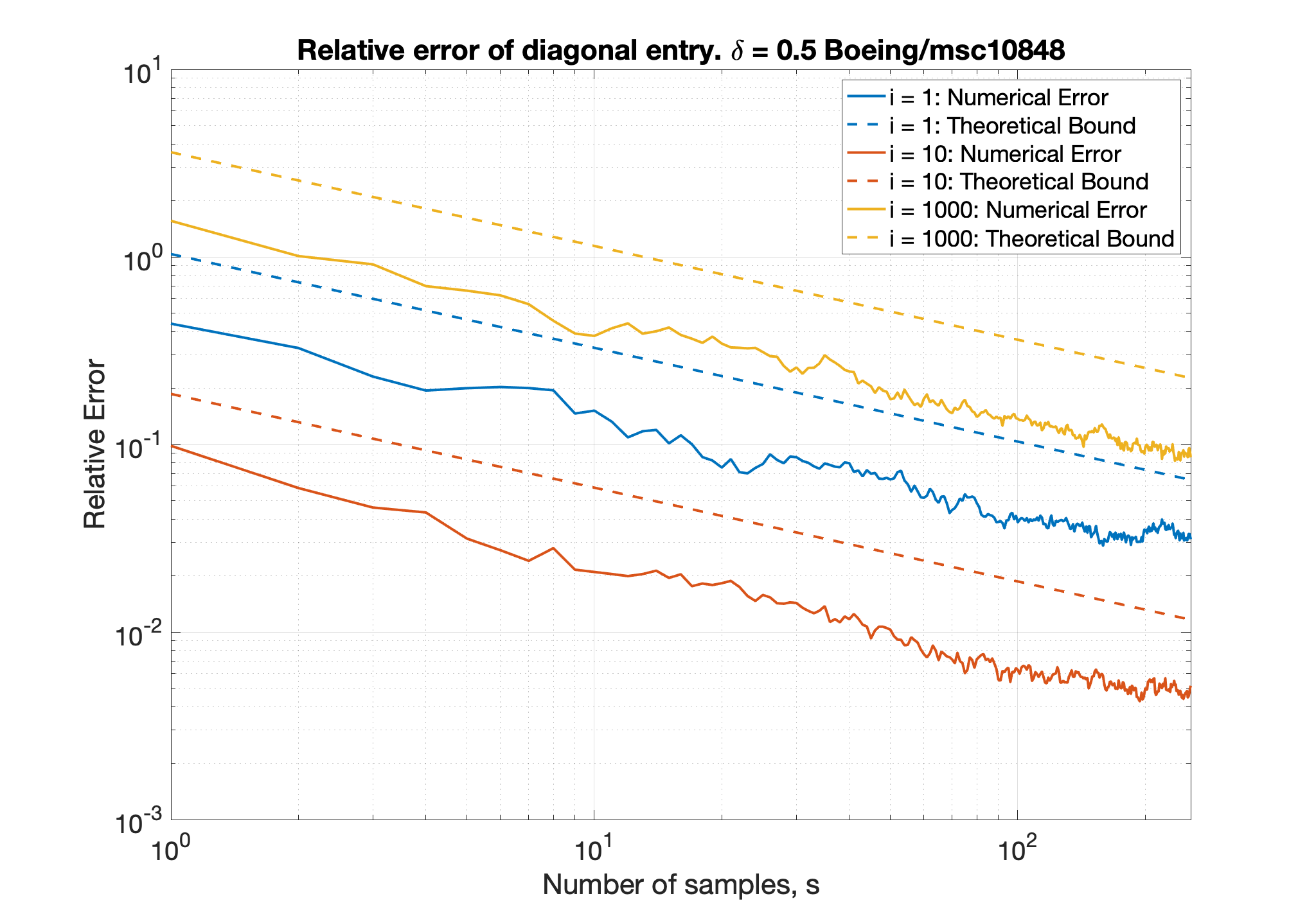} &  \includegraphics[width=60mm]{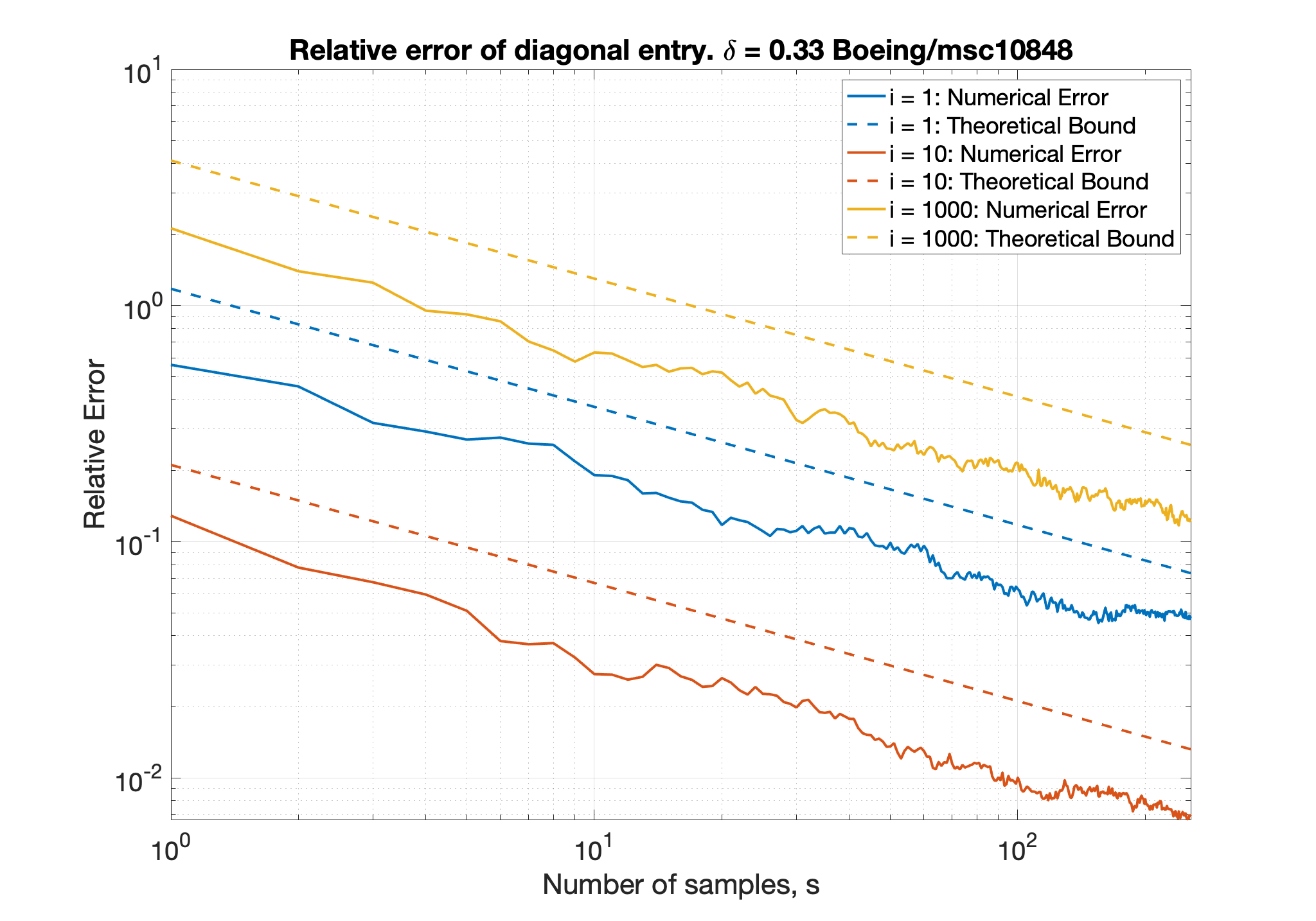} \\
        \hspace{-1.2cm}\includegraphics[width=60mm]{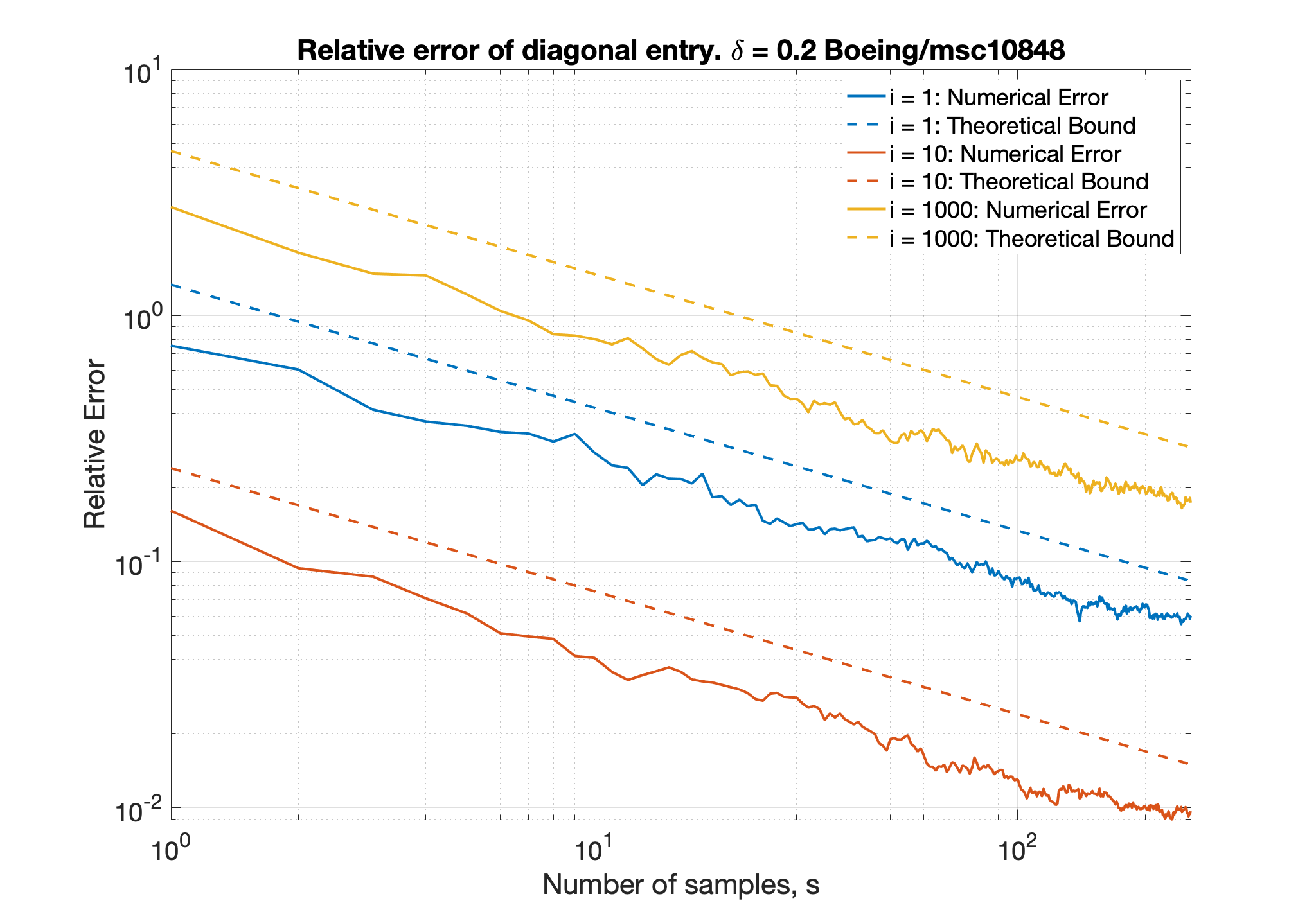}  & \includegraphics[width=60mm]{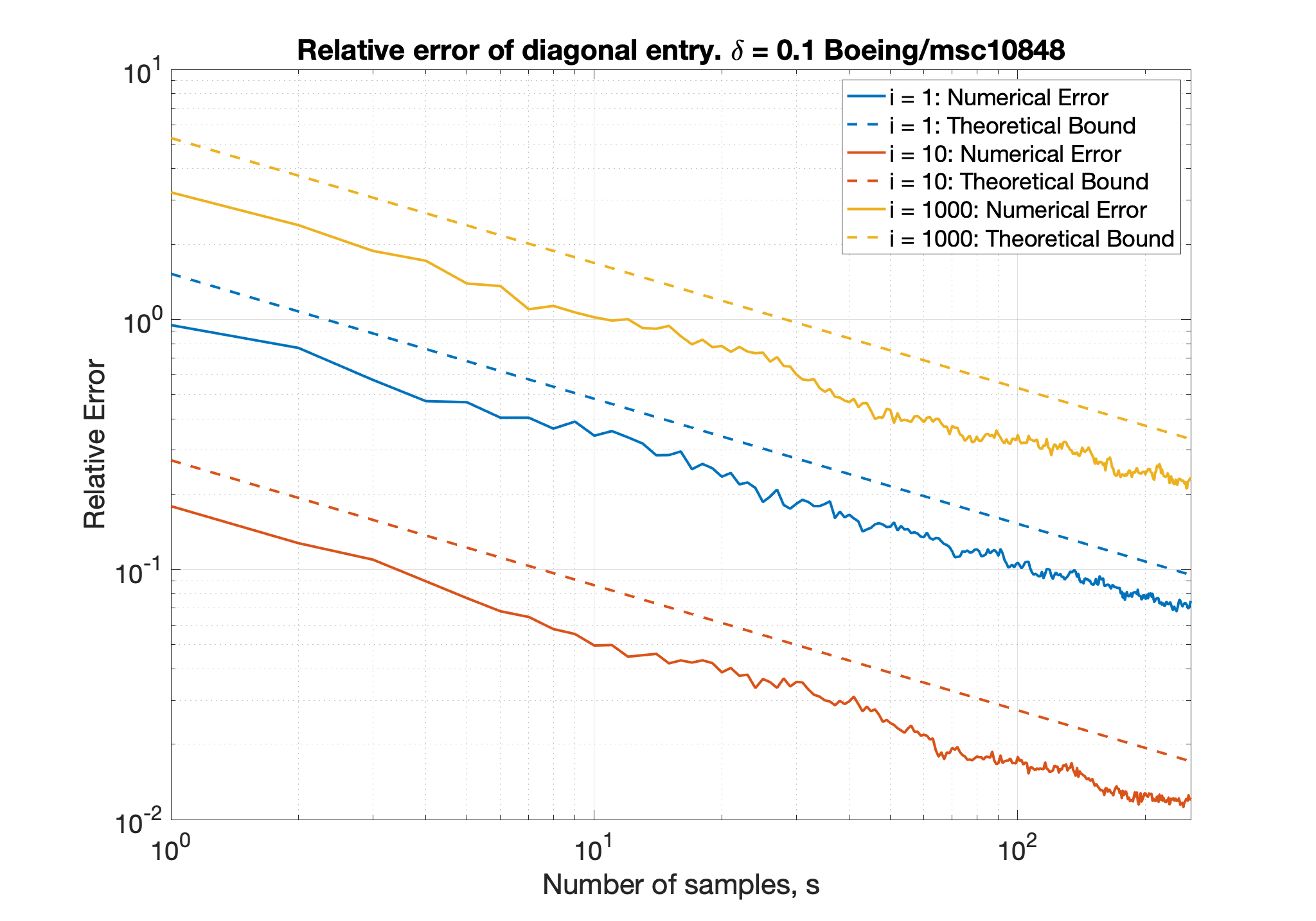} 
    \end{tabular}
    
    \caption{\textit{The convergence of Rademacher estimates of selected diagonal elements of the matrix ``Boeing/msc10480" from the SuiteSparse matrix collection. The numerical errors are shown in solid, and the theoretical bounds in dash. For $100$ trials, we plot the median error (top left), the $67^{th}$ percentile error (top right), the $80^{th}$ percentile error (bottom left) and the $90^{th}$ percentile error (bottom right), along with the corresponding theoretical error bounds for the associated value of $\delta$. The constants $(\|\bm{A}_i\|_2^2 - A_{ii}^2)/A_{ii}^2$ associated with each index are found as $0.3868$, $0.0125$ and $4.7154$ for $i$ equal to $1$, $10$ and $1000$ respectively.}}
    \label{fig:Single Diagonal Estimate Rademacher}
\end{figure}

\subsection{Rademacher vs Gaussian element estimates}
\begin{table}[h!]
    \centering
    \caption{\textit{$^*$Sufficient query bounds for a row-dependent $(\varepsilon, \delta)$-approximator as found by Theorems \ref{Gaussian Theorem} and \ref{Rademacher Theorem}, with the corresponding demands on $\varepsilon$.}}
    \label{tab:Diag comparison}
    \vspace{0.2cm}
    \begin{tabular}{c|cc}
         Query vector distribution & Rademacher &  Gaussian\\\hline
         Sufficient query bound$^*$ &$2\ln(2/\delta)/\varepsilon^2$& $4\log_2(\sqrt{2}/\delta)/\varepsilon^2$\\\hline
         Demand on $\varepsilon$& none & $\varepsilon \in (0,1]$
    \end{tabular}
%    \vspace{0.2cm}
\end{table}

\begin{figure}[!htb]
    \centering
    \begin{tabular}{cc}
         \hspace{-1.2cm}\includegraphics[width=60mm]{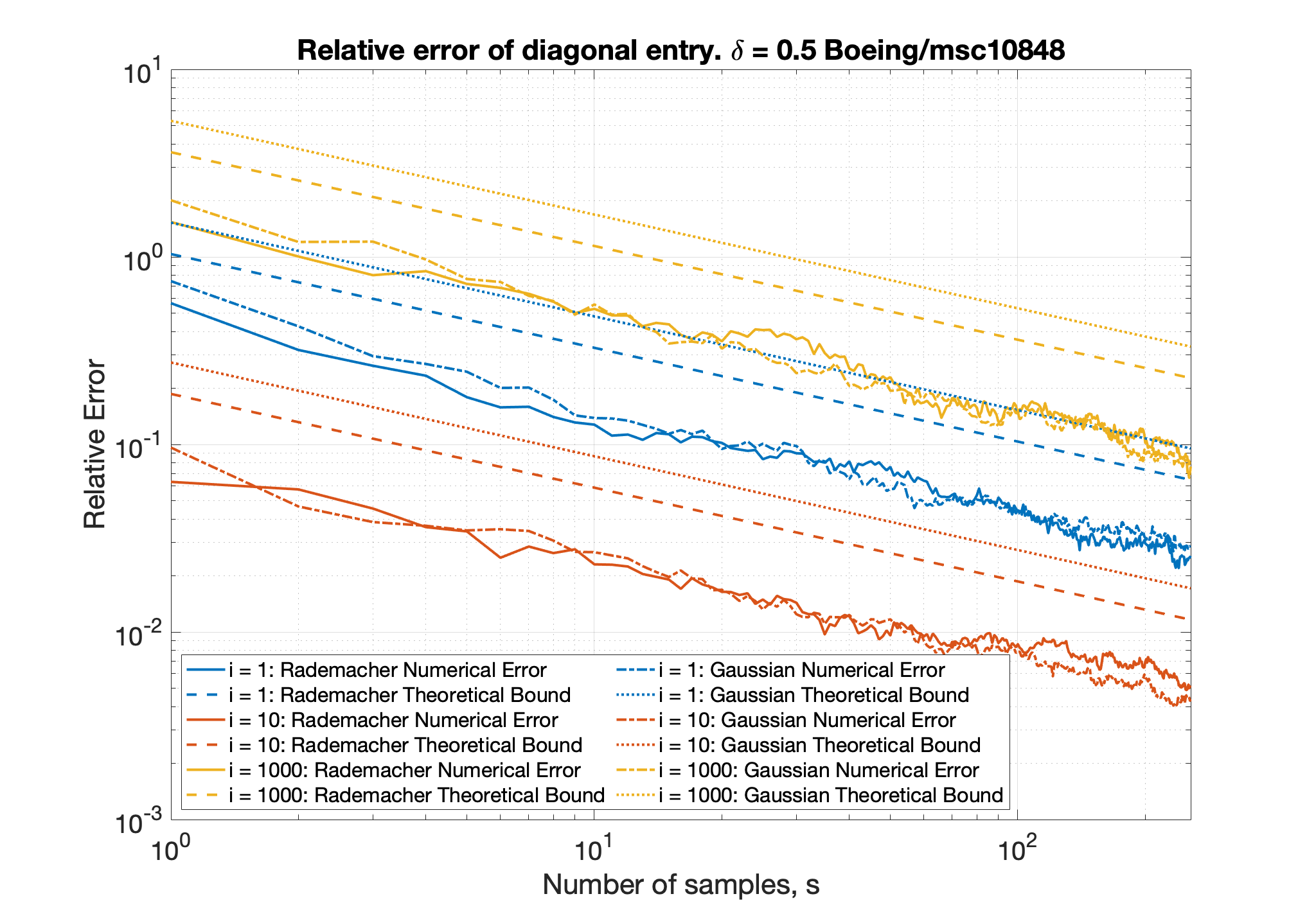}&
         \includegraphics[width=60mm]{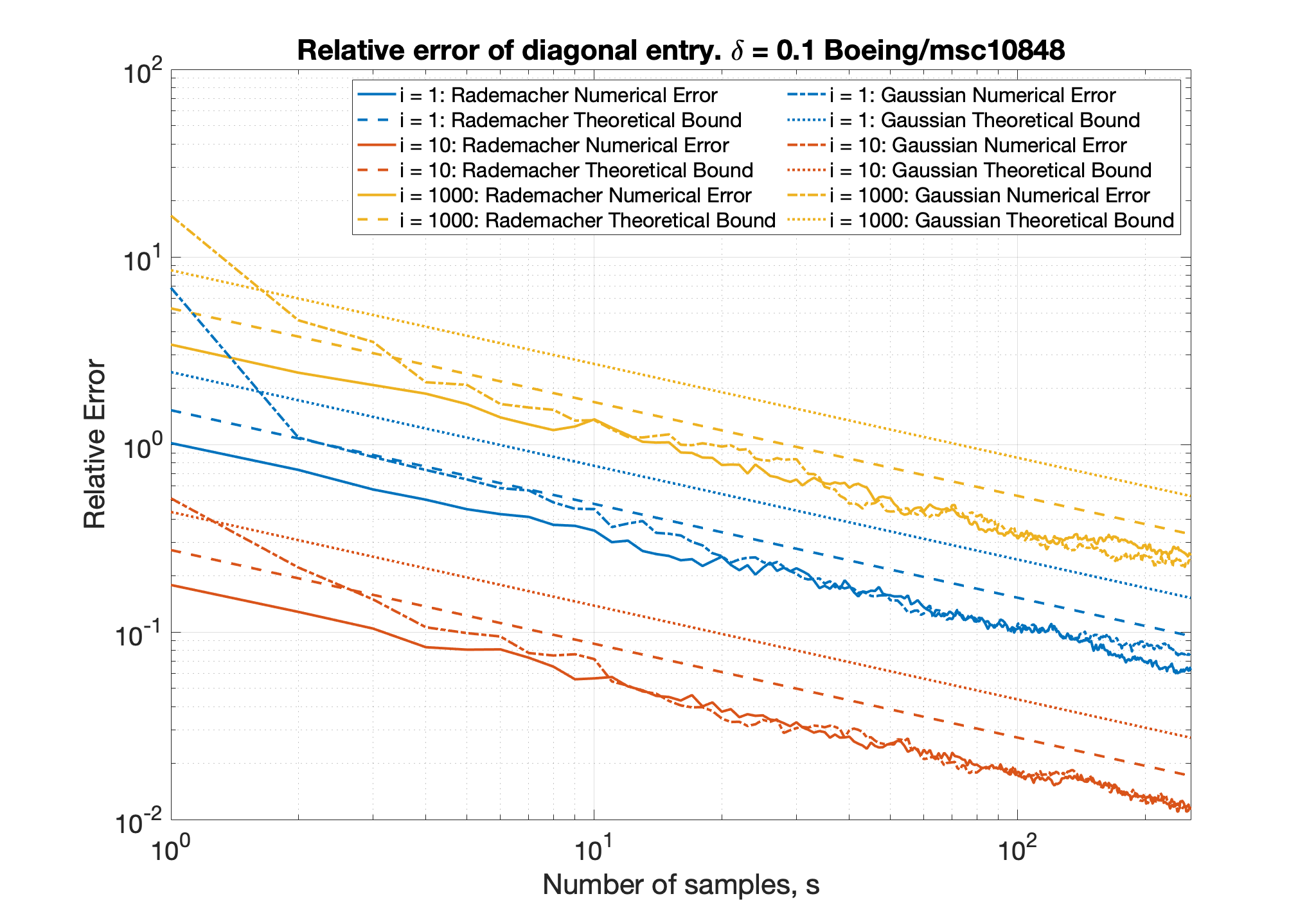}
    \end{tabular}
    \caption{\textit{Comparison of the convergence of the Rademacher and Gaussian diagonal estimators. We run 100 trials for each distribution and report the median error and the $90^{th}$ percentile error in each case, along with their respective theoretical bounds.}}
    \label{fig:Rademacher vs Gaussian estimates figure}
\end{figure}

Table \ref{tab:Diag comparison} summaries sufficient query bounds for a row-dependent $(\varepsilon,\delta)$-approximator for ease of reference. For a relative $(\varepsilon,\delta)$-approximator, we simply introduce the matrix constant $(\|\bm{A}_i\|_2^2 - A_{ii}^2)/A_{ii}^2$ in each bound. The bound for the Rademacher diagonal estimator is lower than that for the Gaussian diagonal estimator, so we expect to be able to reach a lower error faster with the Rademacher estimator. Note however, that since we have not tuned the proof of Theorem \ref{Gaussian Theorem} for optimal $\lambda$ (see equation \eqref{Gaussian Proof End}), the query bound might indeed be lower, but not by much. Experiments suggest that \eqref{Gaussian Proof End} is quite tight unless $s\gg \frac{1}{\epsilon^2}$.

 Figure \ref{fig:Rademacher vs Gaussian estimates figure} illustrates this comparison, again on the matrix ``Boeing/msc10480". For the median experiments, $\delta = 0.5$, the Gaussian theoretical bound contains the corresponding experimental error, but for the $90^{th}$ percentile it seems as if this is not the case. In this instance, where $\delta = 0.1$, the lowest value $s$ may take is when $\varepsilon = 1$. Hence
\begin{equation}
    s > 4\log_2(\sqrt{2}/0.1)/1^2 = 15.3
\end{equation}
so we can only expect the theoretical bound to be valid for $s\geq 16$. If only one Gaussian query is made the relative error appears to not be bounded by the theoretical limit, but this is no surprise, since Theorem \ref{Gaussian Theorem} poses the question: for given $\delta < 1$ and $\varepsilon \in (0,1]$, what is $s$ bounded by? Whereas experimentation shows us, for given $s$ and $\delta$, the value $\varepsilon$ will take\footnote{Note that plots are of relative error, but $\varepsilon$ in the context of Theorem \ref{Gaussian Theorem} refers to row-dependent $(\varepsilon, \delta)$-approximators.}. The theorem gives us a tool to find $s$, but in return asks that $\varepsilon$ is bounded\footnote{For equation \eqref{F bound} to be valid.} from above by $1$. The experiment, meanwhile, takes an input of $s$ and $\delta$ and returns $\varepsilon$.

 Whilst the theoretical bound in the right hand plot of Figure \ref{fig:Rademacher vs Gaussian estimates figure} only kicks in at $s \geq 16$, we have drawn it extending back to $s = 1$ to illustrate the fact that it does not apply for $s$ small enough. Meanwhile, as noted, the Rademacher bound holds for arbitrary $\varepsilon$. The effect is further manifested by the fact that with $\delta = 0.1$, the Rademacher estimator is more accurate for sufficiently small $s$, roughly $s \lesssim 10$.

\subsubsection{Intuitions for performance with small $s$}
\noindent To gain more intuition for why the Rademacher diagonal estimator is better than the Gaussian diagonal estimator for small $s$, it is useful to compare the variances of individual elements for the two estimators. From equation \eqref{Single element variance Rademacher}, the variance of the Radamacher estimate is
\begin{equation}
    \Var(R_i^s) = \frac{1}{s}\sum_{j\neq i}A_{ij}^2,
\end{equation}
whilst in the Gaussian case for $s>2$, we find
\begin{equation} \label{Gaussian Variance}
    \Var(G_i^s) = \frac{1}{s - 2}\sum_{j\neq i} A_{ij}^2
\end{equation}
but for $s \leq 2$, the variance is not bounded. For a given $i$, this is found in the following manner
\begin{equation*}
    \begin{split}
        \Var(G_i^s) &= \Var(G_i^s - A_{ii})\\
        &= \Var\Bigg(\mathlarger{\sum_{j \neq i}^n} A_{ij} \Bigg(\frac{\sum_{k=1}^s v_k^i v_k^j}{\sum_{k=1}^s{(v_k^i)^2}}\Bigg)\Bigg)\\
        & = \Var\Bigg(\mathlarger{\sum_{j \neq i}^n} A_{ij}\frac{\bm{u}_i^T\bm{u}_j}{\|\bm{u_i}\|_2^2}\Bigg)\\
        &=\mathlarger{\sum_{j \neq i}^n} A_{ij}^2\Var\Bigg(\frac{\bm{u}_i^T\bm{u}_j}{\|\bm{u_i}\|_2^2}\Bigg)\hspace{1cm}\text{(independence wrt $j$)}
    \end{split}
\end{equation*}
where $\bm{u}_i$ and $\bm{u}_j$ are as in the proof of Theorem \ref{Gaussian Theorem}. Referring back to this proof again, we already know that the distribution of $(\bm{u}_i^T\bm{u}_j/\|\bm{u}_i\|_2^2)$ is the same as $g_j/\|\bm{u}_i\|_2$ for some $g_j \sim N(0,1)$ independent of $\|\bm{u}_i\|_2$. This gives
\begin{equation*}
    \begin{split}
        \Var(G_i^s) &= \sum_{i=1}^n A_{ij}^2\Bigg(\mathbb{E}\Bigg[\frac{g_j^2}{\|\bm{u}_i\|_2^2}\Bigg] - \mathbb{E}\Bigg[\frac{g_j}{\|\bm{u}_i\|_2}\Bigg]^2\Bigg)\\
        &= \sum_{i=1}^n A_{ij}^2\mathbb{E}\Bigg[\frac{g_j^2}{\|\bm{u}_i\|_2^2}\Bigg]\\
        &=\sum_{i=1}^n A_{ij}^2 \mathbb{E}\Bigg[\frac{\chi_1^2}{\chi_s^2}\Bigg]\\
        &= \sum_{i=1}^n A_{ij}^2 \mathbb{E}\Bigg[\frac{1}{s}F(1,s)\Bigg]\\
    \end{split}
\end{equation*}
where $F(1,s)$ is an $F$-distribution with parameters $1$ and $s$. The expectation of such a variable is readily found as $s/(s-2)$, for $s > 2$, and otherwise is infinite, yielding the result of equation \eqref{Gaussian Variance}. Thus we expect the estimators to be similarly behaved for sufficiently large $s$ but not if $s$ is small, in which case we expect that the Rademacher estimator shall be the better of the two, as the numerical results indicate.

\subsubsection{Implications for related work}
\noindent This has implications for the recent paper of Yao et.\ al \cite{yao2020adahessian}. The authors estimate the diagonal of the Hessian of a loss function in machine learning tasks, implicit from back-propagation. Only a single Rademacher query vector is used due to computational restrictions. These results suggest it would be unwise to switch to a single Gaussian query vector as in Definition \ref{Gaussian definition}, since such an estimate would have unbounded variance.\\

\noindent We remark further that our the results, particularly Theorem \ref{Rademacher Theorem} and Corollary \ref{Rad Union Theorem}, suggest the algorithm may greatly benefit from taking further estimates of the diagonal. Indeed the authors indicate \cite{yao2020video} that  the noise the algorithm experiences is dominated by the error of estimation of the diagonal. It is now clear why.

\subsection{Full diagonal estimates}
Figure \ref{fig:Full Diagonal 2-norm Radmacher} shows the convergence of the Rademacher and Gaussian diagonal estimators to the full diagonal. Note that, in contrast to the experiments for single elements of the diagonal, the theoretical bounds appear looser. This is expected, since in arriving at the theoretical bounds for our complete diagonal estimates we have used multiple inequalities for single elements. Therefore the query bounds will be loose. Use of a union bound in Corollaries \ref{Gaussian Union Theorem} and \ref{Rad Union Theorem} shall further compound this loosening. Note once again that, in the Gaussian case, the theoretical bound holds only for sufficiently large $s$, as expected. The effects seen in the error of single diagonal element estimates carry over to the full diagonal, but are even more pronounced. For a handful of queries (again, $s \lesssim 10$ or so) the Rademacher estimator out performs its Gaussian counterpart, but as a greater number of queries are used the convergence behaviours become increasing similar. As before, with the Rademacher estimator the ratio of the theoretical bound to the experimental error is roughly constant for each $\delta$, so the convergence is of the order of $O(1/\varepsilon^2)$, just as for single elements. In the Gaussian case, for large enough $s$, the same may be said.

\begin{figure}
    \centering\begin{tabular}{cc}
        \hspace{-1.2cm}\includegraphics[width=60mm]{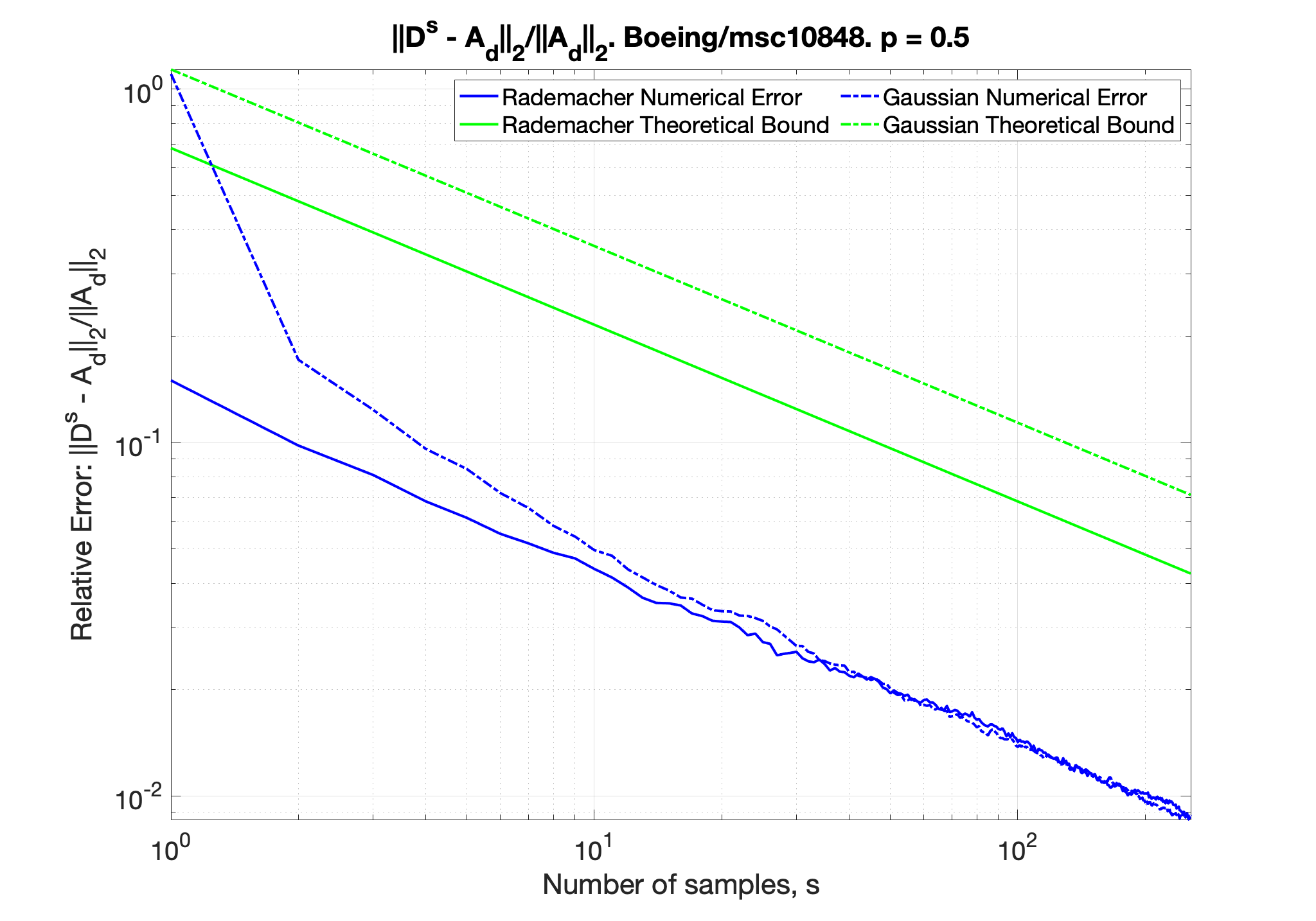}
        \includegraphics[width=60mm]{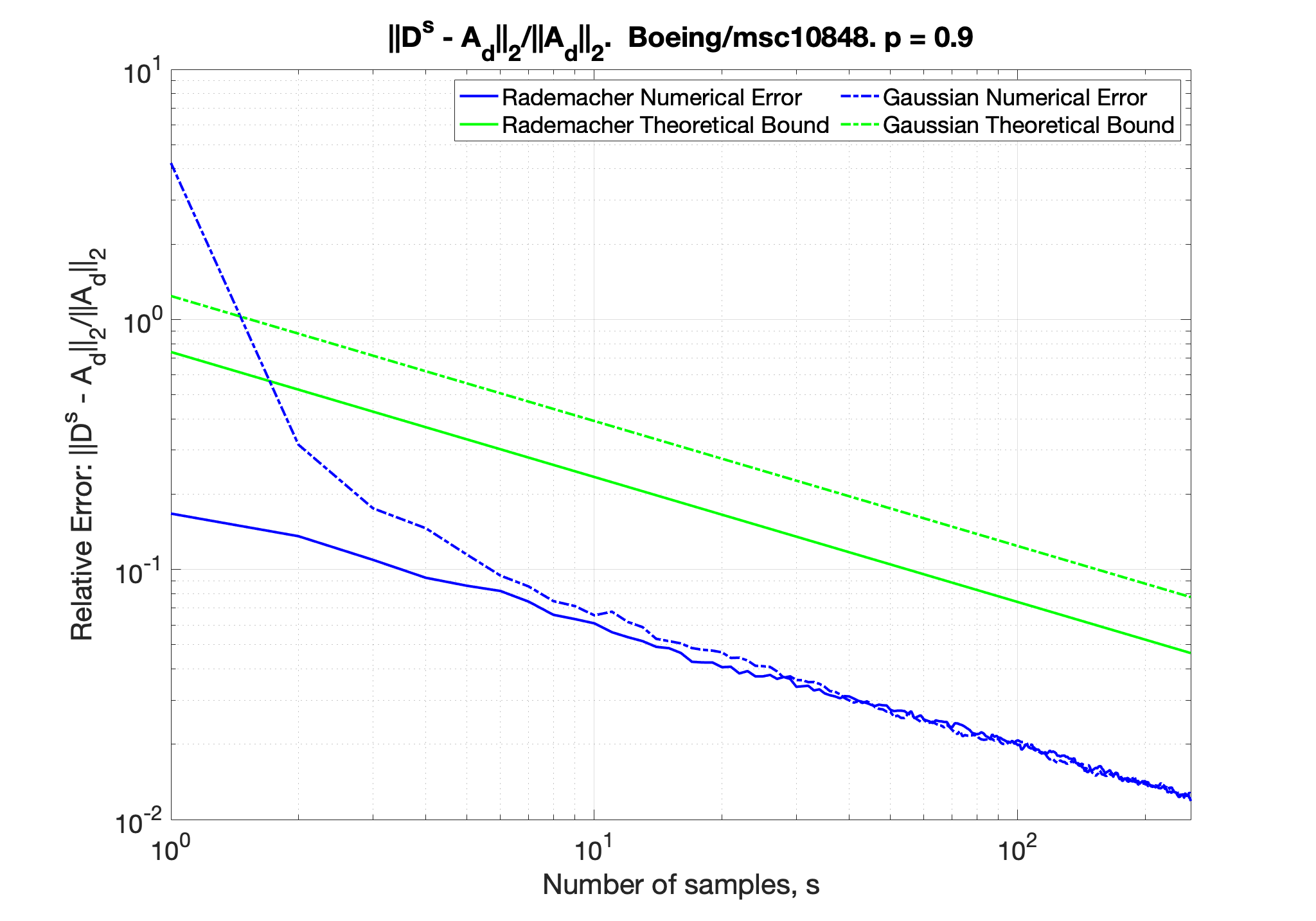}
    \end{tabular}
    \caption{\textit{Convergence of the Rademacher and Gaussian diagonal estimators to the actual diagonal of the matrix ``Boeing/msc10480" from the SuiteSparse matrix collection. The effect of the infinite variance for the Gaussian estimator when $s\leq2$ is much more pronounced.}}
    \label{fig:Full Diagonal 2-norm Radmacher}
\end{figure}

\begin{figure}
    \centering\begin{tabular}{cc}
    \hspace{-1.2cm}
    \includegraphics[width=60mm]{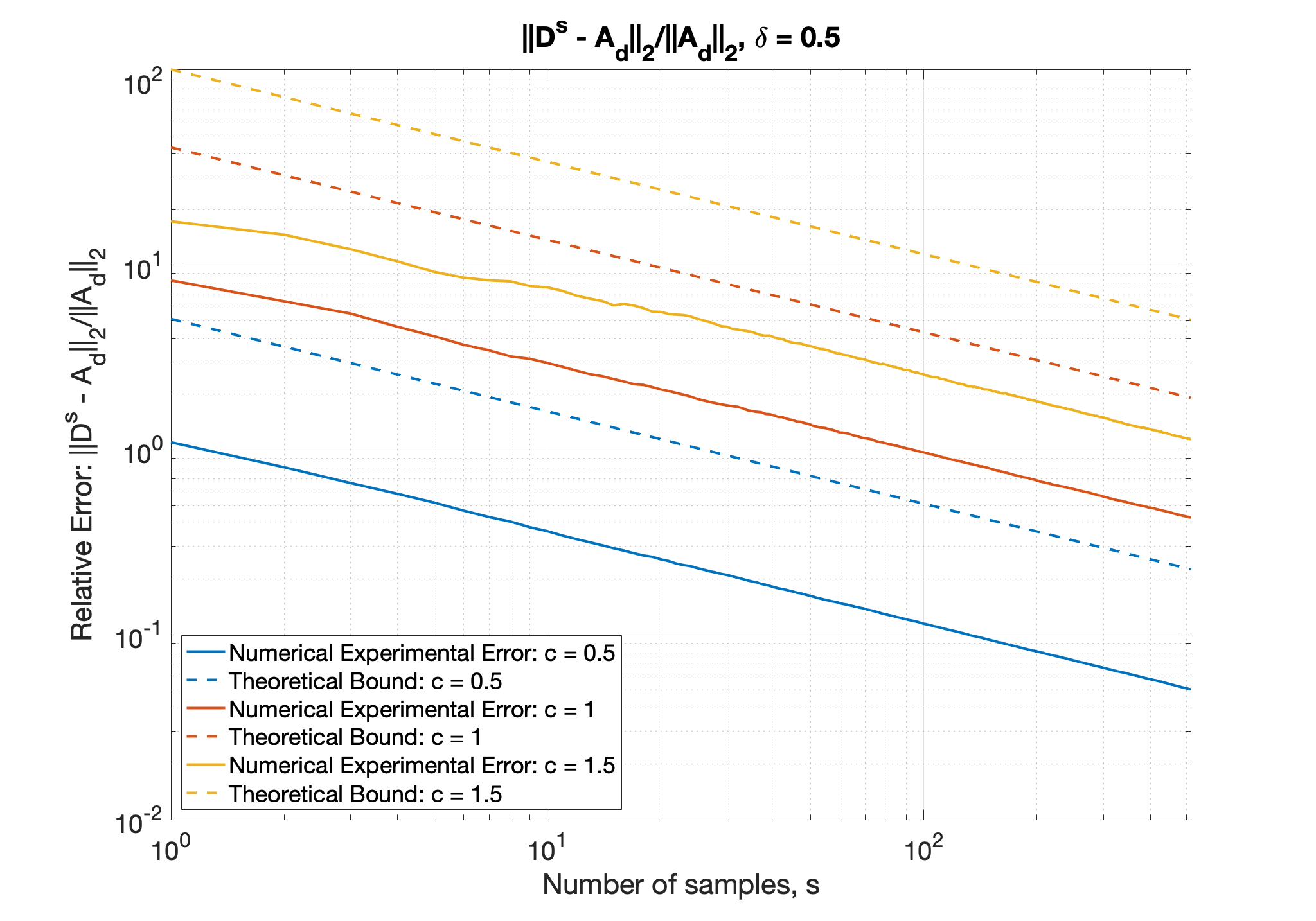}
    \includegraphics[width=60mm]{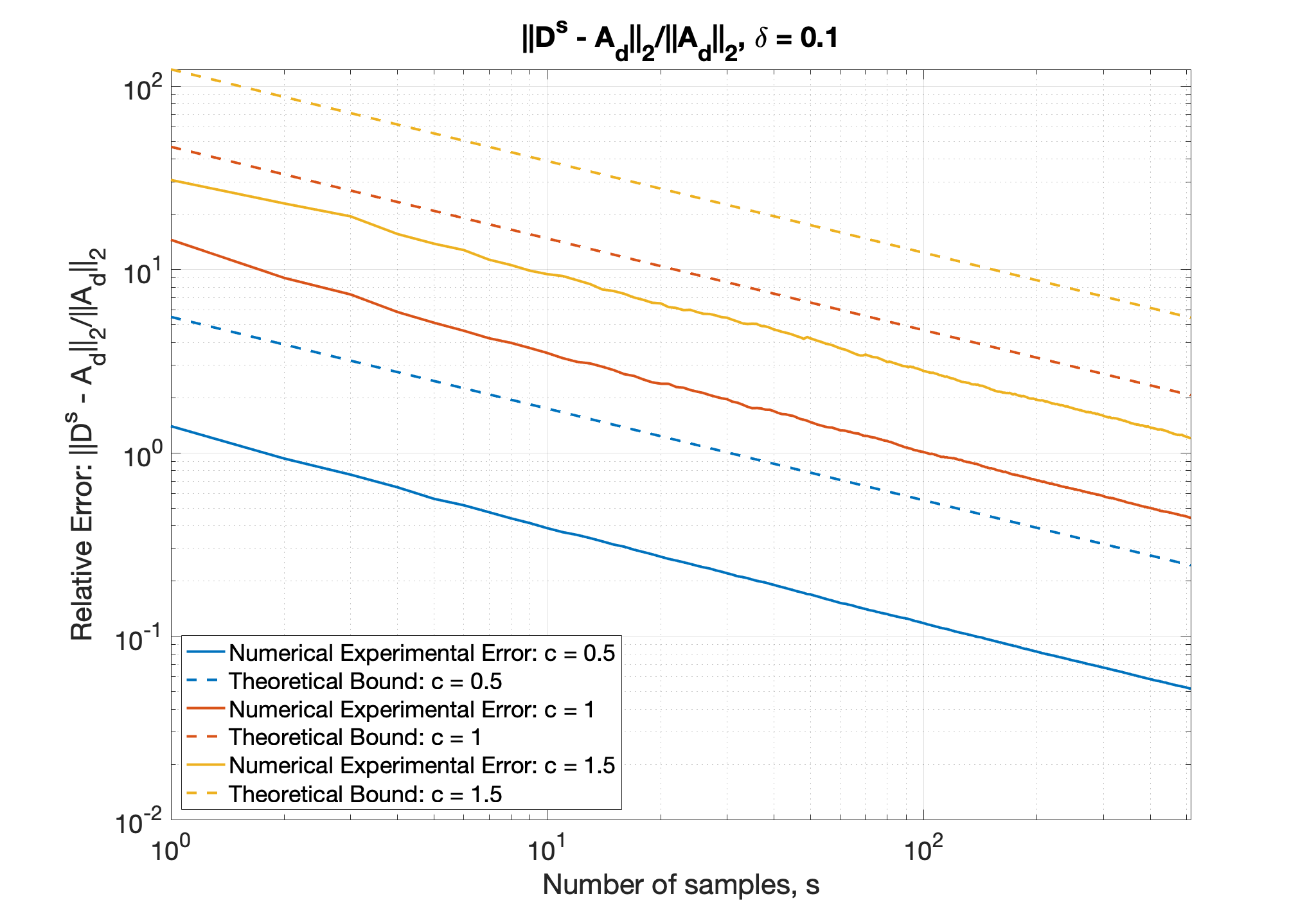}
    \end{tabular}
    \caption{\textit{Convergence of the Rademacher estimator on random matrices. We run $50$ trials and report the median and $90^{th}$ percentile errors for three values of $c,$ $0.5$, $1$ and $1.5$, corresponding to slow, medium and fast eigenvalue decay respectively. As expected, with all matrices, convergence goes as $O(1/\varepsilon^2)$. The difference between the matrices arises from the fact that the values of $(\|A\|_F^2 - \|\bm{A}_d\|_2^2)/\|\bm{A}_d\|_2^2$ increase as $c$ increases.}}
    \label{fig:Synetic Matrix figure}
\end{figure}

\subsection{Experiments on Synthetic Matrices.}
We also test the Rademacher estimator on random matrices with power law spectra. We take $\Lambda$ to be diagonal with $\Lambda_{ii} = i^{-c}$, for varying c. We generate a random orthogonal matrix $V\in \mathbb{R}^{5000\times 5000}$ by orthogonalising a Gaussian matrix, and set $A = V^T \Lambda V$. A larger value of $c$ results in a more quickly decaying spectrum, meaning that the quantity $\|\bm{\lambda}\|_2^2 - \frac{1}{n}\|\bm{\lambda}\|_1^2$ (as in equation \eqref{lambda2 vs lambda1}) will be larger, so we expect worse estimates for a given number of queries. This is exactly as found empirically, as shown in figure \ref{fig:Synetic Matrix figure}. In particular, for $c = 1$ and $c = 1.5$, the error of our estimates is still woeful even after $s = 512$ vectors! This does not bode well for stochastic diagonal estimation of general dense matrices with sharply decaying spectra. Whilst we still have $O(1/\varepsilon^2)$ convergence, the error starts off too poorly for this to be of much use.

\subsection{Summary}
\noindent At the start of this section, we set out to find an answer for the number of queries needed for a given error estimation. We have seen that, for fixed $\delta$, $s$ goes as $O(1/\varepsilon^2)$, and that the error scales with $(\|\bm{A}_i\|_2^2 - A_{ii}^2)^{1/2}$. We have seen that the Rademacher diagonal estimator has both a better theoretical bound than its Gaussian counterpart and also has lower variance for each element. This is in contrast to the conclusions of Avron and Toledo \cite{avron2011randomized}, who found that Gaussian trace estimators were better than Hutchinson's, Rademacher estimator. However, there is no real contention here. Summing the components of the Gaussian diagonal estimator is not the same as the Gaussian estimator of the trace examined in \cite{avron2011randomized}. The difference is merely an artifact of the denominator introduced in \eqref{Gaussian definition equation}.\\

\noindent We have also seen that positive semi-definite matrices with flatter spectra are suitable for estimation in this way, but as the spectrum of a matrix becomes steeper, estimates become worse. Perhaps this may spell the end of stochastic diagonal estimation. Perhaps it may restrict it to a specific class of matrices, in contexts where the spectrum of the matrix is known to be close to flat due to properties of the system under investigation. All things considered, it so far seems that, at the best, this form of estimation may only be useful under special circumstances. Might there be a way to improve on this poor performance for general, dense matrices, with rapidly decaying spectra? This is the subject of Section \ref{Improved diagonal estimation}. Since it is found to be the best, we take forward only the Rademacher diagonal estimator.

\section{Improved diagonal estimation}\label{Improved diagonal estimation}
Is it possible to improve stochastic diagonal estimation, particularly the worst case scenarios of the previous section? This section seeks to address this by extending the ideas of \cite{meyer2021hutch++}, adapting the algorithm therein for full diagonal estimation. For $A\succeq 0$, we propose an $O(1/\varepsilon^2)$ to $O(1/\varepsilon)$ improvement in the number of required query vectors and illustrate the robustness of our algorithm to matrix spectra.

\subsection{Relation to Hutch++} \label{Relation to Hutch++}
\noindent We briefly give an overview of the work of Meyer, Musco, Musco and Woodruff \cite{meyer2021hutch++}. In their paper, they motivate their algorithm, ``Hutch++" as a quadratic improvement in trace estimation. Using randomised projection to approximately project off the top eigenvalues, they then approximate the trace of the remainder of the matrix. All of the error results from this remainder.

 They arrive at their result in the following manner, first introducing the following lemma (Lemma 2, Section 2, \cite{meyer2021hutch++}) for a general square matrix $A$.
\begin{lemma} \label{Lemma 2 Hutch++}
    For $A \in \mathbb{R}^{n\times n},\; \delta \in (0, 1/2],\; s \in \mathbb{N}$. Let $\textnormal{tr}_R^s(A)$ be the Hutchinson estimate of the trace. For fixed constants $c, C$, if $s \geq c\log(1/\delta)$, then with probability $1-\delta$
    \begin{equation*}
        |\textnormal{tr}_R^s(A) - \textnormal{tr}(A)| \leq C \sqrt{\frac{\log(1/\delta)}{s}} \|A\|_F.
    \end{equation*}
Thus, if $s = O(\log(1/\delta)/\varepsilon^2)$ then, with probability $1-\delta$, $|\textnormal{tr}_R^s(A) - \textnormal{tr}(A)| \leq \varepsilon\|A\|_F$.
\end{lemma}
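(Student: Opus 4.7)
My plan is to reduce the statement to a Hanson-Wright style tail bound for Rademacher chaos and then tune parameters. First, I would write a single Hutchinson sample as
\[
\bm{v}^T A \bm{v} = \sum_i A_{ii} v_i^2 + \sum_{i\neq j} A_{ij} v_i v_j = \tr(A) + \sum_{i\neq j} A_{ij} v_i v_j,
\]
using that $v_i^2 = 1$ for Rademacher entries. The error is therefore a mean-zero quadratic form in Rademacher variables, and since $\bm{v}^T B \bm{v} = 0$ for any antisymmetric $B$, I may replace $A$ by its symmetric part $(A+A^T)/2$; this preserves the trace and cannot increase $\|A\|_F$, so it is WLOG.

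Next I would concatenate the $s$ independent queries into a single Rademacher vector $\bm{w} = [\bm{v}_1^T,\hdots,\bm{v}_s^T]^T \in \mathbb{R}^{ns}$, so that $\tr_R^s(A) = \tfrac{1}{s}\bm{w}^T (I_s \otimes A)\bm{w}$. The block-diagonal matrix $M := \tfrac{1}{s}(I_s \otimes A)$ satisfies $\|M\|_F = \|A\|_F/\sqrt{s}$ and $\|M\|_2 = \|A\|_2/s$, and $\tr(M) = \tr(A)$. The core tool is then the Hanson-Wright inequality for Rademacher chaos,
\[
\Pr\bigl(|\bm{w}^T M \bm{w} - \tr(M)| \geq t\bigr) \leq 2\exp\!\left(-c\,\min\!\bigl(t^2/\|M\|_F^2,\; t/\|M\|_2\bigr)\right),
\]
which I would either cite directly (Rudelson-Vershynin) or prove via a decoupling argument followed by the same type of moment-generating function bound used in Lemma \ref{Chernoff Lemma}, now applied to the bilinear chaos $\sum_{i\neq j}B_{ij}v_i v_j$.

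Substituting the norms of $M$ gives the two regimes $st^2/\|A\|_F^2$ (subgaussian) and $st/\|A\|_2$ (subexponential). Choosing $t = C\sqrt{\log(1/\delta)/s}\,\|A\|_F$ makes the subgaussian exponent equal to $cC^2\log(1/\delta)$, which is at least $\log(2/\delta)$ for $C$ large enough. To ensure the subgaussian term dominates I need $t \leq \|A\|_F^2/\|A\|_2$, i.e.\ $C\sqrt{\log(1/\delta)/s}\leq \|A\|_F/\|A\|_2$; since $\|A\|_F/\|A\|_2 \geq 1$ always holds, this is implied by $s \geq C^2 \log(1/\delta)$, which is precisely the hypothesis $s \geq c\log(1/\delta)$ with $c = C^2$.

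The main obstacle is establishing the quadratic-form concentration inequality, which is genuinely stronger than the linear Khintchine-type bound used earlier in the paper for diagonal estimation. The linear argument of Lemma \ref{Chernoff Lemma} does not apply because $\sum_{i\neq j}A_{ij}v_i v_j$ is not a sum of independent terms (the $v_i v_j$ share common factors). Once Hanson-Wright is granted, however, the tensor-product trick for handling $s$ samples and the choice of $t$ are essentially bookkeeping, and the condition $s \gtrsim \log(1/\delta)$ emerges naturally as the threshold beyond which subgaussian concentration takes over from subexponential behaviour.
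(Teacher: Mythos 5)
Your proposal is correct. Note, however, that the paper itself offers no proof of this statement: it is quoted directly from Lemma 2 of \cite{meyer2021hutch++}, and the proof in that reference proceeds exactly as you outline — reduce the error to a mean-zero Rademacher chaos, apply the Hanson--Wright inequality, and choose $t$ so that the subgaussian branch of the $\min$ governs once $s \gtrsim \log(1/\delta)$. Your bookkeeping is sound throughout: the symmetrisation step is valid since $\bm{v}^TB\bm{v}=0$ for antisymmetric $B$ and $\|(A+A^T)/2\|_F\le\|A\|_F$; the tensor-product reduction gives $\|M\|_F=\|A\|_F/\sqrt{s}$ and $\|M\|_2=\|A\|_2/s$ as claimed; and the observation that $\|A\|_F/\|A\|_2\ge 1$ correctly converts the regime condition into $s\ge C^2\log(1/\delta)$, matching the hypothesis $s\ge c\log(1/\delta)$. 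The one substantive ingredient you leave as a citation is Hanson--Wright itself, and you rightly flag that the independent-summand argument of Lemma \ref{Chernoff Lemma} cannot substitute for it, since the products $v_iv_j$ share common factors; granting that inequality (or supplying the decoupling argument), the proof is complete.
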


\noindent So, for $A \succeq 0$, we simply have $\|A\|_F = \|\bm{\lambda}\|_2 \leq \|\bm{\lambda}\|_1 = \tr(A)$, to get a relative trace approximator. Note, this bound is only tight when $\|\bm{\lambda}\|_2 \approx \|\bm{\lambda}\|_1$. The authors then give the following lemma (see Lemma 3, Section 3, \cite{meyer2021hutch++}).
\begin{lemma} \label{Lemma 3 Hutch++}
    Let $A_r$ be the best rank-r approximation to PSD matrix $A$. Then
    \begin{equation*}
        \|A-A_r\|_F \leq \frac{1}{\sqrt{r}}\textnormal{tr}(A).
    \end{equation*}
\end{lemma}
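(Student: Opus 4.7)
The plan is to work directly with the eigenvalues of $A$, exploiting positive semi-definiteness and the monotone ordering $\lambda_1 \geq \lambda_2 \geq \cdots \geq \lambda_n \geq 0$. By the Eckart-Young-Mirsky theorem already invoked in Section~\ref{Notation}, the best rank-$r$ approximation of the PSD matrix $A$ is $A_r = V_r \Lambda_r V_r^T$, and the Frobenius error is exactly captured by the tail of the spectrum:
\begin{equation*}
    \|A - A_r\|_F^2 \;=\; \sum_{i=r+1}^n \lambda_i^2.
\end{equation*}
So the task reduces to bounding a tail sum of squared eigenvalues by the square of the full $\ell_1$-norm of the spectrum (which equals $\tr(A)$).

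The key step is a simple averaging/pigeonhole observation: since the eigenvalues are sorted in decreasing order, $\lambda_{r+1}$ cannot exceed the average of the top $r$ eigenvalues, which in turn cannot exceed $\tr(A)/r$. That is,
\begin{equation*}
    \lambda_{r+1} \;\leq\; \frac{1}{r}\sum_{i=1}^{r} \lambda_i \;\leq\; \frac{1}{r}\,\tr(A).
\end{equation*}
Then I would factor one copy of $\lambda_{r+1}$ out of each term in the tail sum, using $\lambda_i \leq \lambda_{r+1}$ for $i \geq r+1$:
\begin{equation*}
    \sum_{i=r+1}^n \lambda_i^2 \;\leq\; \lambda_{r+1}\sum_{i=r+1}^n \lambda_i \;\leq\; \frac{\tr(A)}{r}\cdot\tr(A) \;=\; \frac{\tr(A)^2}{r},
\end{equation*}
where the last inequality uses both the pigeonhole bound on $\lambda_{r+1}$ and the fact that $\sum_{i=r+1}^n \lambda_i \leq \sum_{i=1}^n \lambda_i = \tr(A)$ since all $\lambda_i \geq 0$. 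Taking square roots yields the claim.

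There is no real obstacle here; the only subtlety worth pointing out is that positive semi-definiteness is used in two places: (i) to identify singular values with eigenvalues so that Eckart-Young-Mirsky gives the clean Frobenius expression, and (ii) to ensure $\tr(A) = \|\bm{\lambda}\|_1$ so that the pigeonhole bound $\lambda_{r+1} \leq \tr(A)/r$ actually holds. Both of these were already noted in Section~\ref{Notation}, so the proof is essentially a two-line chain of inequalities once the setup is in place.
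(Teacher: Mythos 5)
Your proposal is correct and is essentially identical to the paper's own proof: both use the pigeonhole bound $\lambda_{r+1} \leq \frac{1}{r}\sum_{i=1}^r \lambda_i \leq \frac{1}{r}\tr(A)$, factor $\lambda_{r+1}$ out of the tail sum $\sum_{i=r+1}^n \lambda_i^2$, and bound the remaining tail by $\tr(A)$ using non-negativity of the eigenvalues. No gaps.
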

\begin{proof}
Since $\lambda_{r+1} \leq \frac{1}{r}\sum_{i=1}^r\lambda_{i} \leq \frac{1}{r}\tr(A)$,
\begin{equation*}
    \|A - A_r\|_F^2 = \sum_{i = r+1}^n \lambda_i^2 \leq \lambda_{r+1}\sum_{i=k+1}^n\lambda_{i}\leq\frac{1}{r}\tr(A)\sum_{i=r+1}^n\lambda_{i} \leq \frac{1}{r}\tr(A)^2.
\end{equation*}
\hfill$\square$
\end{proof}

\noindent They argue that this result immediately lends itself to the possibility of an algorithm with $O(1/\varepsilon)$ complexity. Setting $s = r = O(1/\varepsilon)$ and splitting $\tr(A) = \tr(A_s) + \tr(A - A_s)$, the first term may be computed exactly if $V_s\in \mathbb{R}^{n\times s}$ (the top $s$ eigenvectors) is known, since $\tr(A_s) = \tr(V_s^T A V_s)$. Thus the error in the estimate of the trace is contained in our estimate of $\tr(A - A_s)$. So (roughly) combining Lemmas \ref{Lemma 2 Hutch++} and \ref{Lemma 3 Hutch++}, the error becomes, for their trace estimator: $\tr_{H++}^s(A)$
\begin{equation} \label{Lemmas 2 and 3 synthesis}
    \begin{split}
        |\tr_{H++}^s(A) - \tr(A)| &\approx |\tr_R^s(A - A_s) -\tr(A-A_s)|\\
        &\leq  C \sqrt{\frac{\log(1/\delta)}{s}}\|A - A_s\|_F\\ &\leq  C \frac{\sqrt{\log(1/\delta)}}{s}\tr(A).
    \end{split}
\end{equation}\\
Hence $s = O(1/\varepsilon)$ for fixed $\delta$ is sufficient for an $(\varepsilon,\delta)$-approximator to the trace of $A$.

 Of course, $V_s$ cannot be computed exactly in a handful of matrix-vector queries, but this is easily resolved with standard tools from randomised linear algebra \cite{halko2011finding}. Namely, $O(s)$ queries is sufficient to find a $Q$ with $\|(I-QQ^T)A(I-QQ^T)\|_F \leq O(\|A-A_s\|_F)$ with high probability, which is all that is required in the second line of \eqref{Lemmas 2 and 3 synthesis}, and how the algorithm is implemented in practice. For a full, formal treatment of these ideas see \cite{meyer2021hutch++}, Theorem 4, Section 3. We elaborate on this in the next section.

\subsection{Diag++} \label{Diag++}
Here we outline how the above ideas may be applied to estimation of the entire diagonal, using $O(1/\varepsilon)$ queries. Recall once again, for 
\begin{equation}
    \Pr\Bigg(\|\bm{D}^s - \bm{A}_d\|_2^2 \leq \varepsilon^2\Big(\|A\|_F^2 - \|\bm{A}_d\|_2^2\Big)\Bigg) \geq 1 - \delta
\end{equation}
it is sufficient to take
\begin{equation}
    s > O\Bigg(\frac{\log(n/\delta)}{\varepsilon^2}\Bigg).
\end{equation}
We may now engage in a trade-off. For fixed $\delta$ and $n$, we swap $O(1/\varepsilon^2)$ convergence for $O(1/\varepsilon)$ convergence. This comes at the (potentially minor) expense of a greater multiplicative factor in our query bound. Just as in Hutch++, we first motivate this supposing we know $V_s$, the top $s$ eigenvectors, exactly.

 Supposing we have projected off the $s$-rank approximation $A_s$ and extracted its diagonal, $\text{diag}(A_s)$, we are left to estimate the diagonal of 
\begin{equation*}
    \widehat{A}:= A - A_s.
\end{equation*}
Denoting $\widehat{\bm{D}}^s$ as the diagonal estimate of this, we have 
\begin{equation} \label{exact remainder approx}
    \|\widehat{\bm{D}}^s - \widehat{\bm{A}}_d\|_2^2 \leq \varepsilon^2\Big(\|\widehat{A}\|_F^2 - \|\widehat{\bm{A}}_d\|_2^2\Big)
\end{equation}
with probability, $1-\delta$, if we have used a sufficient number of queries: $s > \frac{2}{\varepsilon^2}\ln(2n/\delta)$ for the Rademacher diagonal estimator. Ignoring the $\|\widehat{\bm{A}}_d\|_2^2$ term, and employing the result of Lemma \ref{Lemma 3 Hutch++}, yields
\begin{equation*}
    \|\widehat{\bm{D}}^s -\widehat{\bm{A}}_d\|_2^2 \leq \frac{\varepsilon^2\tr(A)^2}{s}.
\end{equation*}
Letting
\begin{equation*}
    \varepsilon^2 = s\bar{\varepsilon}^2\Bigg(\frac{\|\bm{A}_d\|_2^2}{\tr(A)^2}\Bigg)
\end{equation*}
gives
\begin{equation}
    \|\widehat{\bm{D}}^s - \widehat{\bm{A}}_d\|_2^2 \leq \bar{\varepsilon}^2\|\bm{A}_d\|_2^2
\end{equation}
still with probability $1-\delta$, if we have used sufficient $s$.
Note that this is not a relative estimation of the diagonal of $\widehat{\bm{A}}$; rather the error is qualified relative to $\|\bm{A}_d\|_2^2$, since this is, overall, the error that we are interested in. Explicitly, for the Rademacher diagonal estimator, this results in sufficiency if
\begin{equation} \label{s-to order 1/eps}
    \begin{split}
        s &> \frac{2}{s}\Bigg(\frac{\tr(A)^2}{\|\bm{A}_d\|_2^2}\Bigg)\frac{\ln(2 n/\delta)}{\bar{\varepsilon}^2}\\
        \implies s& > \sqrt{2}\Bigg(\frac{\tr(A)}{\|\bm{A}_d\|_2}\Bigg)\frac{\sqrt{\ln(2 n/\delta)}}{\bar{\varepsilon}}
    \end{split}
\end{equation}
and we have arrived at $O(1/\varepsilon)$ convergence, unlike $O(1/\varepsilon^2)$ as for all previous estimates! We remark that when \eqref{s-to order 1/eps} is tight, this is a dramatic improvement! In the case where $\|\bm{A}_d\|_1/\|\bm{A}_d\|_2 = O(1)$, the result of \eqref{s-to order 1/eps} is very favorable\footnote{Such a case may be found for the ``Model Hamiltonian" in Section 4 of \cite{bekas2007estimator}.}. Since $\tr(A) = \|\bm{A}_d\|_1 \leq \sqrt{n}\|\bm{A}_d\|_2$, the worst case scenario (when the diagonals take uniform values) yields
\begin{equation} \label{Diag++ worst case}
    s > \frac{\sqrt{2n\ln( 2n/\delta})}{\bar{\varepsilon}}
\end{equation}
and we observe the introduction of a potentially expensive constant, but one which, for moderate $\bar{\varepsilon}$ and $\delta$ will nonetheless give $s \ll n$ for large $n$.

 Once again, we have motivated the above for exact $s$-rank approximation, but, as in Hutch++, the same tools from randomised linear algebra \cite{halko2011finding} mean that we can approximate $V_s$ using $O(s)$ queries. Specifically $O(s)$ queries are sufficient to find $Q$ such that 
\begin{equation} \label{randomised remainder bound}
    \|(I-QQ^T)A(I-QQ^T)\|_F \leq O(\|A-A_s\|_F).
\end{equation}
In detail (see \cite{musco2020projection}, Corollary 7 and Claim 1), provided $O(r + \log(1/\delta))$ vectors have been used to form $Q\in\mathbb{R}^{n\times r}$, then, with probability $1-\delta$
\begin{equation*}
    \|A(I-QQ^T)\|_F^2\leq %\rrr{2\|A-A_r\|_2^2}{
2\|A-A_r\|_F^2
\end{equation*}
and since
\begin{equation*}
    \begin{split}
        \|(I - QQ^T)A(I-QQ^T)\|_F &\leq \|(I - QQ^T)\|_2\|A(I-QQ^T)\|_F\\ &= \|A(I-QQ^T)\|_F
    \end{split}
\end{equation*}
so equation \eqref{randomised remainder bound} holds for $s = O(r + \log(1/\delta))$. Thus, overall it is sufficient to take $s = O\Big(\frac{\tr(A)}{\|\bm{A}_d\|_2}\frac{\sqrt{\log(n/\delta)}}{\bar{\varepsilon}} + \log(1/\delta)\Big)$ queries where clearly the first term dominates. Applying the same ideas as in equations \eqref{exact remainder approx} to \eqref{s-to order 1/eps}, for the randomised Rademacher case we readily arrive at 
\begin{equation} \label{Random Diag++ bound}
    s > 4\Bigg(\frac{\tr(A)}{\|\bm{A}_d\|_2}\Bigg)\frac{\sqrt{\ln(2 n/\delta)}}{\bar{\varepsilon}} + c \log(1/\delta)
\end{equation}
for some constant $c$, is sufficient to approximate the diagonal to within $\bar{\varepsilon}$ error, with probability $1-\delta$. Of course, we remark that this might be a very pessimistic bound. We have ignored $\|\widehat{\bm{A}}_d\|_2^2$ and employed the potentially loose result of Lemma \ref{Lemma 3 Hutch++}, so cannot necessarily expect this bound to be good. However, if we have a steep spectrum\footnote{Such an example may be found for the ``Graph Estrada Index" in Section 5.2 of \cite{meyer2021hutch++}.} with both $\|\widehat{A}\|_F^2 \gg \|\widehat{\bm{A}}_d\|_2^2$ and $\|A\|_F \approx \tr(A)$, then the bound on $s$ should result in Diag++ outperforming simple stochastic estimation. Concretely the algorithm for Diag++ with Rademacher query vectors is as follows:
\begin{algorithm}[H]
    \caption{Diag++. Adaption of Hutch++ for diagonal estimation.}
    \textbf{Input:}\text{ Implicit-matrix oracle, for $A \succeq 0$. Number of queries $s$.}\\
    \textbf{Output:}\text{ Diag++($A$): Approximation to diag($A$)}
    \begin{algorithmic}[1]
        \State Draw $R_1\in\mathbb{R}^{n\times s/3}$ with Rademacher entries.
        \State Compute orthonormal basis $Q$ for $AR_1$
        \State Estimate diag($(I-QQ^T)A(I-QQ^T)$) $\approx \widehat{\bm{D}}^{s/3}\newline = \frac{3}{s}\sum_{k=1}^{s/3}\bm{v}_k\odot (I-QQ^T)A(I-QQ^T)\bm{v}_k$\newline for $\bm{v}_k$ with Rademacher entries
        \State \Return Diag++($A$) $= \text{diag}(QQ^TAQQ^T) + \widehat{\bm{D}}^{s/3}$
    \end{algorithmic}
\end{algorithm}
\noindent In total $s$ matrix-vector multiplications with $A$ are required\footnote{A recent preprint~\cite{persson2021improved} suggests that one can often improve the estimate by adaptively choosing the number of queries used for computing $AR_1$, $AQ$ and $\widehat{\bm{D}}^{s/3}$, instead of the equal queries $s/3$ used in Hutch++. It would be of interest to explore such improvement for diagonal estimation.}: $s/3$ for $AR_1$, $s/3$ for $\widehat{\bm{D}}^{s/3}$ and $s/3$ for $AQ$. Computing $Q$ requires $O(s^2 n)$ extra flops, but this is dominated by the cost of $s$ matrix-vector multiplications: $O(s n^2)$.

\subsection{Numerical Experiments} \label{Numerical Experiments Diag++}

To investigate the performance of Diag++, we once again turn to synthetic matrices: $A = V^T\Lambda V$: $\Lambda_{ii} = i^{-c}$ and $V$ a random orthogonal matrix $V\in \mathbb{R}^{5000\times 5000}$ formed by orthogonalising a Gaussian matrix. Recall that with stochastic estimation alone, convergence was slow for such matrices with steep spectra. We expect Diag++ to perform more or less similarly to standard stochastic estimation if the spectrum of the matrix is flat, since \eqref{Random Diag++ bound} is pessimistic, whilst if the spectrum is steep, \eqref{Random Diag++ bound} is much more optimistic and faster convergence is to be expected. The algorithm is robust to the spectrum of $A$, making the ``best of both worlds'' from projection and estimation. If the spectrum is steep, then we shall be able to capture most of the diagonal of $A$ through projection, whilst if it is flat, the stochastic estimation of the diagonal of the remainder $\widehat{A}$ will buffer the effects of projection. This is as is found in Figure \ref{fig:Synthetic Diag++}. Not only is Diag++ robust to spectra decaying at different rates, but also converges faster than simple stochastic estimation when the matrices in question have rapidly decaying eigenvalues. The algorithm even overtakes simple projection after a handful of vector queries in such cases. Thus, the answer to our original question at the start of this section is yes: we can improve stochastic estimation, particularly for what would otherwise be worst case scenarios.

\begin{figure}[h!]
    \centering
    \begin{tabular}{cc}
    \hspace{-1.2cm}
        \includegraphics[width=60mm]{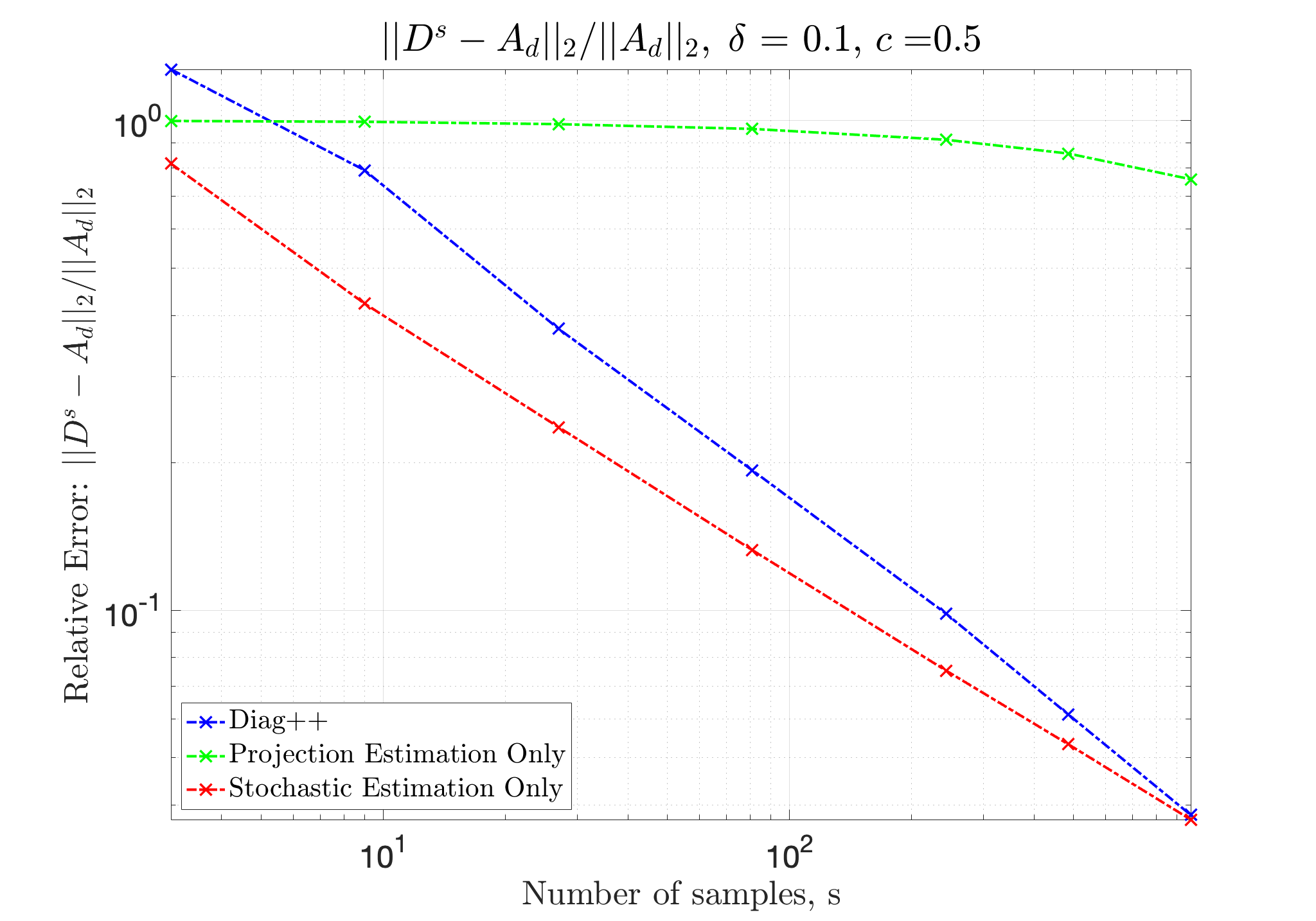} & \includegraphics[width=60mm]{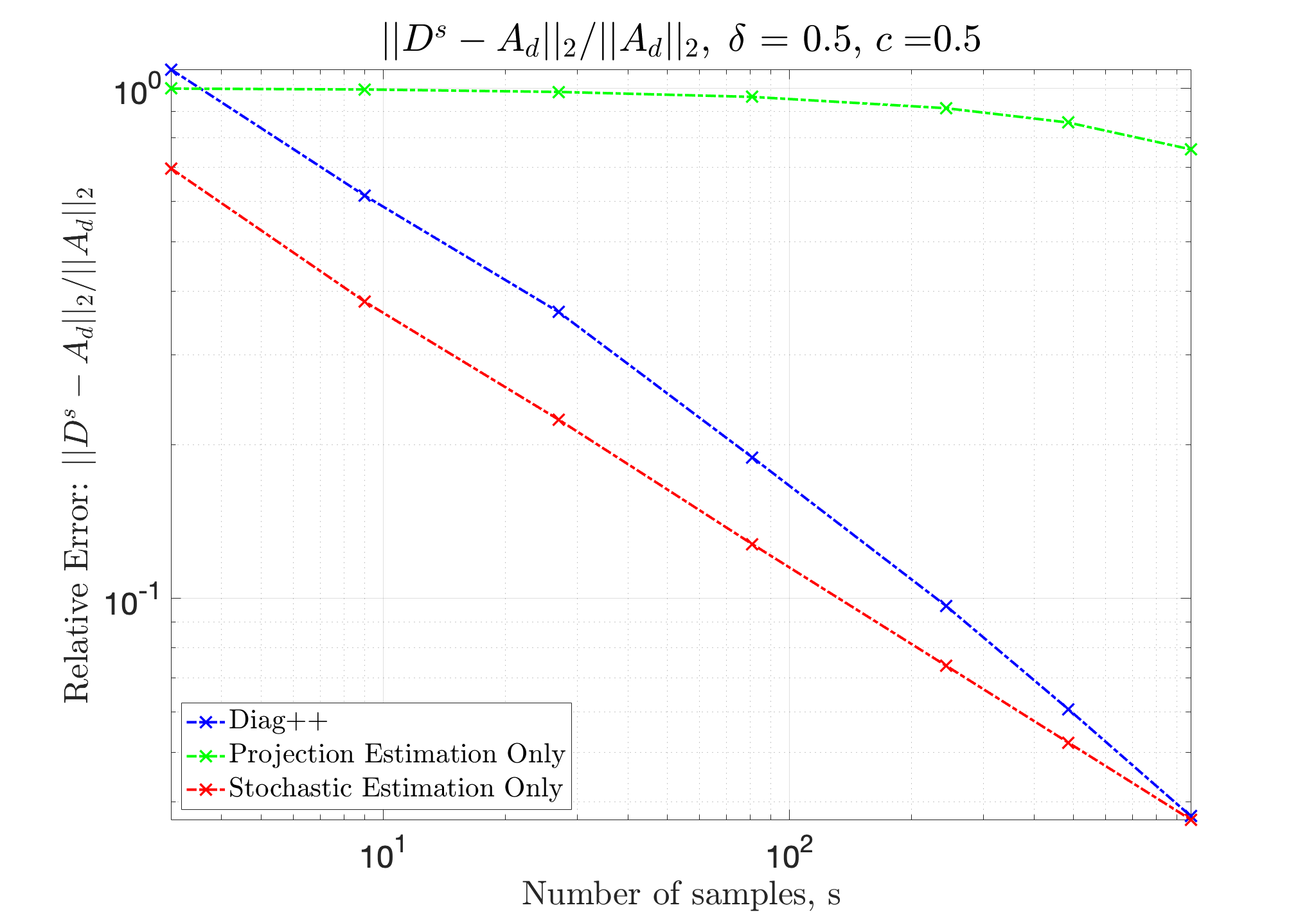} \\
    \hspace{-1.2cm}
        \includegraphics[width=60mm]{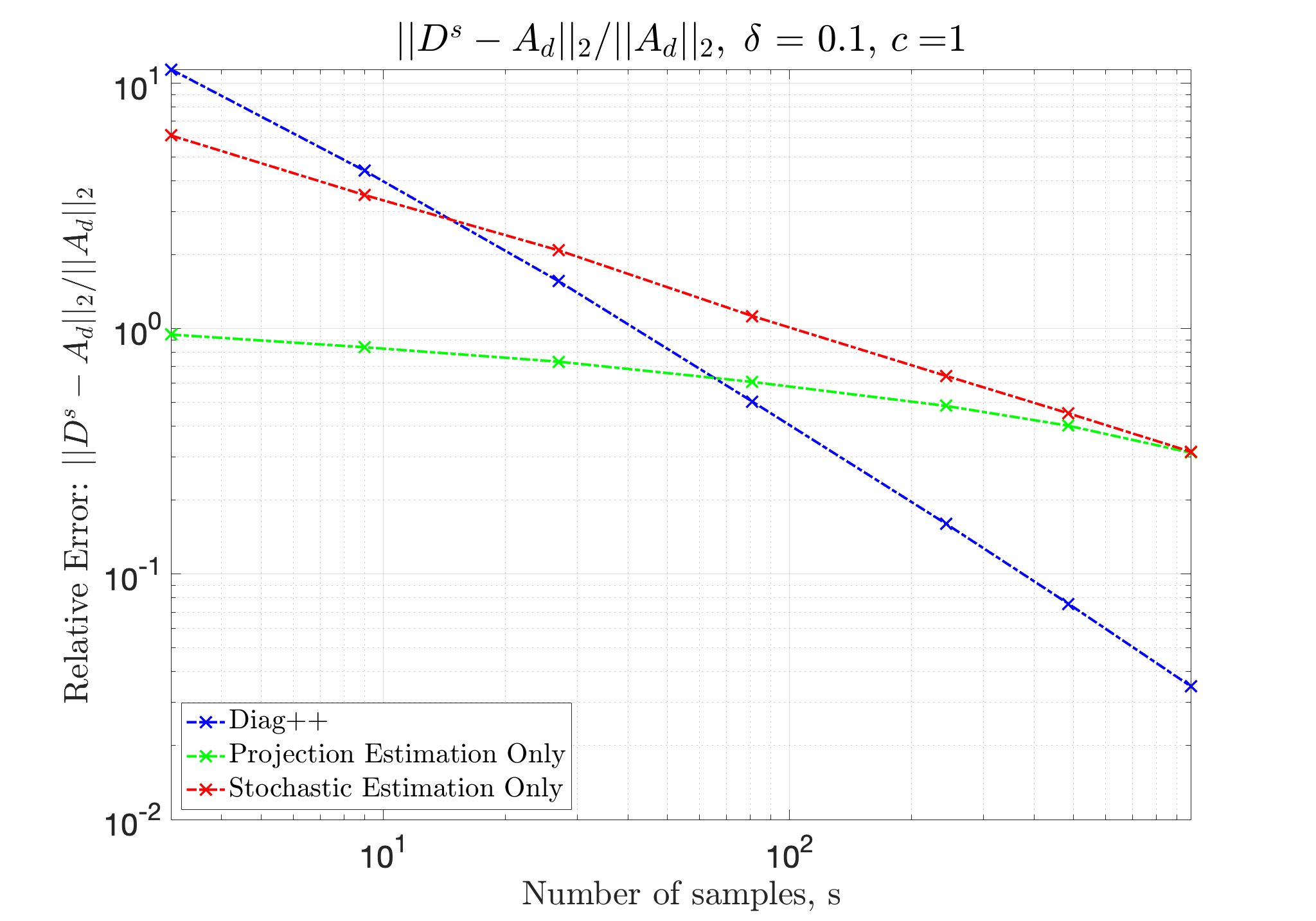} & \includegraphics[width=60mm]{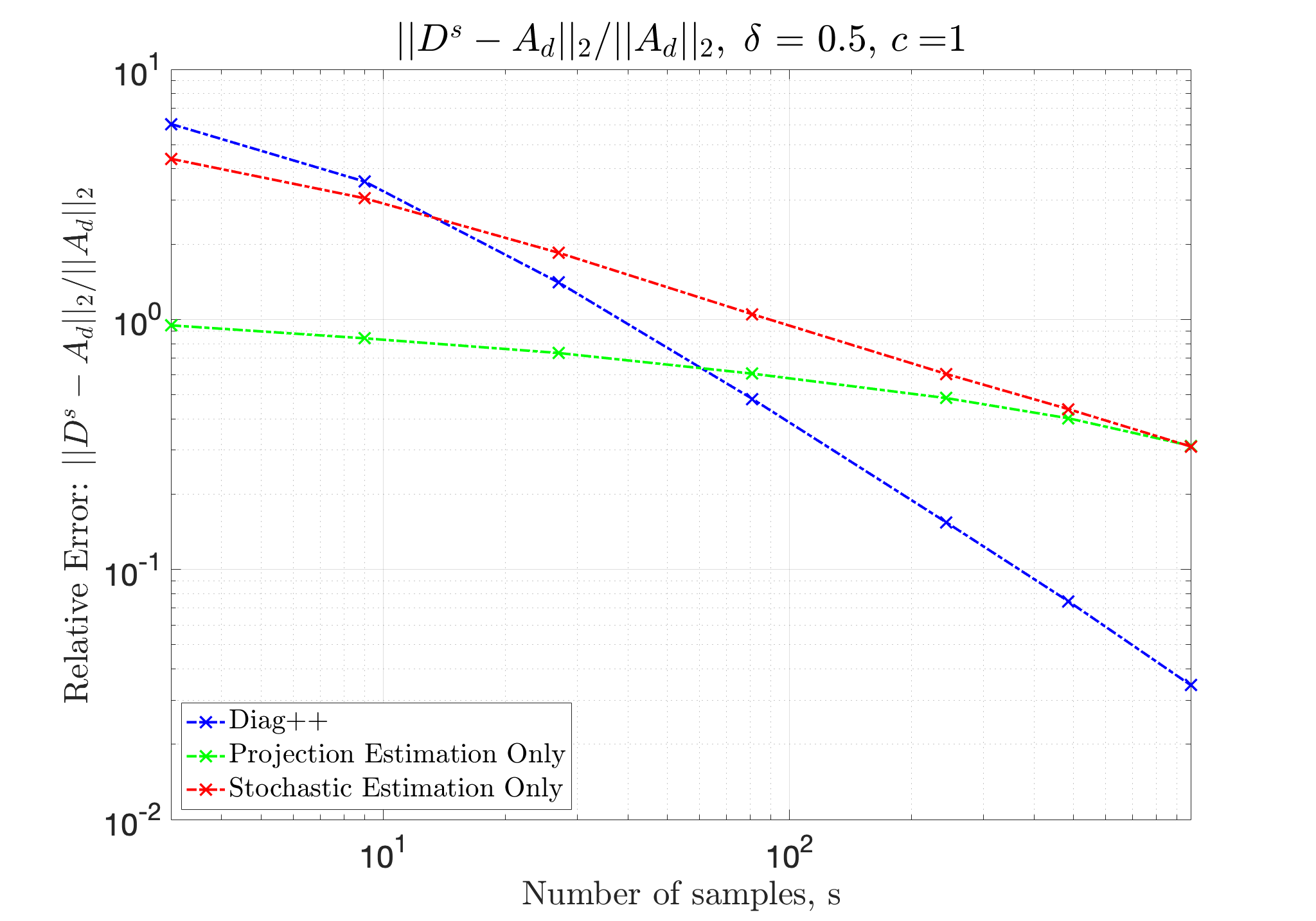}\\
    \hspace{-1.2cm}
        \includegraphics[width=60mm]{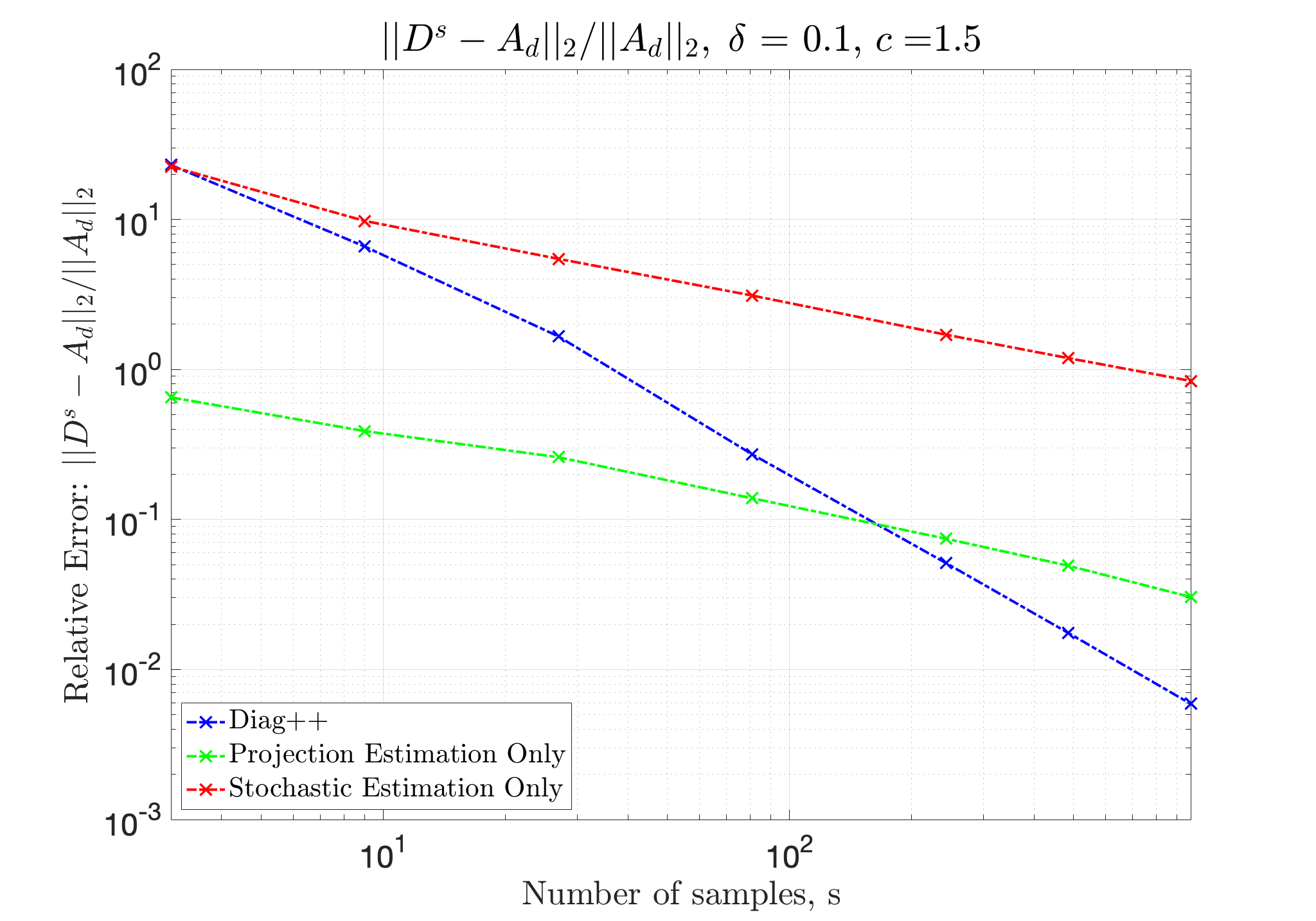} &
        \includegraphics[width=60mm]{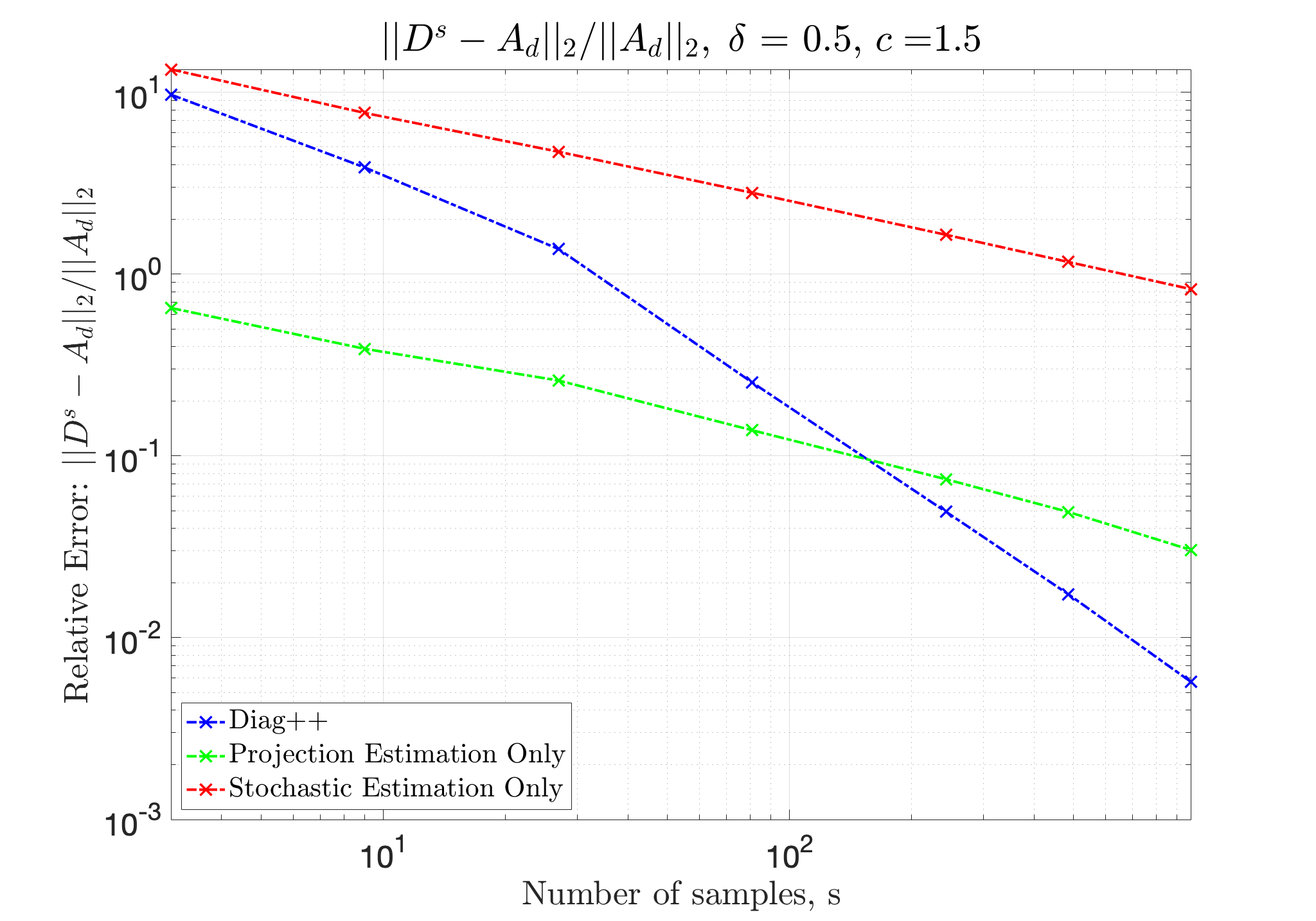}\\
    \end{tabular}
    \caption{\textit{Relative error versus number of matrix-vector queries: we report the median and $90^{th}$ percentile after 10 trials for Synthetic Matrices with $c = 0.5,\: 1,\: 1.5$. Whilst pure stochastic estimation outperforms Diag++ for slower eigenvalue decay, Diag++ converges at roughly the same rate. Estimation through projection only however, is poor. Meanwhile Diag++ obtains high performance for faster eigenvalue decay. It is clear that this algorithm is robust to the spectrum of the matrix in question, converging for both steep and flat eigenvalue distributions.}}
    \label{fig:Synthetic Diag++}
\end{figure}

%\bb{YN: referees might complain that the plots aren't distinguishable when printed in black-white. Let's replot if that happens}

\newpage
\section{Conclusion} \label{Conclusions}

\subsection{Summary}

In setting out, we sought to gain insight into particular questions surrounding diagonal estimation: 
\begin{enumerate}
    \item \label{Number 1} How many query vectors are needed to bound the error of the stochastic diagonal estimator in \eqref{First diagonal estimator}, and how does this change with distribution of vector entries?
    \item \label{Number 2} How does matrix structure affect estimation, and how might we improve worst case scenarios?

\end{enumerate}
In answer to \ref{Number 1}, we have seen that elemental error probabilistically converges as $O(1/\varepsilon^2)$, and scales as $\|\bm{A}_i\|_2^2 - A_{i i}^2$. We have seen that for the entire diagonal, similar results arise, with error going as $O(1/\varepsilon^2)$, and scaling as $\|A\|_F^2 - \|\bm{A}_d\|_2^2$. Additionally we have observed both theoretically and numerically, that if $s$ is small, error arising from the use of the Rademacher diagonal estimator is less than that arising from the Gaussian diagonal estimator. Conversely, if $s$ is large, there is an inconsequential difference between the two estimators.

 In answer to \ref{Number 2}, we have seen, as per our intuitions, that both eigenvector basis, and eigenvalue spectrum, influence the estimation of positive semi-definite matrices. Generally, the steeper the spectrum, the worse the stochastic diagonal estimator. However, we have introduced a new method of overcoming these worst case scenarios. We have analysed it to find favourable convergence properties, and demonstrated its superiority for estimating the diagonal of general dense matrices with steep spectra, whilst maintaining robustness to flatter spectra.\\

\subsection{Future Work}
There is ample scope to extend the work presented herein, both to applications, and further analysis. 
\begin{enumerate}
    \item We have seen that the error of a given diagonal element is dependent on the rest of its row. How can we ascertain the actual error of our estimates if we don't know the norm of the row of our implicit matrix? Or if we don't have an idea of its spectrum? Is the context of use of stochastic esimation a necessity? We note that we can make tracks on this question since we can estimate $\|A\|_F^2$ using trace estimation of $\tr(A^TA)$, a free byproduct of diagonal estimation.
    \item In our analysis we have perhaps wielded a hammer, rather than a scalpel, in order to arrive at the bounds for $\|\bm{D}_s - \bm{A}_d\|_2$. In particular, union bounding introduces the potentially unfavourable factor of $\ln(n)$. Perhaps future analysis could employ the dependence of element estimates in our favour to remove such a term?
    \item Having noted that Yao et.\ al \cite{yao2020adahessian} use only a single Rademacher estimator in their state-of-the-art second order machine learning optimiser, how much better will their method become with further estimation? Given the possibility of parallelised estimation, will this introduce a rapid new stochastic optimisation technique that outperforms current methods?
\end{enumerate}
These are interesting and exciting open questions, providing the potential investigator with both theoretical and numerical directions in which to head. During this paper, we also considered extending elemental diagonal analysis to shed new light on the original investigations into trace estimation. Whilst initial work suggested itself fruitful, the dependence of each element estimate upon one another confounded further analysis. We readily invite suggestions or insights into how these ideas may be useful in approaching trace estimation.

\bibliographystyle{abbrv}
\bibliography{bibliography.bib}   % name your BibTeX data base

\end{document}